\newif\ifarxiv
\arxivtrue

\ifarxiv
   \documentclass[manuscript,screen,nonacm,natbib=false]{acmart}
  \setcopyright{none}
  \acmDOI{}
  \acmISBN{}
  \acmConference[arXiv]{arXiv preprint}{2026}{}
  \acmYear{2026}
\else
  \documentclass[manuscript, screen, review]{jair}
  \setcopyright{cc}
  \acmDOI{10.1613/jair.1.xxxxx}

  \JAIRAE{Insert JAIR AE Name}
  \JAIRTrack{} 
  \acmVolume{0}
  \acmArticle{0}
  \acmMonth{0}
  \acmYear{2026}
\fi

\ifarxiv
\fi

\RequirePackage[
  datamodel=acmdatamodel,
  style=acmauthoryear,
  backend=biber,
  giveninits=true,
  uniquename=init
  ]{biblatex}

\usepackage{booktabs}
\usepackage{array}
\usepackage{tabularx}
\usepackage{adjustbox}
\usepackage[ruled]{algorithm2e}
\usepackage{xcolor}
\usepackage{balance}
\usepackage{tensor}
\usepackage{enumitem}
\usepackage{arydshln}
\usepackage{mathtools}

\usepackage{comment}
\includecomment{fullversion}

\usepackage{tikz}
\usepackage{placeins}
\usepackage{cleveref}
\usepackage{pgfplots}
\usepackage[bibliography=common]{apxproof}

\usepgfplotslibrary{groupplots}
\pgfplotsset{compat=1.18}

\usetikzlibrary{positioning,calc,shapes.geometric,arrows.meta}
\usepgfplotslibrary{fillbetween}

\SetAlFnt{\small}
\SetAlCapFnt{\small}
\SetAlCapNameFnt{\small}
\SetAlCapHSkip{0pt}
\IncMargin{-\parindent}

\newcommand{\PROP}{\ensuremath{\mathrm{PROP}}}
\newcommand{\PROPc}{\ensuremath{\mathrm{PROP}{c}}}
\newcommand{\PROPx}{\ensuremath{\mathrm{PROP}_{\times}}}

\newcommand{\EF}{\ensuremath{\mathrm{EF}}}

\DeclareMathOperator{\Envy}{Envy}

\newcommand{\EFax}[1]{\mathrm{EF}_{\times #1}}
\newcommand{\Prop}{\mathrm{Prop}}
\newcommand{\M}{\mathcal M}
\newcommand{\PROPax}[1]{\mathrm{PROP}_{\times #1}}

\usetikzlibrary{decorations.pathreplacing,arrows.meta}

\renewcommand{\appendixprelim}[1]{%
  \clearpage
}

\newtheoremrep{theorem}{Theorem}[section]
\newtheoremrep{lemma}[theorem]{Lemma}
\newtheorem{corollary}[theorem]{Corollary}
\newtheoremrep{proposition}[theorem]{Proposition}

\newtheoremrep{claim}[theorem]{Claim}

\theoremstyle{definition}
\newtheoremrep{definition}[theorem]{Definition}
\newtheorem{remark}[theorem]{Remark}

\makeatletter

\let\c@lemma\c@theorem
\let\c@definition\c@theorem
\let\c@proposition\c@theorem
\let\c@remark\c@theorem
\let\c@claim\c@theorem
\let\c@corollary\c@theorem

\makeatother

\makeatletter
\@ifundefined{c@example}{\newtheorem{example}{Example}}{}
\makeatother

\DeclareMathOperator*{\argmax}{arg\,max}

\DeclarePairedDelimiterX\set[1]\lbrace\rbrace{\def\given{\;\delimsize\vert\;}#1}

\makeatletter
\newcommand*{\inlineequation}[2][]{%
  \begingroup
    \refstepcounter{equation}%
    \ifx\\#1\\%
    \else
      \label{#1}%
    \fi
    \relpenalty=10000 %
    \binoppenalty=10000 %
    \ensuremath{%
      #2%
    }%
    \quad\@eqnnum. \space
  \endgroup
}
\makeatother

\newcommand{\dsprop}{\delta}

\newcommand{\EFparam}[1]{\ensuremath{\mathrm{EF}{#1}}}
\newcommand{\EFc}{\EFparam{c}}

\newif\ifshowcomments
\showcommentsfalse

\newcommand{\ind}[1]{\mathbf{1}\!\left\{#1\right\}}
\newcommand{\sdiv}[2]{\frac{#1}{#2+\ind{#2=0}}}

\crefname{claim}{Claim}{Claims}
\Crefname{claim}{Claim}{Claims}
\Crefname{lemma}{Lemma}{Lemmas}

\newcommand{\GenAIDisclosureText}{%
During the preparation of this manuscript, the authors used generative AI
tools, including OpenAI ChatGPT and Google Gemini, in a supporting capacity.
These tools assisted with brainstorming candidate mathematical approaches,
improving the clarity and organization of author-prepared text, and
\LaTeX\ formatting. Their outputs were treated as suggestions and were not
relied upon as authoritative sources or as substitutes for mathematical
verification. The authors made all scientific and editorial decisions,
reviewed and revised all AI-assisted material, verified the mathematical
arguments and cited sources, and take full responsibility for the final
manuscript.%
}

\makeatletter
\newcommand{\GenAIDisclosureInPaper}{}
\if@ACM@anonymous
  \renewcommand{\GenAIDisclosureInPaper}{%
    \section*{Generative AI Usage Disclosure}
    \noindent\GenAIDisclosureText
  }
\fi
\makeatother

\begin{document}

\title[Perpetual Online Fairness]{Perpetual Fully-Online Approximate Fairness}
\ifarxiv
  \subtitle{arXiv preprint}
\fi

\author{Ido Kahana}
\authornote{Corresponding Author.}
\email{ynet.ido@gmail.com}
\affiliation{%
  \institution{Ariel University}
  \city{Ariel}
  \country{Israel}
}

\author{Erel Segal-Halevi}
\email{erelsgl@gmail.com}
\affiliation{%
  \institution{Ariel University}
  \city{Ariel}
  \country{Israel}
}

\author{Noam Hazon}
\email{noamh@ariel.ac.il}
\affiliation{%
  \institution{Ariel University}
  \city{Ariel}
  \country{Israel}
}
\renewcommand{\shortauthors}{Kahana, Segal-Halevi, and Hazon}

\begin{abstract}
Many decision processes run for a long and unknown duration: in each round new requests arrive, an irrevocable choice must be made immediately, and the system is judged by ongoing fairness requirements. 
Examples include food banks allocating donated items as they arrive, computing systems repeatedly scheduling scarce resources across users, and institutions making repeated public decisions (e.g., which proposals or cases to prioritize) while remaining fair over time.

We propose a general approach to such problems based on \emph{deficits}, which measure how far the current outcome is from satisfying each fairness requirement. 
The goal is to keep all deficits small at every time step, without knowing the horizon or the future agent valuations. 
This viewpoint also highlights a natural modeling question for long-running systems: how much of the past should be counted when fairness is evaluated?

We first study the full-history model, where all past rounds count equally. 
Within this model, we propose an efficient fully-online rule. 
For \(n\) agents, we prove anytime guarantees: after any \(t\) rounds, all tracked requirements remain satisfied up to a slack of order \(\tilde O(\sqrt{t/n})\). 
We instantiate the rule for online allocation of indivisible goods, yielding natural relaxations of proportionality and envy-freeness, and for online public decision-making.
We show that the square-root time dependence is unavoidable even for a weak proportionality requirement. We also prove that unrestricted classical envy-freeness up to \(c\) goods is even harder: against adaptive adversaries, its exact worst-case parameter at horizon \(T\) is \(\lceil T/n\rceil\).

We then study discounted-memory fairness, where older deficits carry smaller weight. The same fully-online rule applies to these discounted deficits, and the resulting threshold is controlled by the discount function. 
In particular, the time dependence is never worse than the full-history \(\sqrt t\)-type dependence, and it becomes time-uniform for the standard geometric discount.
Overall, our results show that memory is a central part of perpetual fairness. 
The question is not only which fairness requirement to impose, but also how the system should count past unfairness.
\end{abstract}



\maketitle

\section{Introduction}
\label{sec:intro}

Many real-world systems make decisions repeatedly, under uncertainty, and over a long and unknown duration.
At each time step, new information arrives, an action must be chosen immediately, and the choice is irrevocable: once taken, it becomes part of the history and cannot be changed later.
Meanwhile, the system is expected to remain fair not only at some final time, but throughout the process.

This combination is challenging for two reasons.
First, the decision maker cannot tune the rule to a known horizon, change the past, or wait until the end to repair unfairness.
The guarantee should hold after every prefix of the process.
Second, in a long-running system, fairness is also a question of memory: should every past decision count forever with the same weight, or should old decisions gradually become less important?
This paper studies both issues through a common deficit-based viewpoint.

\subsection{Motivating application: approximately-fair allocation of indivisible items}
\label{sub:propc}

Consider a food-bank that receives short shelf-life food donations, and must allocate them immediately among several organizations, which may have different preferences over the donations.
Because storage is limited (or forbidden by health regulations), the allocation must be made upon arrival \citep{aleksandrov2020online,aleksandrov2015online}.
The cumulative allocation at each time step should be \emph{fair}. 
As another example, consider a cloud-computing controller that repeatedly assigns CPU/GPU time, memory, or bandwidth to jobs as they arrive.
These assignments must be made at high frequency, and still satisfy some pre-specified fairness requirements.
In such systems a ``round'' can correspond to a millisecond-scale scheduling decision,
so it is natural to evaluate fairness across thousands (or millions) of rounds \citep{verma2015borg}.

Common fairness measures are based on approximations of \emph{proportionality} and \emph{envy-freeness}.
Proportionality asks that each agent receive subjective value at least \(1/n\) of the total value, where \(n\) is the number of agents.
Envy-freeness asks that each agent value her own allocation at least as much as any other agent's allocation.
Since exact proportionality or envy-freeness may be impossible even for a single indivisible item, these notions are usually relaxed.
The relaxation \(\PROPc\) requires that each agent's proportionality deficit can be covered by the value of at most \(c\) items allocated to others \citep{conitzer2017fair,segal2019democratic}.
Analogously, \(\EFc\) requires that any envy toward another agent disappears after removing at most \(c\) highly valued items from the other agent's bundle.


In the offline setting, where all items are known in advance, it is easy to attain \(\PROP1\) and \(\EF1\), for example by allocating the items in a round-robin fashion.
The online setting is much more challenging.
The following examples show that even natural online rules can yield very bad approximations to proportionality, already for two agents.

Consider first the round-robin algorithm. 
Suppose every item that comes in an odd round is worth $1$ to both agents, and every item that comes in an even round is worth $\epsilon\ll 1$ to both agents. Then, for any $c$, the allocation violates $\PROPc$ after $4c/(1-\epsilon) \in \Theta(c)$ rounds.

Second, consider a greedy algorithm, that allocates each item so that the smallest utility of an agent is maximized.
Suppose the first item is worth $1$ to both agents, and is given to agent 1.
The following items are worth $1$ to agent 1 and $\epsilon\ll 1$ to agent 2. Then the next $\lceil 1/\epsilon\rceil$ items will be allocated to agent 2.
For any $c$ such that $2c+1<1/\epsilon$, the allocation violates $\PROPc$ by round $2c+3$.

Third, consider a different greedy algorithm, that allocates each item to the agent for whom the \emph{deficit} (the proportional share minus the current utility) is largest. Suppose the item values are as in \Cref{tab:intro-greedy-counterexample}, where $\epsilon\ll 1$.
One can check that, from round $3$ onwards, every item is given to the agent who values it at $\epsilon$, and the deficit of each agent grows by $\frac{1-\epsilon}{2}$ every two rounds.
Therefore, for all $c\geq 1$, the allocation  is not $\PROPc$ after $2+4c/(1-\epsilon) \in \Theta(c)$ rounds. 
These examples show that even locally-sensible rules can accumulate persistent unfairness over time.
The main issue is that unfairness behaves like a deficit: once created, it remains in the history, and if the rule does not manage these deficits carefully, they may grow over time.

\begin{table}[ht]
\small
\centering
\caption{Two-agent instance (\(n=2\)) for the ``deficit-minimizing'' greedy rule.
Circled entries indicate the agent who receives the item in that round.}
\label{tab:intro-greedy-counterexample}
\begin{tabular}{lcccccccc}
\toprule
Round: & 1 & 2 & 3 & 4 & 5 & 6 & $\cdots$ \\
\midrule
Agent 1 item value:
& \tikz[baseline=(x.base)] \node[draw, circle, inner sep=1pt] (x) {1}; & 1 & 1 &
\tikz[baseline=(x.base)] \node[draw, circle, inner sep=1pt] (x) {$\epsilon$}; & 1 &
\tikz[baseline=(x.base)] \node[draw, circle, inner sep=1pt] (x) {$\epsilon$}; & $\cdots$ \\
\addlinespace[0.5ex]
Agent 2 item value:
& 1 & \tikz[baseline=(x.base)] \node[draw, circle, inner sep=1pt] (x) {$\epsilon$}; &
\tikz[baseline=(x.base)] \node[draw, circle, inner sep=1pt] (x) {$\epsilon$}; & 1 &
\tikz[baseline=(x.base)] \node[draw, circle, inner sep=1pt] (x) {$\epsilon$}; & 1 & $\cdots$ \\
\addlinespace[1ex]
\hdashline
\addlinespace[1ex]
Agent 1 bundle value:
& 1 & 1 & 1 & $1+\epsilon$ & $1+\epsilon$ & $1+2\epsilon$  & $\cdots$ \\
Agent 2 bundle value:
& 0 & $\epsilon$ & $2\epsilon$ & $2\epsilon$ & $3\epsilon$ & $3\epsilon$  & $\cdots$ \\
\addlinespace[1ex]
\hdashline
\addlinespace[1ex]
Agent 1 Proportional share:
& $\frac{1}{2}$ & 1 & $\frac{3}{2}$ & $\frac{3+\epsilon}{2}$ & $\frac{4+\epsilon}{2}$ & $\frac{4+2\epsilon}{2}$  & $\cdots$ \\
Agent 2 Proportional share:
& $\frac{1}{2}$ & $\frac{1 + \epsilon}{2}$ & $\frac{1 + 2\epsilon}{2}$ & $\frac{2 + 2\epsilon}{2}$ & $\frac{2 + 3\epsilon}{2}$ & $\frac{3 + 3\epsilon}{2}$  & $\cdots$\\
\addlinespace[1ex]
\hdashline
\addlinespace[1ex]
Agent 1 Deficit (\(Prop-value\)):
& -$\frac{1}{2}$ & 0 & $\frac{1}{2}$ & $\frac{1 - \epsilon}{2}$ & $\frac{2 - \epsilon}{2}$ & $\frac{2 - 2\epsilon}{2}$  & $\cdots$\\
Agent 2 Deficit (\(Prop-value\)):
& $\frac{1}{2}$ & $\frac{1 - \epsilon}{2}$ & $\frac{1 - 2\epsilon}{2}$ & $\frac{2 - 2\epsilon}{2}$ & $\frac{2 - 3\epsilon}{2}$ & $\frac{3 - 3\epsilon}{2}$  & $\cdots$\\
\bottomrule
\end{tabular}
\end{table}

The previous examples also clarify why final-time guarantees are not enough for our setting.

\citet{benade2018make} presented the first online algorithm that ``makes the average envy vanish over time.'' Under the normalization that every item value lies in \([0,1]\), their algorithm guarantees that the maximum cardinal envy is \(O(\sqrt{T})\). We refer to this bounded cardinal-envy guarantee as \(\mathrm{bEF}(c)\); it is different from classical \(\EFc\), where \(c\) counts the number of goods that may be removed from the envied bundle. Their algorithm has two shortcomings for our setting.
(1) Its guarantee is tuned to a specific horizon: the algorithm must know the total number of rounds \(T\) in advance, and fairness is guaranteed only at time \(T\). (2) It requires a known global scale, namely that every item value is at most \(1\).

Our goal is to develop an algorithm that is \emph{fully online} in that it requires no assumption about the horizon or about future inputs (see Definition \ref{def:fully-online} for details).%
\footnote{
\citet{chandak2024proportional} use a similar term in the context of online voting. They call an algorithm \emph{semi-online} if it requires information about the horizon, but not about future inputs.
}
Moreover, we aim for an algorithm that guarantees approximate fairness \emph{perpetually}, after each individual step.
In \Cref{app:benade-intermediate-linear} we consider several natural ways to extend the algorithm of \citet{benade2018make} to a fully online algorithm, and explain why they do not work.


The central idea in this paper is to track \emph{deficits}.
A deficit measures how far a particular fairness requirement is from being satisfied.
The algorithm then chooses actions that control the whole deficit vector over time.
This gives a common language for several fairness goals: proportionality-style requirements, envy-style requirements, and public-decision requirements can all be expressed as deficits that should be kept small.

\Cref{fig:intro-greedy-drift} illustrates this viewpoint: local choices may create drift, while controlling the deficit vector keeps all deficits below the threshold.

\begin{figure}[ht]
\centering
\begin{tikzpicture}[
  x=.058\linewidth,
  y=.62cm,
  >=Latex,
  font=\scriptsize,
  panel/.style={draw=black!10, fill=black!1, rounded corners=6pt, line width=.45pt},
  box/.style={draw=black!18, fill=white, rounded corners=4pt, line width=.45pt},
  greedybox/.style={draw=orange!75!black, fill=orange!10, rounded corners=3pt, line width=.75pt},
  altbox/.style={draw=green!45!black, fill=green!6, rounded corners=3pt, line width=.55pt},
  token/.style={draw=purple!60!black, fill=purple!4, circle, minimum size=.50cm, inner sep=0pt},
  arr/.style={-{Latex[length=1.8mm]}, draw=black!45, line width=.7pt},
  axis/.style={-{Latex[length=1.8mm]}, draw=purple!65!black, line width=.7pt},
  ct/.style={draw=purple!65!black, dashed, line width=.65pt},
  greedy/.style={draw=orange!75!black, line width=1.15pt},
  pot/.style={draw=green!38!black, line width=1.15pt}
]

\draw[panel] (0,0) rectangle (6.55,4.4);

\foreach \x/\h in {.75/.15,1.95/.30,3.15/.45,4.35/.60,5.55/.75}{
  \draw[box] (\x-.42,2.35) rectangle +( .84,1.18);
  \draw[greedybox] (\x-.31,3.05) rectangle +( .62,.27);
  \draw[altbox]    (\x-.31,2.58) rectangle +( .62,.27);

  \fill[orange!70!black] (\x-.18,3.185) circle (.04);
  \fill[green!45!black] (\x-.21,2.61) -- +( .07,.11) -- +( .14,0) -- cycle;

  \draw[orange!55!black, dashed, line width=.45pt] (\x,2.35) -- (\x,1.52);
  \node[token] at (\x,1.08) {};
  \draw[fill=purple!35, draw=purple!60!black, line width=.25pt] (\x-.14,.92) rectangle +( .06,\h);
  \draw[fill=purple!35, draw=purple!60!black, line width=.25pt] (\x-.02,.92) rectangle +( .06,\h+.11);
  \draw[fill=purple!35, draw=purple!60!black, line width=.25pt] (\x+.10,.92) rectangle +( .06,\h+.22);
  \node[text=purple!70!black] at (\x+.25,1.28) {$+$};
}

\node[text=orange!75!black] at (.75,3.95) {greedy};
\draw[-{Latex[length=1.5mm]}, draw=orange!70!black, line width=.55pt] (.75,3.78) -- (.75,3.34);

\foreach \a/\b in {.75/1.95,1.95/3.15,3.15/4.35,4.35/5.55}{
  \draw[-{Latex[length=1.4mm]}, draw=purple!50!black, line width=.55pt] (\a+.32,1.08) -- (\b-.32,1.08);
}
\node[text=black!60] at (6.10,3.27) {$\cdots$};
\node[text=black!60] at (6.10,1.08) {$\cdots$};

\draw[arr, line width=.9pt] (6.82,2.20) -- (7.55,2.20);

\draw[panel] (7.85,0) rectangle (16.55,4.4);
\fill[green!4] (8.55,.65) rectangle (15.95,2.55);
\fill[orange!4] (8.55,2.55) rectangle (15.95,4.05);

\draw[axis] (8.55,.65) -- (16.10,.65);
\draw[axis] (8.55,.65) -- (8.55,4.15);

\draw[ct] (8.55,2.55) -- (15.95,2.55);
\node[text=purple!70!black, anchor=south west] at (8.70,2.57) {$c_t$};

\draw[greedy]
  (8.55,.72) -- (9.05,1.02) -- (9.55,1.18) -- (10.05,1.48)
  -- (10.55,1.68) -- (11.05,1.93) -- (11.55,2.15)
  -- (12.05,2.42) -- (12.55,2.70) -- (13.05,2.95)
  -- (13.55,3.18) -- (14.05,3.47) -- (14.55,3.72) -- (15.10,3.95);
\node[text=orange!75!black, anchor=west] at (14.05,3.85) {greedy};

\draw[pot]
  (8.55,.72) -- (9.05,.96) -- (9.55,.82) -- (10.05,1.12)
  -- (10.55,.98) -- (11.05,1.28) -- (11.55,1.10)
  -- (12.05,1.40) -- (12.55,1.20) -- (13.05,1.53)
  -- (13.55,1.32) -- (14.05,1.62) -- (14.55,1.40) -- (15.10,1.72);
\node[text=green!38!black, anchor=west] at (13.20,1.25) {controlled};

\draw[draw=green!38!black, fill=green!7, line width=.5pt]
  (15.55,1.72) -- (15.78,1.82) -- (16.01,1.72)
  -- (15.97,1.42) -- (15.78,1.24) -- (15.59,1.42) -- cycle;
\draw[green!38!black, line width=.6pt] (15.67,1.55) -- (15.75,1.46) -- (15.90,1.67);

\draw[draw=orange!75!black, fill=orange!8, line width=.5pt]
  (15.47,3.95) -- (15.70,4.05) -- (15.93,3.95)
  -- (15.89,3.65) -- (15.70,3.47) -- (15.51,3.65) -- cycle;
\node[text=orange!75!black] at (15.70,3.79) {$!$};

\end{tikzpicture}
\caption{Local choices can create persistent drift; controlling the deficit vector keeps all deficits below the threshold.}
\Description{A schematic figure. Repeated local choices add small positive deficits, producing drift that crosses a threshold, while deficit control keeps all deficits below the threshold.}
\label{fig:intro-greedy-drift}
\end{figure}

This viewpoint also makes the role of memory explicit.
In the full-history model, all past rounds count equally.
This is the strongest and most demanding interpretation of perpetual fairness, and we first study what can be guaranteed in this model.
We prove that a simple fully-online potential rule keeps all deficits small at every prefix, with a slack that grows on the order of \(\sqrt{t/n}\) up to logarithmic factors.
We also prove that this growth is unavoidable in general: if the whole history counts forever, no online algorithm can keep the fairness slack smaller in the worst case.

This impossibility result in the full-history model motivates a second question: what happens when fairness is evaluated with limited memory?
A hard window is the most direct approach, but we show that it can be misleading: every fixed window may look fair while the full-history deficit grows linearly.
We therefore also study a smoother model, which is very common in economics \citep{frederick2002time}, in which older rounds gradually receive less weight rather than disappearing abruptly.
The resulting guarantee is similar to the full-memory guarantee, except that the dependence on the time $t$ is replaced by dependence on a function \(\M_\gamma(t)\), which depends on the memory-function $\gamma$, and measures how much past history is effectively retained by the fairness test. With geometric discounting (the standard time-discount model in economics), \(\M_\gamma(t)\) is upper-bounded by a constant for all $t$, which yields a fairness guarantee that does not grow with time.

Although item allocation is our main motivating example, the framework is more general.
For instance, consider a public institution that repeatedly decides which proposals, projects, or cases to prioritize.
A supreme court may repeatedly decide which petitions to hear from a large pool of filings; only a small fraction receive full review and oral argument, and denials are effectively final.
Even so, the long-run sequence of agenda choices is expected to remain fair over time across constituencies, such as regions, litigant types, or issue areas \citep{supremeCourtAtWork}.
This is another setting with irrevocable decisions, an unknown horizon, and ongoing fairness requirements.



\subsection{Contributions}
\paragraph{A general framework: deficits (\Cref{sec:definition}).}
We consider a decision-maker who, in each round, chooses an action from a feasible set, determined by 
the current input.
The outcome is evaluated by a family of \emph{tracked requirements}.
For each tracked requirement, 
we track a \emph{deficit value}.
A deficit is simply the amount by which the corresponding fairness requirement
is currently behind. Thus different fairness requirements can be treated in the
same way: each one becomes a deficit that we try to keep small.
The goal is to ensure that all deficits  remain below a time-dependent threshold.
Formally, given a threshold $c_t>0$, call a tracked requirement \emph{$c_t$-violated} if its deficit exceeds $c_t$. An outcome is \emph{$c_t$-fair} if no tracked requirement is $c_t$-violated.
The decision-maker wants to ensure that the outcome is $c_t$-fair at every $t\geq 0$.

To instantiate our framework for approximately-proportional item allocation, we define $n$ tracked requirements, one per agent. The deficit of $i$ is the difference between $i$'s proportional share and $i$'s actual value, divided by the largest item value seen by $i$ so far. If all deficits are at most $c$, then the allocation satisfies (a scale based relaxation of) PROPc. For approximately-envy-free item allocation, we define $n(n-1)$ tracked requirements, one for each ordered pair of agents, and proceed analogously. See \Cref{sec:applications} for details.

\begin{fullversion}
\paragraph{Why perpetual guarantees are hard.}
In offline problems, or in online problems where the horizon $T$ is known, algorithms can sometimes be tuned to $T$ and analyzed only at the final time.
In perpetual systems this is not enough:
we want a single process that runs forever and maintains guarantees at \emph{every prefix}.
Moreover, greedy ``fix the worst tracked requirement right now'' instincts can behave poorly over long durations:
even if each individual step looks reasonable, small systematic imbalances can accumulate, and the slack needed to keep all tracked requirements satisfied may grow linearly with time (see \Cref{sub:propc}).
This leads to a basic question that applies across domains:
\emph{How fast must the threshold $c_{t}$ grow with time?}
\end{fullversion}


\paragraph{A generic perpetual online decision rule (\Cref{sec:algorithm}).}

We present a simple greedy rule that chooses an action at every round.
The rule is fully online; it does not need to know the horizon or any other information about the future, and it does not change past decisions. It uses only the information revealed in the current round, and requires a linear number of deficit computations.

The rule is based on a structural property that shows up repeatedly in online fairness problems.
In each round, there are typically \(n\) natural reference
actions: for example, in allocation, giving the item to each one of the \(n\)
agents; and in public decision-making, choosing each agent's favorite feasible
outcome.

The key requirement is that this menu is locally balanced. For each tracked requirement, some reference actions may increase its deficit, but other reference
actions repair it, and the total change over the whole menu is not positive. We
also require that the one-round changes are not too large.
We formalize these conditions in \Cref{sec:definition}.
After writing the domain in deficit form, these conditions are usually easy to check.
Under these conditions, we prove that at every time $t$, all deficits are at most $c_{t}$, where the threshold $c_t$ grows asymptotically as the square-root of time:
\[
c_{t} \;\in \; O\!\left(\sigma\sqrt{\frac{t\,\log m}{n}}\right),
\]
where $\sigma$ is a problem-specific constant (bounding the squared one-round changes), $m$ is the number of tracked requirements, and $n$ is the number of reference actions
(see \Cref{cor:ct-feasible-general} for the precise expression).

\paragraph{Applications (\Cref{sec:applications})}
To demonstrate the flexibility of our framework, we instantiate it for several common online fairness goals.

\paragraph{1. Approximately-proportional online item allocation (private goods; motivating domain).}
Items arrive one by one and must be allocated immediately and irrevocably to one of $n$ agents. 
Each agent $i$ has an additive valuation function denoted by $v_i$. Denote by $G^t$ the set of items that have arrived up to and including time $t$, and by $P_i^t$ the subset of these items allocated to $i$. The proportional share of agent $i$ is defined as $v_i(G^t)/n$. 
A common fairness goal is
\emph{$\PROPc$ (``proportional up to $c$ items'')}, which means: at any time, an agent may be below her proportional share, but the deficit can be covered by the value of at most $c$ items. Formally, $v_i(G^t)/n - v_i(P_i^t)$ is at most the highest value of any subset of $c$ items in $G^t\setminus P_i^t$.

We obtain the following scale-based relaxation, which we call 
\emph{$\PROPax{c}$}. It means that $v_i(G^t)/n - v_i(P_i^t)$ is at most $c$ times the highest value of any item in $G^t\setminus P_i^t$. Using our framework, we present a fully online algorithm that keeps the allocation $\PROPax{c_{t}}$ at any time $t$, where $c_{t}\in O\left(\sqrt{\frac{t\log n}{n}}\right)$.
The runtime of the algorithm is $O(n)$ per round (see \Cref{subsec:ex-propxc} for the precise statement).

\paragraph{2. Approximately-proportional public decision-making.}
In this setting, in each round $t=1,2,\ldots$, a decision maker chooses one feasible outcome from a finite nonempty set $C$
(e.g., which project to fund or which policy to implement).
Each agent $i$ has a (possibly time-varying) valuation function $v_i^{t}:C\to\mathbb{R}_{\ge 0}$, and utilities add across rounds.
That is, if $o^{t}\in C$ is the chosen outcome in round $t$, then agent $i$'s cumulative utility is
$u_i^t:=\sum_{r=1}^t v_i^{r}(o_r)$.
Following \citet{conitzer2017fair}, we measure proportionality against what an agent could have obtained by always selecting her favorite outcome in each round.
Let $M_i^{r}:=\max_{o\in C} v_i^{r}(o)$ and define the proportional share
\(
\Prop_i^t \ :=\ \frac{1}{n}\sum_{r=1}^t M_i^{r}.
\)
The notion $\mathrm{PROP}1$ in this setting requires
$u_i^t \ge \Prop_i^t-\max_{r\le t} M_i^{r}$, i.e., an agent's deficit can be covered by the value of her single highest-stakes round so far.
We extend this notion to \emph{$\PROPax{c}$}.
Let $V_i^t:=\max_{r\le t} M_i^{r}$ be agent $i$'s largest attainable single-round value so far, and require
\(
\Prop_i^t-u_i^t \ \le\ c\,V_i^t
\)
for all $i\in[n]$.
Using our framework, we present a fully online algorithm that keeps the outcome $\PROPax{c_{t}}$ at any time $t$,
where $c_{t}\in O\!\left(\sqrt{\frac{t\log n}{n}}\right)$. The
runtime of the algorithm is $O(n|C|)$ per round (see \Cref{subsec:ex-pdm}).

\paragraph{3. Approximately envy-free online item allocation.}
In the same private goods allocation setting, another common fairness goal is 
\emph{$\EFc$ (``envy-free up to $c$ items'')}, which means
that, for any two agents $i,j$, the envy $v_i(P_j^t) - v_i(P_i^t)$ is at most the highest value for $i$ of any subset of $c$ items in $P_j^t$.
Analogously, we define the scale-based relaxation \emph{$\EFax{c}$}, which means that agent $i$'s envy toward agent $j$ is at most $c$ times the highest value of an item given to $j$.
Using our framework, we present a fully online algorithm that keeps the allocation $\EFax{c_{t}}$ at any time $t$, where
$c_{t} \in O\!\left(\sqrt{\frac{t\log n}{n}}\right)$.
The runtime is $O(n^2)$ per round
(see \Cref{subsec:ex-efc}).
Thus, there is a trade-off between a stronger fairness condition ($\EFax{c}$) and a faster per round decision algorithm (for $\PROPax{c}$).

Note that the algorithm of \citet{benade2018make} guarantees that the maximum envy is at most $c$, but only under the assumption that all valuations are bounded in $[0,1]$. We call this fairness notion $\mathrm{bEF}(c)$ (bounded EF($c$)). Similarly, one could define $\mathrm{bPROP}(c)$ as a guarantee that the disproportionality is at most $c$, assuming all valuations are bounded in $[0,1]$. Our $\EFax{c}$ notion is stronger than $\mathrm{bEF}(c)$, as the envy-bound between $i$ and $j$ is based on the highest-valued item in $j$'s bundle, rather than the highest-valued item overall. Similarly, $\PROPax{c}$ is stronger than \(\mathrm{bPROP}(c)\). Moreover, our algorithm does not need to know the largest item value in advance.

\paragraph{4. Classical $\EFc$.}
Without restricting the possible item values in advance, classical \(\EFc\) admits no
sublinear online guarantee against an adaptive adversary. We prove that, for
every horizon \(T\), every online allocation algorithm can be forced to require
\[
c\ge \left\lceil\frac{T}{n}\right\rceil,
\]
even when all item values lie in \((0,1]\). This bound is tight, since round
robin guarantees \(\EFparam{\lceil t/n\rceil}\) after every prefix; see
\Cref{thm:classical-efc-linear-lb}.

As a complementary positive result, suppose that the finite set of possible
item values is known in advance and has size \(L\). Our threshold-based
instantiation is perpetual and guarantees
\[
c_t\in
O\!\left(
\ln(nL)+\sqrt{\frac{t\ln(nL)}{n}}
\right).
\]
Its running time is \(O(n^2L)\) per round. 

\paragraph{A negative result (\Cref{sec:lower_bound})}
To complement our positive results, we present a negative result for 
$\mathrm{bPROP}(c)$, which is weaker than $\PROPax{c}$, $\EFax{c}$, and the condition of \citet{benade2018make} (which we called $\mathrm{bEF}(c)$).
Specifically, for any $n\geq 2$, we present an instance with $n$ agents, where all agents' valuations are bounded in $[0,1]$, 
such that no online allocation algorithm can guarantee maximum deviation from proportionality below
\[
\Omega\!\left(\sqrt{\frac{t}{n}}\right)
\]
at all prefixes up to time \(t\), for all \(t\ge n\) (\Cref{thm:lb-framework}).
This implies a lower bound for \(\PROPax{c}\) and \(\EFax{c}\) (as well as
\(\mathrm{bEF}(c)\)) matching the \(t\)-dependence of our upper bounds and
their \(n\)-dependence up to logarithmic factors.

Thus, the square-root growth in our positive guarantees should not be viewed merely as an artifact of the potential analysis.
Rather, it reflects an inherent limitation of the unrestricted online adversarial model: even substantially weaker proportionality requirements cannot, in general, be maintained with a time-independent fairness threshold.
Note that our negative result is valid also when an upper bound on the item values is given in advance.

\paragraph{Limited-memory fairness (\Cref{sec:limited-memory}).}
Our impossibility result shows that full-history fairness cannot, in general, be
maintained with a time-independent threshold. This motivates fairness notions
that give less weight to old rounds. We first show that hard-window fairness is
not enough: every fixed window may look fair while the full-history
proportionality deficit grows linearly.

We then introduce discounted-memory fairness, where old rounds fade gradually according to a chosen memory function \(\gamma\). Under the corresponding discounted local balance conditions, the same potential rule gives a guarantee that depends on $t$ only through an expression \(\M_\gamma(t)\), induced by this choice of \(\gamma\). Thus different choices of \(\gamma\) interpolate between the full-history \(\sqrt t\)-type dependence in the worst case (\(\gamma\equiv 1\)), 
and time-independent guarantees when \(\gamma(t)\equiv \lambda\) for some $\lambda\in (0,1)$.

Thus, memory is not merely an implementation detail. It changes how fairness is
evaluated over time, and therefore changes what kind of perpetual guarantee is
possible.

\paragraph{Additional appendix result.}
Most technical proofs are delegated to the Appendix.
Additionally, in \Cref{app:optimal-policy}, we present an algorithm for computing the exact number of rounds by which \(\mathrm{bPROP}(c)\) can be maintained, as well as an optimal allocation policy that obtains this bound against an adaptive adversary.
Although the algorithm runs in exponential time, we could run it to completion to obtain an exact result for \(c=1\) and \(n=2\).

\begin{fullversion}
\paragraph{Our techniques (informal).}
We analyze a smooth 
``soft-max'' style potential that aggregates all deficits while remaining stable under small one-step changes. We use the local balance conditions to bound its growth over all prefixes.


Overall, the paper gives a way to (i) express diverse online fairness goals through deficits,
(ii) obtain perpetual guarantees from a single fully-online greedy rule, and
(iii) understand the fundamental limitations of what any online algorithm can achieve under adversarial inputs.
\end{fullversion}

\section{Related Work}
\label{sec:related}
We organize the related work around the main themes of the paper: online fair allocation, public decision-making, memory over time, and online balancing and related welfare-oriented approaches. We close the section with a short summary explaining how these lines motivate our modeling choices and position our results

\subsection{Online fair item allocation and full-history guarantees}
Most works on fair division of indivisible items study the offline setting; see \cite{AMANATIDIS2023103965} for a recent survey.
This literature already shows that, for indivisible goods, exact proportionality is usually too strong, so one must choose a weaker but meaningful fairness notion. For example, \citet{farhadi2019fair} study weighted maximin-share guarantees for agents with unequal entitlements, while \citet{segalHalevi2020fairDD} study proportionality under ordinal preference information with diminishing differences.
Our work asks a different question: in an online process, can we keep the loss from proportionality small after every prefix, even though the allocation decisions are irrevocable?
In recent years, however, there has been a growing interest in online or dynamic fair division; see the survey by \citet{aleksandrov2020online}.

Specifically, \citet{aleksandrov2015online} study online allocation of indivisible items in a restricted setting in which agents report $0/1$-type values (``like'' / ``dislike'') for each arriving item, motivated by applications such as food banks.
They analyze simple online mechanisms and their fairness properties over time.
These works are domain-specific, but they highlight the same basic difficulty that motivates our paper: decisions are irrevocable, inputs arrive over a long and unknown duration, and fairness must be monitored repeatedly rather than only at a single terminal time.

\citet{benade2018make} and also \citet{benade2023online} study online fair division with general additive valuations under a global normalization assumption (item values in $[0,1]$).
They present deterministic potential-based algorithms that guarantee sublinear terminal envy, in particular $\tilde O(\sqrt{T/n})$-type bounds at a horizon $T$.
However, their analysis is primarily \emph{final-time}: guarantees are proved for round $T$, whereas we target \emph{anytime} guarantees that hold after every prefix.
For completeness, \Cref{app:benade-intermediate-linear} gives a simple two-agent instance in which their deterministic rule can have \emph{linear} envy at time $t=\lfloor\sqrt{T}\rfloor$, so the final-time guarantee does not directly imply an $O(\sqrt{t})$-type anytime bound.%
\footnote{
	Their deterministic algorithm needs to know the number of rounds $T$ in advance, and is therefore not fully-online in the terminology used above.
	As usual, one can remove explicit horizon dependence by restarting the algorithm on a geometrically increasing sequence of horizons ($T=1,2,4,8,\ldots$), at the cost of constant-factor losses.
	However, this still does not directly give prefix-wise guarantees at \emph{all} intermediate times.
	Our rule is fully online by construction: it does not need to know or change $T$, and it maintains its bounds at every time step.}
Moreover, our solution does not need to know the maximum item value in advance, and it is explicitly scale-aware: the normalizing scale to a deficit may evolve over time (e.g., via running maxima), which is essential in perpetual settings.
On the lower-bound side, \citet{benade2018make} prove impossibility results for online envy minimization, showing that envy must grow polynomially with the horizon in worst-case instances. Our lower bound for full-history fairness (Theorem~\ref{thm:lb-framework}) implies an asymptotically stronger lower bound. Moreover, our bound holds even for $\PROPax{c}$, which is weaker than $\PROPc$, and much weaker than $\EFc$ for three or more agents. This supports the view that growing slack is the inevitable price of taking the full history into account.

Recently, \citet{yang2024greedy} analyze greedy-based online allocation with indivisible items under adversarial valuations. They show that an algorithm called PACE, originally developed for online market equilibrium, coincides with an integral greedy rule and, under mild bounded-ratio assumptions on agents' nonzero values, prove horizon-independent (as $T\to\infty$) bounds on multiplicative envy and Nash-welfare competitiveness (a ``best-of-many-worlds'' guarantee for PACE across stochastic, nonstationary, and restricted adversarial inputs). Their guarantees are asymptotic/final-horizon, whereas we seek prefix-wise guarantees that hold at every time step.


\citet{neoh2025additional} and \citet{choo2025approximate} study how additional information or predictions can improve online item-allocation fairness.
\citet{neoh2025additional} consider several forms of information about future goods, while \citet{choo2025approximate} focus on approximate proportionality and show, among other things, that natural greedy rules can fail for \(PROP1\)-type guarantees.
Relatedly, \citet{melissourgos2025onlineefx} study online \(\mathrm{EFX}\) allocations with predictions.
While predictions are very useful, in some cases they are not available; we develop algorithms that attain the best fairness possible when predictions are not available. Additionally, we target \emph{perpetual (Anytime)} additive guarantees rather than final-time guarantees.

Several recent works obtain stronger guarantees by imposing additional structure on the valuation domain or by studying related online models.
Particularly relevant to our classical \(\EFc\) instantiation (application 4), \citet{amanatidis2025personalized} study personalized two-value instances, while \citet{wang2026binary} study binary valuations and related restricted domains.
These works are complementary to ours: they exploit structured valuation classes to obtain stronger guarantees. Our positive result for classical \(\EFc\) instead assumes a finite value set known in advance, while \Cref{thm:classical-efc-linear-lb} shows that, without any restriction specified in advance on the possible item values, the exact worst-case parameter at horizon \(T\) is \(\lceil T/n\rceil\).

Other recent papers address online fair division under different fairness notions or arrival models.
\citet{kulkarni2025expost} study ex-post MMS guarantees when agents arrive online.
\citet{zhou2023mms} study online MMS allocation for indivisible goods and chores, and \citet{song2025chores} focus specifically on online MMS allocation for chores.
\citet{kulkarni2025subsidy} study online envy-freeable allocations with subsidies, asking when envy-freeness can be maintained online and how much subsidy is needed.
These works are not directly comparable to our prefix-wise \(\PROPax{c}\) and \(\EFax{c}\) guarantees, but they further illustrate the recent interest in fair allocation under irrevocable online decisions.

Several works consider stochastic rather than adversarial input models.
\citet{jiang2019online} study online fair division for two agents under stochastic arrivals.
\citet{zeng2020fairness} also consider stochastic adversaries and give high-probability fairness/efficiency guarantees.
\citet{gao2021online} study online market equilibrium under stochastic valuations.
Our work addresses the complementary worst-case setting of deterministic guarantees against an adaptive adversary.

A related direction allows \emph{recourse}, meaning retroactive modifications of past allocations, to improve fairness ex post; see \citet{he2019achieving}.
In contrast, our model is irrevocable: once an action is taken it is never changed later. Discounting changes how past unfairness is evaluated, but it does not allow the algorithm to correct past actions.

\subsection{Online fair public decision-making}
Another related model is \emph{public decision-making}.
In this model, each round selects a single outcome, and the chosen outcome may give positive utility to many agents (so it generalizes private-good allocation).
\citet{freeman2017fair} study the online version of public decision-making and analyze greedy mechanisms with welfare objectives.
\citet{kahana2023leximin} study fairness notions and impossibility results for online public decision-making.
These settings fit naturally into our framework viewpoint, and we include public decision-making as one of our concrete instantiations.

Fairness in public decisions is closely related to proportional representation in voting, where the goal is to represent voters fairly; see, e.g., \citet{betzler2013fullyProportional}.
Long-run collective decision models study a similar concern over a sequence of decisions: these include \emph{perpetual voting} and proportionality notions over an unbounded horizon; see, e.g., \citet{Lackner2020PerpetualVoting,Lackner2023PerpetualVotingProportionalDecisions}.
Public decision-making is also closely related to online allocation of public goods; for example, \citet{banerjee2023proportionally} study proportionally fair online allocation of public goods with predictions.

\subsection{Temporal, repeated, and limited fairness}
A closely related but different line of work is \emph{temporal fair division} \cite{cookson2025temporal,elkind2025temporal}.
Similar to our setting, the goal is to guarantee fairness repeatedly over time.
However, their information and decision models differ from ours: some of these models allow more future information or more planning within temporal blocks, whereas we require fully online, irrevocable decisions with worst-case inputs. Our limited-memory model changes how past rounds are counted, but does not give the algorithm future information or allow it to revise past decisions.

There is also growing work on \emph{repeated} allocation, where similar or identical resources recur and fairness is measured across repetitions; examples include repeated fair allocation of indivisible items \cite{igarashi2024repeated}, repeated matching \cite{caragiannis2023repeatedly}, and repeated house allocation \cite{micheel2024fairness}.
These frameworks differ from ours in that they usually optimize fairness over cycles or repetitions, often with structural assumptions on the repetition pattern, rather than providing worst-case, anytime guarantees for an arbitrarily long and adversarial sequence.
In contrast, our memory question concerns how the fairness requirement counts the past: we compare full-history fairness, hard-window fairness, and discounted-memory fairness under the same irrevocable online decision model.

The closest notion we found to our ``discounted past'' notion in the literature on online decision-making is the ``no dry spells'' axiom of \citet{Lackner2020PerpetualVoting}, which requires that no voter is kept unsatisfied for long periods of time.

Relatedly, \citet{chen2025banditsfairnessconstraints} study an online learning problem in which a learner repeatedly selects a set of options while maintaining fairness constraints over time; our setting is not a learning problem and instead controls fairness deficits after every prefix.

\subsection{Online vector balancing}

A closely related mathematical line studies \emph{online discrepancy} and \emph{online vector balancing}, where an algorithm makes irrevocable online assignments in order to keep many coordinates simultaneously balanced over time.
This line is closely connected to online envy minimization, especially when one seeks guarantees that hold at every prefix.
For example, \citet{bansal2020vectorbalancing} and \citet{bansal2021online} develop online discrepancy/vector-balancing techniques under stochastic/oblivious arrival models, and \citet{kulkarni2024optimaldiscrepancy} give optimal bounds for online discrepancy minimization in their setting.
More recently, \citet{halpern2025envydiscrepancy} make the relationship between online envy minimization (for $n=2$ agents) and multicolor discrepancy explicit and study equivalences and separations across adversary models.
These works typically focus on a single global balancing objective under a fixed global scale, whereas our deficit framework is designed to simultaneously track many heterogeneous tracked requirements whose natural scales can evolve over time, and to maintain guarantees prefix-wise at every time step.

\subsection{Online welfare optimization}

Our results are complementary to welfare- and equilibrium-oriented approaches to online allocation.
For example, recent work studies online Nash welfare maximization computation, sometimes with additional structural assumptions or prediction access \citep{banerjee2022online,huang2023online}.
These lines primarily aim to obtain a good competitive ratio to a single optimization objective, whereas our framework aims to  simultaneously control many tracked requirements over time, including scale-dependent relaxations such as  \(\PROPc\) and \(\EFc\).

\subsection{Comparison summary}
Table~\ref{tab:related-work} is not meant to list every paper discussed above. It uses representative examples to summarize the modeling choices that motivate the model in the next section.

\begin{table}[ht]
\caption{Comparison by modeling axis. The table gives representative examples rather than an exhaustive list of all cited papers. Its purpose is to show the modeling choices that lead to our framework.
}
\label{tab:related-work}
\Description{A compact related-work table organized by modeling axis. The table uses representative examples and is not intended to list all cited papers.}
\centering
\fontsize{6.6pt}{7.6pt}\selectfont
\setlength{\tabcolsep}{2.1pt}
\renewcommand{\arraystretch}{1.02}
\begin{tabularx}{\textwidth}{@{}
  >{\raggedright\arraybackslash}p{0.17\textwidth}
  >{\raggedright\arraybackslash}p{0.30\textwidth}
  >{\raggedright\arraybackslash}p{0.22\textwidth}
  >{\raggedright\arraybackslash}X
@{}}
\toprule
\textbf{Axis}
&
\textbf{Nearby work, examples}
&
\textbf{Usual choice}
&
\textbf{Role here}
\\
\midrule

Offline vs.\ online
&
Offline fair division
\citep{AMANATIDIS2023103965,farhadi2019fair,segalHalevi2020fairDD}
&
Full instance known
&
Fairness notions become online deficits.
\\
\addlinespace[1.5pt]

Terminal vs.\ prefix-wise
&
Online fair allocation
\citep{aleksandrov2020online,aleksandrov2015online,benade2018make,benade2023online}
&
Guarantee is checked at end.
&
Guarantee is checked after every prefix.
\\
\addlinespace[1.5pt]

Horizon knowledge
&
Horizon-based online bounds
\citep{benade2018make,benade2023online}
&
End time \(T\) known in advance.
&
No need to know when the process ends.
\\
\addlinespace[1.5pt]

Future information
&
Prediction-assisted models
\citep{neoh2025additional,choo2025approximate,melissourgos2025onlineefx,banerjee2023proportionally}
&
Predictions, extra input, normalization.
&
No future input is used.
\\
\addlinespace[1.5pt]

Input assumptions
&
Structured or stochastic models
\citep{amanatidis2025personalized,wang2026binary,jiang2019online,zeng2020fairness,gao2021online,chen2025banditsfairnessconstraints}
&
Restricted or random input
&
Adversarial inputs, controlled by local balance.
\\
\addlinespace[1.5pt]

Past decisions
&
Recourse, repair, subsidies
\citep{he2019achieving,kulkarni2025expost,kulkarni2025subsidy}
&
Can change or compensate later.
&
Actions are irrevocable.
\\
\addlinespace[1.5pt]

Fairness language
&
MMS, chores, envy, and variants
\citep{zhou2023mms,song2025chores}
&
Notion-specific.
&
General ``tracked requirements'' framework.
\\
\addlinespace[1.5pt]

Public decisions
&
Online public choice and perpetual voting
\citep{freeman2017fair,kahana2023leximin,betzler2013fullyProportional,Lackner2020PerpetualVoting,Lackner2023PerpetualVotingProportionalDecisions}
&
Domain-specific representation.
&
Public and private settings use one rule.
\\
\addlinespace[1.5pt]

Memory
&
Temporal and repeated fairness
\citep{cookson2025temporal,elkind2025temporal,igarashi2024repeated,caragiannis2023repeatedly,micheel2024fairness}
&
Temporal structure.
&
Full history, hard windows, or discounted memory.
\\
\addlinespace[1.5pt]

Scale
&
Vector balancing and discrepancy
\citep{bansal2020vectorbalancing,bansal2021online,kulkarni2024optimaldiscrepancy,halpern2025envydiscrepancy}
&
Fixed-scale imbalance.
&
Deficit scales may grow with the prefix.
\\
\addlinespace[1.5pt]

Objective
&
Welfare and equilibrium
\citep{banerjee2022online,huang2023online,yang2024greedy}
&
One aggregate objective
&
We control many deficits simultaneously.
\\

\bottomrule
\end{tabularx}
\end{table}

This comparison motivates the model below: at each round an online action is chosen, this action updates the tracked deficits, and the goal is to keep all deficits controlled after every prefix.

\section{Model}
\label{sec:definition}

This section defines the full-history deficit model used in the main results.
Limited-memory variants are introduced later in \Cref{sec:limited-memory}.

\subsection{Actions, histories, and deficits}
\label{subsec:norm-def-framework}

A decision maker takes an \emph{action} at each time step
$t = 0,1,2,\ldots$. We use the convention that at time step $t{+}1$ (for $t\ge 0$)
the decision maker observes the realized history up to time $t$, denoted $H^t$,
and then chooses an action from a nonempty feasible set
$A^{t+1}$, which may, in general, depend on the history $H^t$ (though we do not use this generality in our applications).
The chosen action is denoted $a^{t+1}\in A^{t+1}$.
The decision is \emph{irrevocable}: once $a^{t+1}$ is chosen, it becomes part of the
history and is never changed later. 
Formally, $H^{t+1}$ is obtained from $H^t$
by appending the round-$t{+}1$ input and the action $a^{t+1}$.

The objective of the decision maker is measured by a finite set of \emph{tracked requirements}, denoted by $Q$.
Throughout the paper, let $m:=|Q|$, and assume $m\ge 2$.
The \emph{deficit} of a tracked requirement $q\in Q$ after time step $t$ (equivalently,
at the beginning of time step $t{+}1$) is denoted by $z_q^t \in\mathbb{R}_{\ge 0}$.
We assume that all deficits are initially zero:
\[
z_q^0:=0
\qquad\text{for every }q\in Q.
\]
In the full-history model, these deficits may depend on the entire realized history \(H^t\).

At time step $t{+}1$ the decision maker receives an input that allows him to compute the \emph{hypothetical post-round} deficits $z_q^{t+1}(a)$ for every potential action $a\in A^{t+1}$.
Thus, we can assume that the \emph{input in time step $t{+}1$} is presented as a set $A^{t+1}$ of possible actions, and a family of functions
\[
z_{q}^{t+1}(\cdot): A^{t+1} \to \mathbb{R}_{\ge 0}
\qquad\text{for all } q\in Q.
\]
where $z_q^{t+1}(a)$ is the deficit of tracked requirement $q$ that would result from choosing
action $a$ at time step $t{+}1$. We call these functions the \emph{deficit functions}.
In other words, before choosing the action, the rule can evaluate what the next
deficit vector would be for each feasible action. This is the information used by
the online rule below.

\begin{definition}[Fully online rules]
\label{def:fully-online}
The rule \(\mathcal R\) is \emph{fully online} if, for every \(t\ge0\) and
every two sequences of inputs, possibly of different lengths, each containing
at least \(t+1\) rounds, if the sequences are identical up to and including
time \(t+1\), then the action taken by \(\mathcal R\) at time \(t+1\) is the
same for both sequences. For a randomized rule, ``the action is the same'' is
replaced by ``the conditional distribution over actions, given the realized
history and current input, is the same.''
\end{definition}
That is, a fully online rule may use the current and past inputs, but not the
horizon or any future input.

After $a^{t+1}$ is chosen, the realized post-round deficits are
\[
z_{q}^{t+1}:=z_{q}^{t+1}(a^{t+1})
\qquad\text{for all } q\in Q.
\]
The objective of the decision maker is to keep all deficits small. Formally:
\begin{definition}
For $c\ge 0$ and time $t$, define the set of \emph{$c$-violated requirements} as the set of tracked requirements with deficit larger than $c$:
\begin{equation}
\label{eq:Dtc-general}
\mathcal{D}^{t}(c)\ :=\ \{\ {q}\in Q\ :\ z_{q}^{t}>c\ \}.
\end{equation}
We say the outcome at time $t$ is \emph{$c$-fair} if no tracked requirement is $c$-violated, that is, $\mathcal{D}^{t}(c)=\emptyset$.
\end{definition}

\subsection{Local balance conditions}
\label{subsec:moments}
    Our analysis relies on a local balance condition. This condition is the only model-specific property that must be checked in each application: once it holds, the same potential-rule analysis applies unchanged. Informally, in each round there should be enough reference actions to repair, on average, the possible increase in each deficit.

For any real $x$, denote $[x]_+ := \max\{x,0\}$ and $[x]_- := \max\{-x,0\} = [-x]_+$.

\begin{definition}[Local balance conditions]
\label{def:first-moment}
\label{def:second-moment}
We say that an instance satisfies the \emph{local balance conditions with parameters $n$ and $\sigma^2 \ge 1$} if for every round $t+1\ge 1$
there exist $n$ (not necessarily distinct) \emph{reference actions}
\(
\hat a^{t+1}_{1},\ldots,\hat a^{t+1}_{n}\ \in\ A^{t+1}
\)
such that for every tracked requirement ${q}\in Q$ there exist numbers
\begin{equation}
\label{eq:first-moment-shift}
\Delta_q^{t+1}(1),\ldots,\Delta_q^{t+1}(n) \in \mathbb{R} \quad \text{ such that} \quad z_{q}^{t+1}(\hat a^{t+1}_{k})\ \le\ \bigl[z_{q}^{t}+\Delta_{q}^{t+1}(k)\bigr]_+ \text{, for all }k\in[n].
\end{equation}
The shifts must satisfy the following two local balance inequalities:
\par\noindent
\begin{minipage}[t]{0.4\linewidth}
{\setlength{\abovedisplayskip}{0pt}\setlength{\belowdisplayskip}{0pt}%
 \setlength{\abovedisplayshortskip}{0pt}\setlength{\belowdisplayshortskip}{0pt}%
\begin{equation}
\label{eq:first-moment-mean}
\sum_{k=1}^n \Delta_{q}^{t+1}(k)\ \le\ 0
\quad\text{\emph{(mean bound)}}
\end{equation}}
\end{minipage}\hfill
\begin{minipage}[t]{0.5\linewidth}
{\setlength{\abovedisplayskip}{0pt}\setlength{\belowdisplayskip}{0pt}%
 \setlength{\abovedisplayshortskip}{0pt}\setlength{\belowdisplayshortskip}{0pt}%
\begin{equation}
\label{eq:second-moment-sigma}
\sum_{k=1}^n \bigl(\Delta_{q}^{t+1}(k)\bigr)^2\ \le\ \sigma^2
\quad\text{\emph{(squared-size bound)}}
\end{equation}}
\end{minipage}

\end{definition}

\paragraph{Interpretation: local balance.}
The conditions have a simple meaning. Each tracked requirement \(q\in Q\) has a
fairness deficit. For example, it can be the proportionality deficit of one agent, or
the envy deficit of one agent toward another. In each round \(t+1\), we look at
\(n\) reference actions
\(\hat a^{t+1}_{1},\ldots,\hat a^{t+1}_{n}\). These are natural actions that
serve the current request in \(n\) different ways. For example, in fair division
they can be the actions of giving the current item to each one of the \(n\)
agents. In public decision-making, they can be the actions of choosing each
agent's favorite feasible outcome. For a fixed tracked requirement \(q\), the number
\(\Delta_q^{t+1}(k)\) is a raw one-step shift for tracked requirement \(q\) under the
\(k\)-th reference action; the displayed next deficit is clipped at zero through \eqref{eq:first-moment-shift}.
~
The mean bound condition,
\[
    \sum_{k=1}^n \Delta_q^{t+1}(k)\le 0,
\]
says that the current round does not force the deficit of \(q\) to increase. Some
reference actions may increase this deficit, but other reference actions repair it.
Overall, when we sum over the whole menu of reference actions, the total change
is not positive. Thus the round contains enough ways to compensate for the harm
done by other ways.
~
The squared-size bound condition,
\[
    \sum_{k=1}^n\left(\Delta_q^{t+1}(k)\right)^2\le \sigma^2,
\]
says that the changes in one round are not too large after normalization. This
is important because otherwise a round could have one very large harmful action
and one very large repairing action, and still look balanced in the first
condition. The second condition rules out this kind of large jump.

Together, the two conditions say that every round has enough small repair
options. No tracked deficit is forced to go up just because of the structure of the
round, and no single round can create a very large normalized increase in the
deficit.

The reference actions are used only to prove this balance.
The algorithm itself may choose any feasible action in \(A^{t+1}\), and will choose the one that gives the best next deficit vector.

\section{Generic Online Decision Rule}
\label{sec:algorithm}
In this section, we present a generic online decision rule for the framework of \Cref{sec:definition}.  
The same rule will later be applied to discounted deficits in \Cref{sec:limited-memory}.


\paragraph{Roadmap.}
We aggregate the deficits $(z_q^t)_{q\in Q}$ to a smooth potential function denoted by $\Psi^t$.
In each round $t{+}1$, the algorithm chooses an action that minimizes the next potential $\Psi^{t+1}$. \Cref{fig:deficit-potential-rule} illustrates this one-step choice. Each feasible action induces a candidate next deficit vector, the potential assigns one score to the whole vector, and the lowest-score vector is carried forward.
The analysis compares this minimum to the average next potential of the reference actions. Using the local balance conditions and a Taylor bound, we control the one-step increase in the potential. We then telescope this inequality to bound \(\Psi^t\) for all \(t\), and translate the potential bound into an upper bound on \(|\mathcal{D}^{t}(c)|\), the number of \(c\)-violated tracked requirements at time \(t\).

As a corollary, we get the main result in this section (Corollary~\ref{cor:best-ct}). 
It shows that, for $c_{t} \in \tilde{O}(\sqrt{t/n})$, $|\mathcal{D}^{t}(c_{t})|=0$.
This implies a perpetual fairness guarantee: the outcome at any time $t$ is $c_{t}$-fair.

\begin{figure}[tbp]
\centering
\begin{tikzpicture}[
  x=.055\linewidth,
  y=1cm,
  >=Latex,
  font=\small,
  bg/.style={draw=black!10, fill=black!1, rounded corners=8pt},
  panel/.style={draw=black!28, fill=white, rounded corners=5pt, line width=.5pt},
  selected/.style={draw=green!40!black, fill=green!5, rounded corners=5pt, line width=.95pt},
  state/.style={draw=blue!55!black, fill=blue!4, rounded corners=5pt, line width=.75pt},
  action/.style={draw=orange!60!black, fill=orange!10, rounded corners=10pt, line width=.5pt, minimum width=1.15cm, minimum height=.48cm, font=\small},
  badge/.style={draw=black!25, fill=white, rounded corners=2pt, inner sep=2pt, font=\small},
  arrow/.style={-{Latex[length=2.2mm]}, line width=.9pt, draw=black!55},
  faint/.style={-{Latex[length=1.8mm]}, line width=.55pt, draw=black!34},
  tiny/.style={font=\small, text=black!72},
  head/.style={font=\bfseries\large, text=black!88}
]

\draw[bg] (0,0) rectangle (17.6,6.45);

\node[head] at (2.05,6.08) {current vector};
\node[head] at (5.05,6.08) {actions};
\node[head] at (9.05,6.08) {candidate vectors};
\node[head] at (14.75,6.08) {chosen vector};

\draw[state] (0.55,1.35) rectangle (3.55,5.55);
\node[badge, fill=blue!8, font=\small\bfseries] at (2.05,5.13) {$z^t$};

\draw[draw=black!35] (0.95,2.0) -- (3.12,2.0);
\draw[dashed, draw=red!65!black, line width=.6pt] (0.95,4.08) -- (3.12,4.08);
\node[font=\small, text=red!65!black, anchor=west] at (3.16,4.08) {$c_t$};

\foreach \x/\h in {1.05/1.02,1.45/1.42,1.85/.85,2.25/1.70,2.65/1.20}
  \draw[fill=blue!38, draw=blue!55!black, line width=.32pt] (\x,2.0) rectangle +(0.25,\h);

\node[tiny] at (2.05,1.62) {deficits so far};

\draw[arrow] (3.55,3.45) -- (4.05,3.45);

\draw[panel, fill=orange!4] (4.05,1.58) rectangle (6.05,5.35);
\node[badge, fill=orange!12, font=\small\bfseries] at (5.05,4.98) {$A^{t+1}$};

\node[action] (a1) at (5.05,4.17) {$a_1$};
\node[action] (a2) at (5.05,3.33) {$a_2$};
\node[action] (a3) at (5.05,2.49) {$a_3$};

\node[tiny, align=center] at (5.05,1.94) {feasible\\choices};

\draw[panel, fill=black!1] (6.60,1.02) rectangle (11.65,5.70);

\draw[panel] (6.90,4.48) rectangle (10.82,5.34);
\node[tiny, anchor=west] at (7.10,5.06) {$z^{t+1}(a_1)$};
\draw[dashed, draw=red!65!black, line width=.45pt] (8.15,4.90) -- (10.08,4.90);
\foreach \x/\h in {8.28/.30,8.60/.56,8.92/.41,9.24/.76,9.56/.48,9.88/.35}
  \draw[fill=blue!32, draw=blue!55!black, line width=.24pt] (\x,4.60) rectangle +(0.18,\h);
\node[badge, fill=red!8, text=red!60!black, font=\small] at (10.45,5.08) {high};

\draw[selected] (6.90,3.05) rectangle (10.82,3.91);
\node[tiny, anchor=west] at (7.10,3.63) {$z^{t+1}(a_2)$};
\draw[dashed, draw=red!65!black, line width=.45pt] (8.15,3.47) -- (10.08,3.47);
\foreach \x/\h in {8.28/.22,8.60/.36,8.92/.30,9.24/.43,9.56/.26,9.88/.34}
  \draw[fill=blue!32, draw=blue!55!black, line width=.24pt] (\x,3.17) rectangle +(0.18,\h);
\node[badge, fill=green!10, text=green!35!black, font=\small] at (10.45,3.65) {min};

\draw[panel] (6.90,1.62) rectangle (10.82,2.48);
\node[tiny, anchor=west] at (7.10,2.20) {$z^{t+1}(a_3)$};
\draw[dashed, draw=red!65!black, line width=.45pt] (8.15,2.04) -- (10.08,2.04);
\foreach \x/\h in {8.28/.48,8.60/.18,8.92/.70,9.24/.30,9.56/.52,9.88/.62}
  \draw[fill=blue!32, draw=blue!55!black, line width=.24pt] (\x,1.74) rectangle +(0.18,\h);
\node[badge, fill=red!8, text=red!60!black, font=\small] at (10.45,2.22) {high};

\draw[faint] (a1.east) -- (6.90,4.91);
\draw[arrow] (a2.east) -- (6.90,3.48);
\draw[faint] (a3.east) -- (6.90,2.05);

\draw[draw=orange!65!black, fill=orange!12, rounded corners=5pt, line width=.65pt]
  (11.05,2.86) rectangle (11.50,4.08);
\node[font=\Large] at (11.275,3.48) {$\Psi$};

\draw[arrow] (11.65,3.48) -- (12.30,3.48);

\draw[selected] (12.30,1.35) rectangle (17.05,5.55);
\node[badge, fill=green!10, font=\small\bfseries] at (14.68,5.13) {$z^{t+1}=z^{t+1}(a^{t+1})$};

\draw[draw=black!35] (12.90,2.0) -- (16.38,2.0);
\draw[dashed, draw=red!65!black, line width=.6pt] (12.90,4.08) -- (16.38,4.08);
\node[font=\small, anchor=west, text=red!65!black] at (16.45,4.08) {$c_t$};

\foreach \x/\h in {13.10/.94,13.60/1.26,14.10/.92,14.60/1.48,15.10/.72,15.60/1.16}
  \draw[fill=blue!38, draw=blue!55!black, line width=.32pt] (\x,2.0) rectangle +(0.28,\h);

\node[tiny] at (14.68,1.62) {new current vector};

\draw[arrow] (14.68,1.35) .. controls (12.0,.30) and (4.0,.30) .. (2.05,1.35);
\node[tiny, fill=black!1, inner sep=1.3pt] at (8.25,.55) {$z^{t+1}$ becomes the next $z^t$};

\end{tikzpicture}
\caption{One-step view of the \(p\)-potential rule. Starting from the current deficit vector \(z^t\), each feasible action \(a\in A^{t+1}\) creates a candidate next vector \(z^{t+1}(a)\). The potential \(\Psi\) scores whole candidate vectors, the rule chooses the lowest-score action, and the realized vector \(z^{t+1}\) becomes the state for the next round.}
\Description{A flow diagram showing the current deficit vector, several candidate next deficit vectors produced by feasible actions, a potential score for each candidate, and the chosen lowest-potential candidate becoming the next state.}
\label{fig:deficit-potential-rule}
\end{figure}
The next subsection states the rule formally. The analysis then follows the roadmap above: one-coordinate control, one-step potential growth, telescoping over time, and finally a tail bound on violated requirements.

\subsection{The \texorpdfstring{$p$}{p}-potential rule}
Our general decision rule is parameterized by a real number $p\ge 1$.
Given $p$, define for all $t\geq 0$,

\begin{align}
\label{eq:f-def-general}
f(u)\ &:=\ (u^2+4p^2\sigma^2)^p
\qquad (u\in\mathbb R).
&&
\Psi^t := \left(\sum_{{q}\in Q} f\!\bigl(z_{q}^{t}\bigr)\right)^{1/p}.
\end{align}


Intuitively, the potential is a single score for the whole deficit vector. A small potential
means that the deficits are collectively under control. We do not minimize only the
largest deficit, because this can ignore many other deficits that are almost
large. Instead, the potential looks at all deficits, while giving more weight to
larger ones.

The term \(4p^2\sigma^2\) is only there to make the function smooth enough for the
one-step analysis. It prevents the curvature of \(f\) from becoming too large
near zero. This is what lets us use the local balance conditions to control the
change in the potential from one round to the next.
For additional intuition on this choice of potential, see \Cref{sec:lorenz-intuition}.

For each round $t\ge 0$ and action $a\in A^{t+1}$, define the hypothetical
raw potential
\begin{align}
\label{eq:Psi-general}
\Psi^{t+1}(a)\ :=\ \left( \sum_{{q}\in Q} f\!\bigl(z_{q}^{t+1}(a)\bigr) \right)^{1/p}.
\end{align}
The $p$-potential rule chooses
\begin{equation}
\label{eq:rvm-choice-general}
a^{t+1}\ \in\ \arg\min_{a\in A^{t+1}} \Psi^{t+1}(a)
, \quad \text{breaking ties arbitrarily.}
\end{equation}
In words, the rule checks every feasible action, computes the deficit vector
that would result from this action, and chooses the action with the smallest
potential. This is a one-step rule: it uses only the current history and the current input, never changes past decisions, and does not need to know anything about future rounds.

\subsection*{Runtime complexity}
In round $t+1$, the rule minimizes $\Psi^{t+1}(a)$ over $a\in A^{t+1}$.
Evaluating $\Psi^{t+1}(a)$ requires $m$ evaluations of the deficit functions $\{z_{q}^{t+1}(a)\}_{{q}\in Q}$ for each given $a$.
Thus the naive per round time is $O(|A^{t+1}|\cdot m)$ evaluations, with $O(m)$ required storage.
Many instantiations admit faster minimization by exploiting structure in $A^{t+1}$; see the examples in \Cref{sec:applications}.


\subsection{Analysis under local balance}
\label{subsec:analysis-general}

We now prove that, under the local balance conditions of \Cref{subsec:moments}, 
the number of $c$-violated requirements $|\mathcal{D}^{t}(c)|$ is upper-bounded by an expression that depends on $c$ and $t$. From this upper bound, we deduce an expression $c_{t}$ such that the number of $c_{t}$-violated requirements must be $0$, that is, the outcome is always $c_{t}$-fair.

The proof follows the intuition behind the rule. Since the rule chooses the action with minimum next potential, its next potential is at most the average next potential over
the reference actions. The local balance conditions show that this average
cannot increase too much in one round. Repeating this argument over all rounds
gives a bound on the potential, and a bound on the potential gives a bound on
the deficits. \Cref{fig:deficit-trajectory-intuition} illustrates the resulting trajectory picture.

\begin{figure}[ht]
\centering
\begin{tikzpicture}[
  >=Latex,
  font=\small,
  axis/.style={->, draw=black!70, line width=.85pt},
  ct/.style={draw=blue!65!black, line width=1.35pt, dashed},
  run/.style={draw=green!40!black, line width=1.45pt},
  lin/.style={draw=red!55!black, line width=.9pt, dotted},
  note/.style={font=\scriptsize, align=center}
]

\draw[draw=black!10, fill=black!1, rounded corners=8pt]
  (0,0) rectangle (13.7,5.5);

\fill[blue!5]
  (.85,.75)
  -- (.85,1.18)
  .. controls (2.50,1.76) and (4.55,2.16) .. (6.50,2.46)
  .. controls (8.40,2.74) and (10.15,2.94) .. (12.05,3.10)
  -- (12.05,.75)
  -- cycle;

\draw[axis] (.85,.75) -- (12.35,.75) node[below right] {round \(t\)};
\draw[axis] (.85,.75) -- (.85,4.45)
  node[above, align=center] {largest\\deficit};

\draw[ct]
  (.85,1.18)
  .. controls (2.50,1.76) and (4.55,2.16) .. (6.50,2.46)
  .. controls (8.40,2.74) and (10.15,2.94) .. (12.05,3.10);

\node[note, text=blue!65!black, anchor=west] at (9.35,3.25)
  {proved threshold \(c_t = \tilde O(\sqrt{t/n})\)};

\draw[run]
  (1.10,.98)
  -- (1.65,1.25)
  -- (2.20,1.09)
  -- (2.90,1.48)
  -- (3.60,1.30)
  -- (4.30,1.68)
  -- (5.00,1.49)
  -- (5.80,1.86)
  -- (6.60,1.66)
  -- (7.40,2.03)
  -- (8.20,1.83)
  -- (9.00,2.19)
  -- (9.80,1.99)
  -- (10.60,2.35)
  -- (11.40,2.15)
  -- (12.00,2.50);

\foreach \x/\y in {
1.10/.98,1.65/1.25,2.20/1.09,2.90/1.48,
3.60/1.30,4.30/1.68,5.00/1.49,5.80/1.86,
6.60/1.66,7.40/2.03,8.20/1.83,9.00/2.19,
9.80/1.99,10.60/2.35,11.40/2.15,12.00/2.50}
{
  \fill[green!40!black] (\x,\y) circle (1.25pt);
}

\node[note, text=green!35!black, anchor=west] at (2.50,2.18)
  {\(p\)-potential rule};

\draw[lin] (1.10,.98) -- (11.90,4.12);
\node[note, text=red!55!black, anchor=west] at (9.75,4.14)
  {linear drift};

\end{tikzpicture}
\caption{Trajectory intuition for the upper bound. The green curve schematically represents the largest deficit along a run of the \(p\)-potential rule. It may go up and down from round to round, but the analysis keeps it below the proved threshold \(\tilde O(\sqrt{t/n})\). The dotted red line shows the kind of linear drift that the theorem rules out.}
\Description{A schematic graph of the largest deficit over rounds. A green zig-zag curve for the p-potential rule remains below a dashed blue curve labeled proved threshold c_t. A red dotted line labeled linear drift rises faster.}
\label{fig:deficit-trajectory-intuition}
\end{figure}

\begin{toappendix}
\subsection*{Auxiliary inequalities}
We use three standard analytic inequalities.
The Jensen bound converts a sum of powers into a power of a sum, the curvature bound controls the Taylor remainder of our potential component, and the concavity-transfer inequality converts a bound on the potential into a bound on its \(p\)-th root.
\begin{lemma}[Jensen power-sum bound]
\label[lemma]{lem:jensen-power-sum-general}
For any real numbers $F_1,\ldots,F_m>0$ and any $\alpha\in[0,1]$,
\[
\sum_{i=1}^m F_i^\alpha
\ \le\
m^{1-\alpha}\Bigl(\sum_{i=1}^m F_i\Bigr)^\alpha.
\]
\end{lemma}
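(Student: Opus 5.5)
The plan is to derive the inequality from the concavity of the map $x\mapsto x^\alpha$ on $(0,\infty)$ for $\alpha\in[0,1]$, applied through Jensen's inequality to the uniform average of $F_1,\ldots,F_m$. The endpoint cases need no argument: for $\alpha=0$ both sides equal $m$, and for $\alpha=1$ both sides equal $\sum_i F_i$. So assume $0<\alpha<1$, where $x^\alpha$ is concave on $(0,\infty)$.

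The key step is to apply Jensen's inequality to the $m$ points $F_1,\ldots,F_m$, each with weight $1/m$:
\[
\frac{1}{m}\sum_{i=1}^m F_i^\alpha\ \le\ \Bigl(\frac{1}{m}\sum_{i=1}^m F_i\Bigr)^\alpha .
\]
Multiplying both sides by $m$ and simplifying $m\cdot m^{-\alpha}=m^{1-\alpha}$ gives exactly
\[
\sum_{i=1}^m F_i^\alpha\ \le\ m^{1-\alpha}\Bigl(\sum_{i=1}^m F_i\Bigr)^\alpha .
\]

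As an alternative route with the same conclusion, one can use H\"older's inequality with exponents $1/\alpha$ and $1/(1-\alpha)$, applied to the vectors $(F_i^\alpha)_i$ and $(1)_i$:
\[
\sum_{i=1}^m F_i^\alpha\cdot 1\ \le\ \Bigl(\sum_{i=1}^m F_i\Bigr)^{\alpha} m^{1-\alpha}.
\]

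The only point requiring care is the concavity of $x^\alpha$ for $0<\alpha<1$. This holds because its second derivative, $\alpha(\alpha-1)x^{\alpha-2}$, is negative for $x>0$. The positivity hypothesis $F_i>0$ keeps every point in the domain where this applies.
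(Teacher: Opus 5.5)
Your proposal is correct and follows the same approach as the paper: concavity of $x\mapsto x^\alpha$, Jensen with uniform weights $1/m$, then multiplication by $m$ using $m\cdot m^{-\alpha}=m^{1-\alpha}$. Separating out the endpoints $\alpha\in\{0,1\}$ is harmless but not needed, since $x^0$ and $x^1$ are affine and hence concave as well. Your H\"older alternative is also valid.
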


\begin{proof}
For any $\alpha\in[0,1]$, the map $x\mapsto x^\alpha$ is concave on $\mathbb R_{>0}$.
Applying Jensen's inequality with weights $1/m$ gives
\[
\frac1m\sum_{i=1}^m F_i^\alpha
\ \le\
\Bigl(\frac1m\sum_{i=1}^m F_i\Bigr)^\alpha.
\]
Multiply by $m$ and rewrite $m\cdot m^{-\alpha}=m^{1-\alpha}$.
\end{proof}

\begin{lemma}[Global curvature bound]
\label[lemma]{lem:curvature-general}
Let $p\ge 1$, $\sigma\ge1$, and $f(u)=(u^2+4p^2\sigma^2)^p$.
For any $u\ge 0$ and any $\xi\in[-\sigma,\sigma]$, the second derivative of $f$ satisfies
\[
f''(u+\xi)\ \le\ 4\sqrt{e}\,p^2\,(u^2+4p^2\sigma^2)^{p-1}.
\]
\end{lemma}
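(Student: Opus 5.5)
The plan is to compute \(f''\) explicitly, bound it by a multiple of \((x^2+4p^2\sigma^2)^{p-1}\) at the shifted point \(x=u+\xi\), and then compare the value at \(x\) with the value at \(u\) using \(|\xi|\le\sigma\) and the size of the additive constant \(4p^2\sigma^2\).

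\emph{Step 1: a pointwise bound on \(f''\).} Write \(a:=4p^2\sigma^2\). Then
\[
f'(x)=2px(x^2+a)^{p-1}
\]
and
\[
f''(x)=2p(x^2+a)^{p-2}\bigl[(x^2+a)+2(p-1)x^2\bigr].
\]
Since \(p\ge1\), we have \(2(p-1)x^2\le 2(p-1)(x^2+a)\). Hence the bracket is at most \((2p-1)(x^2+a)\), and
\[
f''(x)\le 2p(2p-1)(x^2+a)^{p-1}\le 4p^2(x^2+a)^{p-1}.
\]
This step is valid also for \(1\le p<2\): there \((x^2+a)^{p-2}\) is merely a positive factor, since \(a>0\).

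\emph{Step 2: transfer from \(x=u+\xi\) to \(u\).} It remains to show
\[
\Bigl(\tfrac{x^2+a}{u^2+a}\Bigr)^{p-1}\le\sqrt e .
\]
For \(p=1\) this is trivial, so assume \(p>1\). Since \(u\ge0\) and \(|\xi|\le\sigma\), we have \(x^2\le(u+\sigma)^2=u^2+2u\sigma+\sigma^2\). Therefore
\[
\frac{x^2+a}{u^2+a}\le 1+\frac{2u\sigma}{u^2+4p^2\sigma^2}+\frac{\sigma^2}{u^2+4p^2\sigma^2}.
\]
By AM--GM, \(u^2+4p^2\sigma^2\ge 4p\sigma u\), so the middle term is at most \(1/(2p)\). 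The last term is at most \(1/(4p^2)\).

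\emph{Step 3: exponentiate.} Using \(1+y\le e^y\),
\[
\Bigl(\tfrac{x^2+a}{u^2+a}\Bigr)^{p-1}\le \exp\!\Bigl((p-1)\bigl(\tfrac1{2p}+\tfrac1{4p^2}\bigr)\Bigr)
= \exp\!\Bigl(\tfrac12-\tfrac1{2p}+\tfrac{p-1}{4p^2}\Bigr).
\]
The exponent is at most \(1/2\), because \((p-1)/(4p^2)\le 1/(2p)\) is equivalent to \(p-1\le 2p\). Combining with Step 1 gives
\[
f''(u+\xi)\le 4\sqrt e\,p^2(u^2+4p^2\sigma^2)^{p-1}.
\]
The hypothesis \(\sigma\ge1\) is not needed here.

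The only delicate point is Step 2. The constant \(4p^2\) in \(f\) is chosen so that the relative perturbation of \(x^2+a\) is \(O(1/p)\), which keeps its \((p-1)\)-th power bounded by \(\sqrt e\). I will check that the AM--GM step uses exactly the factor \(4p^2\), so that the linear term contributes \(1/(2p)\), and through it \(\tfrac12\) to the exponent.
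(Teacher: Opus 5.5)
Your proof is correct and follows the paper's route. You use the same pointwise bound $f''(x)\le 4p^2(x^2+4p^2\sigma^2)^{p-1}$, followed by control of the ratio $\bigl((u+\xi)^2+4p^2\sigma^2\bigr)/(u^2+4p^2\sigma^2)$ raised to the power $p-1$ by $\sqrt e$. The only difference is how that ratio is handled. The paper bounds the relative increment by $1/(2p-2)$ by completing the square and then uses $(1+1/k)^{k/2}\le\sqrt e$. You split it via AM--GM into $1/(2p)+1/(4p^2)$ and use $1+y\le e^y$, which even gives a slightly smaller exponent.
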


\begin{proof}
First compute $f''$ at a point $u$:
\[
f'(u)=2pu\,(u^2+4p^2\sigma^2)^{p-1},
f''(u)=2p(u^2+4p^2\sigma^2)^{p-2}\bigl((2p-1)u^2+4p^2\sigma^2\bigr).
\]
Since $(2p-1)u^2+4p^2\sigma^2 \le 2p(u^2+4p^2\sigma^2)$ for $p\ge 1$, we get
\[
f''(u)\ \le\ 4p^2\,(u^2+4p^2\sigma^2)^{p-1}.
\]
Now fix $|\xi|\le \sigma$. Applying the above bound at $u+\xi$ gives
\[
f''(u+\xi)\le 4p^2\bigl((u+\xi)^2+4p^2\sigma^2\bigr)^{p-1}
=4p^2(u^2+4p^2\sigma^2)^{p-1}\,R(u)^{\,p-1},
\]
where
\[
R(u):=\frac{(u+\xi)^2+4p^2\sigma^2}{u^2+4p^2\sigma^2}
=1+\frac{2u\xi+\xi^2}{u^2+4p^2\sigma^2}
\le 1+\frac{2u\sigma+\sigma^2}{u^2+4p^2\sigma^2}.
\]

If $p=1$, then $R(u)^{p-1}=1$ and the claim follows.
Assume $p>1$. We claim that for all \(u\ge0\),
\[
\frac{2u\sigma+\sigma^2}{u^2+4p^2\sigma^2}\ \le\ \frac{1}{2p-2}.
\]
Indeed, 
this is equivalent (after multiplying by $(2p-2)(u^2+4p^2\sigma^2)$, which is positive) to
\[
u^2+4p^2\sigma^2-(2p-2)(2u\sigma+\sigma^2)\ \ge\ 0.
\]
The LHS equals
\[
u^2-4(p-1)u\sigma+4p^2\sigma^2-(2p-2)\sigma^2
=(u-2(p-1)\sigma)^2+(6p-2)\sigma^2\ \ge\ 0
\]
Therefore $R(u)\le 1+\frac{1}{2p-2}$ and since $p-1=\frac{2p-2}{2}$,
\[
R(u)^{p-1}\le \Bigl(1+\frac{1}{2p-2}\Bigr)^{(2p-2)/2}\le \sqrt{e}.
\]
Substituting back proves the lemma.
\end{proof}

\begin{lemma}[Concavity transfer]
\label[lemma]{lem:concavity-transfer-general}
Let $p\ge 1$ and $g(x):=x^{1/p}$ for $x\ge 0$.
For any $a>0$ and $b\ge 0$,
\[
g(b)-g(a)\ \le\ g'(a)\,(b-a)
\ =\ \frac{1}{p}\,a^{1/p-1}(b-a).
\]
\end{lemma}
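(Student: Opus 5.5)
This is the standard tangent-line inequality for a concave function, so the plan is to prove concavity of $g$ and then apply the supporting-line property at the point $a>0$.

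\emph{Concavity and differentiability.} On $(0,\infty)$ we have
\[
g'(x)=\tfrac1p x^{1/p-1}, \qquad g''(x)=\tfrac1p\bigl(\tfrac1p-1\bigr)x^{1/p-2}\le 0,
\]
because $p\ge1$. Thus $g$ is concave on $(0,\infty)$. Since $g$ is also continuous at $0$, concavity extends to $[0,\infty)$: the defining inequality $g(\lambda x+(1-\lambda)y)\ge \lambda g(x)+(1-\lambda)g(y)$ passes to the limit as an endpoint tends to $0$. The function $g$ is differentiable at every $a>0$, with the derivative given above.

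\emph{Supporting-line step.} Fix $a>0$ and $b\ge0$. If $b=a$, both sides are zero. Otherwise, concavity says the difference quotient $\frac{g(b)-g(a)}{b-a}$ is nonincreasing in the other endpoint. This gives two cases.
\begin{itemize}
\item If $b>a$, the quotient is at most its limit as $b\downarrow a$, namely $g'(a)$. Multiplying by $b-a>0$ gives the claim.
\item If $b<a$, the quotient is at least $g'(a)$. Multiplying by the negative number $b-a$ reverses the inequality, again giving $g(b)-g(a)\le g'(a)(b-a)$.
\end{itemize}
The case $b=0$ is covered because concavity holds on the closed half-line.

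As an alternative, one can apply Taylor's theorem with Lagrange remainder on the segment between $a$ and $b$:
\[
g(b)=g(a)+g'(a)(b-a)+\tfrac12 g''(\xi)(b-a)^2 .
\]
For $b>0$ the point $\xi$ lies strictly inside $(0,\infty)$, so $g''(\xi)\le0$ and the remainder is nonpositive. The case $b=0$ then follows by continuity, letting $b\downarrow0$.

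The only delicate point is the endpoint $b=0$, where $g$ fails to be differentiable. Handling concavity on the closed interval, or taking the continuity limit, disposes of it. Finally, the equality $g'(a)(b-a)=\frac1p a^{1/p-1}(b-a)$ is just the derivative formula.
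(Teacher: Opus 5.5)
Your proof is correct and follows the same route as the paper: concavity of $x^{1/p}$ on $[0,\infty)$, then the tangent-line inequality for concave functions at $a>0$. The only difference is that you spell out details the paper leaves implicit, namely the second-derivative check, the slope-monotonicity argument, and the endpoint $b=0$.
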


\begin{proof}
As $1/p\in[0,1]$, the function $g$ is concave on $\mathbb R_{\ge 0}$.
A concave function lies below its tangent line at $a$:
\[
g(b)\ \le\ g(a)+g'(a)(b-a).
\]
Rearrange to obtain the claim.
\end{proof}
\end{toappendix}

\subsection*{Bounding the growth of the potential}
We now turn the local balance conditions into a bound on the growth of the potential.
The argument has three steps: first control one tracked requirement, then sum over all requirements in one round, and finally telescope over time.

\begin{lemmarep}[Bound on a single tracked requirement]
\label{lem:moment-taylor}
Assume the shift inequality \eqref{eq:first-moment-shift}, the mean bound \eqref{eq:first-moment-mean} and the squared-size bound \eqref{eq:second-moment-sigma}.
Fix a round $t+1$ and a tracked requirement ${q}\in Q$.
Let $z:=z_{q}^t$.
Then the average of the potential-component $f$ over all reference actions is bounded as follows:
\begin{equation}
\label{eq:moment-taylor}
\frac1n\sum_{k=1}^n f\!\bigl(z_{q}^{t+1}(\hat a^{t+1}_{k})\bigr)
\ \le\
f(z)\;+\;\frac{2\sqrt{e}\,p^2\,\sigma^2}{n}\,(z^2+4p^2\sigma^2)^{p-1}.
\end{equation}
\end{lemmarep}
\begin{proofsketch}
    Using the Taylor--Lagrange theorem, we present the potential-component $f$ at time $t+1$ as a sum of $f$ at time $t$ ($f(z)$), plus a first-order term, plus a second-order term (the Lagrange remainder).
    We then average this expression over all reference actions.

    The mean bound implies that, on average, the first-order term vanishes.  The squared-size bound implies an upper bound on the average of the second-order term. This yields the claimed upper bound on the average value of $f$.
\end{proofsketch}
\begin{proof}

Fix a round $t+1$ and a tracked requirement ${q}\in{Q}$.
Write $z:=z_{q}^t$ and $\Delta_k:=\Delta_{q}^{t+1}(k)$ for $k\in[n]$ ($k$ is an index of a reference actions).
Note that $z\ge 0$ since all deficits $z_{q}^t$ are nonnegative by definition.
We will repeatedly use that $f(u)=(u^2+4p^2\sigma^2)^p$ is even and nondecreasing in $|u|$ (hence nondecreasing on $\mathbb R_{\ge 0}$), and thus $f([x]_+)\le f(x)$ for all $x\in\mathbb R$.

\par\noindent
\noindent\textbf{Step 0 (reduce to $f(z+\Delta_k)$).}
By the mean bound shift \eqref{eq:first-moment-shift},
$
z_{q}^{t+1}(\hat a^{t+1}_{k})\le [z+\Delta_k]_+.
$
Since $f$ is nondecreasing on $\mathbb R_{\ge 0}$ and $[z+\Delta_k]_+\ge 0$, we get
\[
f\!\bigl(z_{q}^{t+1}(\hat a^{t+1}_{k})\bigr)\le f([z+\Delta_k]_+).
\]
Moreover $[x]_+\le |x|$ for all $x$, and $f$ is even and nondecreasing in $|x|$, hence
$f([x]_+)\le f(|x|)=f(x)$. Therefore
$
f([z+\Delta_k]_+)\le f(z+\Delta_k)
$,
and averaging yields
\[
\frac1n\sum_{k=1}^n f\!\bigl(z_{q}^{t+1}(\hat a^{t+1}_{k})\bigr)
\le
\frac1n\sum_{k=1}^n f(z+\Delta_k).
\]

\Cref{fig:taylor-cancellation-microscope} summarizes the next steps of the proof: the average first-order Taylor term is harmless by the mean bound, while the squared-size bound controls the remaining curvature term.

\begin{figure}[ht]
\centering
\begin{tikzpicture}[
  >=Latex,
  font=\small,
  outer/.style={draw=black!12, fill=black!1, rounded corners=9pt},
  panel/.style={draw=black!11, fill=white, rounded corners=7pt},
  title/.style={font=\bfseries, align=center},
  note/.style={font=\scriptsize, align=center},
  axis/.style={->, draw=black!60, line width=.7pt},
  curve/.style={draw=blue!60!black, line width=1.3pt},
  tangent/.style={draw=orange!80!black, dashed, line width=.95pt},
  good/.style={draw=green!45!black, fill=green!7, rounded corners=5pt, inner sep=3pt, align=center},
  cost/.style={draw=orange!75!black, fill=orange!7, rounded corners=5pt, inner sep=3pt, align=center},
  resultbox/.style={draw=blue!55!black, fill=blue!5, rounded corners=5pt, inner sep=4pt, align=center},
  dot/.style={circle, fill=black, inner sep=1.45pt},
  ldot/.style={circle, fill=green!45!black, inner sep=1.45pt},
  rdot/.style={circle, fill=red!60!black, inner sep=1.45pt}
]

\draw[outer] (0,0) rectangle (14,6.3);

\draw[panel, fill=blue!2] (.35,.72) rectangle (5.05,5.95);
\node[title, text=blue!55!black] at (2.70,5.60) {1. Reference moves};
\node[note, text=black!65] at (2.70,5.28) {one coordinate, one round};

\draw[axis] (.85,1.35) -- (4.65,1.35) node[below] {$u$};
\draw[axis] (.85,1.35) -- (.85,4.85) node[left] {$f(u)$};

\draw[curve] plot[smooth] coordinates {
  (.95,1.48) (1.35,1.58) (1.75,1.82) (2.15,2.18)
  (2.50,2.58) (2.85,3.03) (3.25,3.50) (3.70,3.90)
  (4.20,4.16) (4.55,4.23)
};
\draw[tangent] (1.35,1.77) -- (4.30,4.15);

\coordinate (zbase) at (2.70,1.35);
\coordinate (zpt) at (2.70,2.83);
\draw[dashed, black!45] (zbase) -- (zpt);
\node[dot] at (zpt) {};
\node[below=5pt] at (zbase) {$z_q^t$};

\foreach \x in {1.55,2.05,2.35} {
  \node[ldot] at (\x,1.35) {};
}
\foreach \x in {3.10,3.45,3.95} {
  \node[rdot] at (\x,1.35) {};
}

\draw[->, green!45!black, line width=.9pt] (2.62,1.02) -- (1.62,1.02);
\draw[->, red!60!black, line width=.9pt] (2.78,1.02) -- (3.85,1.02);

\node[note, text=black!70] at (2.70,.3)
  {$z+\Delta_k$ from the reference actions};

\draw[panel, fill=green!2] (5.30,.72) rectangle (8.95,5.95);
\node[title, text=green!40!black] at (7.125,5.60) {2. Tangent view};
\node[note, text=black!65] at (7.125,5.28) {first-order terms};

\draw[orange!80!black, line width=1.1pt] (5.85,3.35) -- (8.40,3.35);
\fill[black] (7.125,3.35) circle (1.7pt);
\draw[black!45, line width=.7pt] (7.125,3.35) -- (6.75,2.65) -- (7.50,2.65) -- cycle;
\node[note] at (7.125,2.35) {tangent at \(z\)};

\foreach \dx in {-1.00,-.62,-.32} {
  \draw[->, green!45!black, line width=.95pt] (7.125,4.15+0.18*\dx) -- ++(\dx,0);
}
\foreach \dx in {.42,.74,1.06} {
  \draw[->, red!60!black, line width=.95pt] (7.125,4.15+0.18*\dx) -- ++(\dx,0);
}

\node[good, text width=2.85cm] at (7.125,0.3)
  { \(\displaystyle \frac1n\sum_k \Delta_k\le0\)\\[-1mm]};

\draw[panel, fill=orange!2] (9.20,.72) rectangle (13.50,5.95);
\node[title, text=orange!75!black] at (11.35,5.60) {3. Curved view};
\node[note, text=black!65] at (11.35,5.28) {second-order cost};

\draw[axis] (9.70,1.35) -- (13.08,1.35) node[below] {$u$};

\draw[curve] plot[smooth] coordinates {
  (9.76,1.55) (10.10,1.63) (10.45,1.84) (10.80,2.16)
  (11.13,2.55) (11.45,2.98) (11.82,3.43) (12.22,3.78)
  (12.65,4.00) (13.00,4.05)
};
\draw[tangent] (10.05,1.75) -- (12.75,3.95);

\node[note, text=orange!80!black, anchor=west] at (11,4.10)
  {curve above tangent};

\foreach \x/\yt/\yc in {
  10.35/1.99/1.78,
  10.70/2.27/2.05,
  11.95/3.29/3.56,
  12.35/3.61/3.86,
  12.70/3.89/4.00
} {
  \draw[orange!75!black, line width=.9pt] (\x,\yt) -- (\x,\yc);
  \fill[orange!75!black] (\x,\yc) circle (1.15pt);
}

\node[cost, text width=3.25cm] at (11.35,0.3)
  {\(\displaystyle \frac1n\sum_k\Delta_k^2\le \frac{\sigma^2}{n}\)};

\draw[->, black!35, line width=.9pt] (5.05,3.35) -- (5.30,3.35);
\draw[->, black!35, line width=.9pt] (8.95,3.35) -- (9.20,3.35);

\end{tikzpicture}
\caption{Taylor cancellation for one tracked requirement. The reference actions move the current deficit \(z=z_q^t\) to nearby values \(z+\Delta_k\). In the tangent view, these moves contribute only the averaged first-order term, which is nonpositive by the mean bound. What remains is the curvature cost; the squared-size bound controls the sum of these small gaps. This is the local step behind Lemma~\ref{lem:moment-taylor}.}
\Description{A three-panel illustration of Taylor cancellation. The first panel shows reference shifts around the current deficit. The second panel shows that the shifts cancel in the tangent view. The third panel shows that only small curvature gaps remain, controlled by the squared-size bound.}
\label{fig:taylor-cancellation-microscope}
\end{figure}

\par\noindent
\noindent\textbf{Step 1 (Taylor--Lagrange for each $k$).}
The function $f(u)=(u^2+4p^2\sigma^2)^p$ is $C^2$ on $\mathbb R$ since $u^2+4p^2\sigma^2>0$.
Hence, by the Taylor--Lagrange theorem, for each $k\in[n]$, there exists $\theta_k\in(0,1)$ such that
\[
f(z+\Delta_k)
=
f(z)+f'(z)\Delta_k+\frac12 f''(z+\theta_k\Delta_k)\Delta_k^2.
\]
Averaging over $k\in[n]$ gives
\[
\frac1n\sum_{k=1}^n f(z+\Delta_k)
=
f(z)
+f'(z)\Bigl(\frac1n\sum_{k=1}^n \Delta_k\Bigr)
+\frac12\cdot \frac1n\sum_{k=1}^n f''(z+\theta_k\Delta_k)\Delta_k^2.
\]

\par\noindent
\noindent\textbf{Step 2 (drop the linear term).}
Since $z\ge 0$, we have $f'(z)\ge 0$.
By the mean bound \eqref{eq:first-moment-mean}, $\sum_{k=1}^n \Delta_k\le 0$, hence
$f'(z)\cdot \frac1n\sum_k \Delta_k\le 0$.
Dropping this (nonpositive) term yields
\[
\frac1n\sum_{k=1}^n f(z+\Delta_k)
\le
f(z)
+\frac12\cdot \frac1n\sum_{k=1}^n f''(z+\theta_k\Delta_k)\Delta_k^2.
\]

\par\noindent
\noindent\textbf{Step 3 (uniform curvature bound).}
By the squared-size bound \eqref{eq:second-moment-sigma}, $|\Delta_k|\le \sigma$ for every $k$. Since $\theta_k\in(0,1)$, we have $|\theta_k\Delta_k|\le \sigma$.
By \Cref{lem:curvature-general},
\[
f''(z+\theta_k\Delta_k)
\le
4\sqrt e\,p^2\,(z^2+4p^2\sigma^2)^{p-1}
\qquad \text{for every }k.
\]
Substituting,
\[
\frac1n\sum_{k=1}^n f(z+\Delta_k)
\le
f(z)
+\frac12\cdot 4\sqrt e\,p^2\,(z^2+4p^2\sigma^2)^{p-1}
\Bigl(\frac1n\sum_{k=1}^n \Delta_k^2\Bigr).
\]

\par\noindent
\noindent\textbf{Step 4 (use the squared-size bound).}
By \eqref{eq:second-moment-sigma}, $\sum_{k=1}^n \Delta_k^2\le \sigma^2$, hence
$\frac1n\sum_{k=1}^n \Delta_k^2\le \frac{\sigma^2}{n}$.
Therefore,
\[
\frac1n\sum_{k=1}^n f(z+\Delta_k)
\le
f(z)+\frac{2\sqrt e\,p^2\,\sigma^2}{n}\,(z^2+4p^2\sigma^2)^{p-1}.
\]
Combine with Step~0 to obtain \Cref{eq:moment-taylor}.
\end{proof}

We now sum the single-requirement estimate over all tracked requirements and use the fact that the rule chooses the action with minimum next potential.

\begin{lemmarep}[One-step growth bound]
\label{lem:transform-general}
Assume the local balance conditions.
For every $t\ge 0$,
\[
\Psi^{t+1}\ \le\ \Psi^t\ +\ \frac{2\sqrt e\,p\,\sigma^2}{n}\,m^{1/p}.
\]
\end{lemmarep}
\begin{proofsketch}
$\Psi^{t+1}$ is the minimum potential over all feasible actions.
The minimum is at most the average over the reference actions.
Applying Lemma~\ref{lem:moment-taylor} to each tracked requirement and then Jensen's inequality gives the stated one-step growth bound.
\end{proofsketch}
\begin{proof}
Fix a round $t+1$ and abbreviate $\hat a_k:=\hat a^{t+1}_{k}$.
~
Denote:
\begin{align}
    \Phi^{t+1}(a)\ :=\ \sum_{{q}\in Q} f\!\bigl(z_{q}^{t+1}(a)\bigr),
\end{align} 
so that $\Psi^{t+1}(a) = (\Phi^{t+1}(a))^{1/p}$.

\par\noindent
\noindent\textbf{Step 1 (minimum $\le$ average).}
Since $x\mapsto x^{1/p}$ is nondecreasing on $\mathbb R_{\ge 0}$, minimizing
$\Psi^{t+1}(a)=(\Phi^{t+1}(a))^{1/p}$ over $a\in A^{t+1}$ is equivalent to
minimizing $\Phi^{t+1}(a)$.
Hence, for the action $a^{t+1}$ chosen by the rule,
\[
(\Psi^{t+1})^p=\Phi^{t+1}
=\min_{a\in A^{t+1}}\Phi^{t+1}(a)
\le
\min_{k\in[n]}\Phi^{t+1}(\hat a_k)
\le
\frac1n\sum_{k=1}^n \Phi^{t+1}(\hat a_k).
\]

\par\noindent
\noindent\textbf{Step 2 (swap sums).}
Using $\Phi^{t+1}(a)=\sum_{{q}\in Q} f(z_{q}^{t+1}(a))$,
\begin{align*}
\frac1n\sum_{k=1}^n \Phi^{t+1}(\hat a_k)
=
\frac1n\sum_{k=1}^n\sum_{{q}\in Q} f\!\bigl(z_{q}^{t+1}(\hat a_k)\bigr)
=
\sum_{{q}\in Q}
\left(\frac1n\sum_{k=1}^n f\!\bigl(z_{q}^{t+1}(\hat a_k)\bigr)\right).
\end{align*}

\par\noindent
\noindent\textbf{Step 3 (apply the one-step Taylor bound for each tracked requirement).}
For each ${q}\in Q$, apply Lemma~\ref{lem:moment-taylor} with $z=z_{q}^t$:
\[
\frac1n\sum_{k=1}^n f\!\bigl(z_{q}^{t+1}(\hat a_k)\bigr)
\le
f(z_{q}^t)
+\frac{2\sqrt e\,p^2\,\sigma^2}{n}\bigl((z_{q}^t)^2+4p^2\sigma^2\bigr)^{p-1}.
\]
Summing over ${q}\in Q$ and using
\[
(\Psi^t)^p=\sum_{q\in Q} f(z_{q}^t),
\]
we get
\begin{equation}
\label{eq:onestep-raw}
(\Psi^{t+1})^p
\le
(\Psi^t)^p
+
\frac{2\sqrt e\,p^2\,\sigma^2}{n}
\sum_{{q}\in Q}\bigl((z_{q}^t)^2+4p^2\sigma^2\bigr)^{p-1}.
\end{equation}

\par\noindent
\noindent\textbf{Step 4 (Jensen on the $(p-1)$-powers).}
Define
\[
F_{q}:=f(z_{q}^t)=((z_{q}^t)^2+4p^2\sigma^2)^p,
\qquad
\alpha:=1-\frac1p\in[0,1].
\]
Then
\[
((z_{q}^t)^2+4p^2\sigma^2)^{p-1}=(F_{q})^\alpha
\]
and
\[
\sum_{q\in Q}F_{q}=(\Psi^t)^p.
\]
By \Cref{lem:jensen-power-sum-general},
\[
\sum_{{q}\in Q}\bigl((z_{q}^t)^2+4p^2\sigma^2\bigr)^{p-1}
=
\sum_{{q}\in Q}F_{q}^\alpha
\le
m^{1/p}(\Psi^t)^{p-1}.
\]
Plugging into \eqref{eq:onestep-raw} gives
\begin{equation}
\label{eq:ba-bound-general}
(\Psi^{t+1})^p-(\Psi^t)^p
\le
\frac{2\sqrt e\,p^2\,\sigma^2}{n}\,m^{1/p}(\Psi^t)^{p-1}.
\end{equation}

\par\noindent
\noindent\textbf{Step 5 (concavity transfer from $p$-powers).}
Let
\[
a:=(\Psi^t)^p,
\qquad
b:=(\Psi^{t+1})^p.
\]
Then
\[
\Psi^{t+1}-\Psi^t=b^{1/p}-a^{1/p}.
\]
Since $m\ge 1$, we have
\[
(\Psi^t)^p=\sum_{{q}\in Q} f(z_{q}^t)\ge m(4p^2\sigma^2)^p>0,
\]
so $a>0$.
By \Cref{lem:concavity-transfer-general},
\[
\Psi^{t+1}-\Psi^t
\le
\frac1p\,a^{1/p-1}(b-a)
=
\frac1p\cdot (\Psi^t)^{1-p}\cdot
\bigl((\Psi^{t+1})^p-(\Psi^t)^p\bigr).
\]
Using \eqref{eq:ba-bound-general}, we conclude that
\[
\Psi^{t+1}-\Psi^t
\le
\frac{2\sqrt e\,p\,\sigma^2}{n}\,m^{1/p}.
\]
\end{proof}

Telescoping the one-step growth bound gives the following anytime bound on the potential.

\begin{lemmarep}[Anytime potential bound]
\label{lem:potential-bound-general}
Assume the local balance conditions.
For all $t\ge 0$,
\[
\Psi^t\ \le\ m^{1/p}\cdot
\Bigl(4p^2\sigma^2+\frac{2\sqrt e\,p\,\sigma^2\,t}{n}\Bigr).
\]
\end{lemmarep}
\begin{proofsketch}
We sum the one-step increment bounds of Lemma~\ref{lem:transform-general}.
\end{proofsketch}

\begin{proof}
Recall that:
\begin{align}
    \Phi^{t+1}(a)\ :=\ \sum_{{q}\in Q} f\!\bigl(z_{q}^{t+1}(a)\bigr),
\end{align} 
so that $\Psi^{t+1}(a) = (\Phi^{t+1}(a))^{1/p}$.

Since $z_q^0=0$ for all $q\in Q$,
\begin{equation}
\label{eq:Psi0-general}
\Phi^0=m\cdot (4p^2\sigma^2)^p,
\qquad
\Psi^0=m^{1/p}\cdot 4p^2\sigma^2.
\end{equation}
By Lemma~\ref{lem:transform-general}, for each
$\ell=0,1,\ldots,t-1$,
\[
\Psi^{\ell+1}
\le
\Psi^\ell
+
\frac{2\sqrt e\,p\,\sigma^2}{n}\,m^{1/p}.
\]
Summing these $t$ inequalities yields
\[
\Psi^t
\le
\Psi^0
+
\frac{2\sqrt e\,p\,\sigma^2}{n}\,t\,m^{1/p}.
\]
Substituting \eqref{eq:Psi0-general} gives
\[
\Psi^t
\le
m^{1/p}\cdot
\Bigl(4p^2\sigma^2+\frac{2\sqrt e\,p\,\sigma^2\,t}{n}\Bigr).
\]
\end{proof}

\subsection*{From potential bounds to \texorpdfstring{$c_t$}{c_t}-fairness}
The previous lemma bounds the potential value \(\Psi^t\). We now translate this analytic bound back into the original fairness language: a requirement can be \(c\)-violated only if it contributes a large term to the potential.

\begin{theoremrep}[Few $c$-violated requirements]
\label{thm:violated-general}
Assume the local balance conditions.
Fix $t\ge 0$ and $c>0$. Then
\begin{equation}
\label{eq:violated-bound-general}
|\mathcal{D}^{t}(c)|
\ \le\
m\left(
\frac{4p^2\sigma^2+\frac{2\sqrt e\,p\,\sigma^2\,t}{n}}{c^2+4p^2\sigma^2}
\right)^{\!p}
\end{equation}
\end{theoremrep}
\begin{proofsketch}
    Every $c$-violated requirement contributes to $(\Psi^t)^p$ at least $(c^2 + 4p^2\sigma^2)^p$.
    Using the upper bound on $(\Psi^t)^p$ from Lemma~\ref{lem:potential-bound-general}, we get an upper bound on the number of $c$-violated requirements.
\end{proofsketch}

\begin{proof}
By definition, if ${q}\in\mathcal{D}^{t}(c)$ then $z_{q}^t>c$.
Since $z_{q}^t>c\ge 0$, we have $(z_{q}^t)^2>c^2$, hence
$(z_{q}^t)^2+4p^2\sigma^2\ge c^2+4p^2\sigma^2$, and therefore
\[
f(z_{q}^t)=\bigl((z_{q}^t)^2+4p^2\sigma^2\bigr)^p\ge (c^2+4p^2\sigma^2)^p.
\]

Summing over all ${q}\in\mathcal{D}^{t}(c)$ gives
\[
\sum_{{q}\in{Q}} f(z_{q}^t)
\ \ge\
\sum_{{q}\in\mathcal{D}^{t}(c)} f(z_{q}^t)
\ \ge\
|\mathcal{D}^{t}(c)|\,(c^2+4p^2\sigma^2)^p.
\]
Since $(\Psi^t)^p=\sum_{{q}\in{Q}} f(z_{q}^t)$, rearranging yields
\[
|\mathcal{D}^{t}(c)|
\le
\left(\frac{\Psi^t}{c^2+4p^2\sigma^2}\right)^p .
\]
Using Lemma~\ref{lem:potential-bound-general}, this gives
\[
|\mathcal{D}^{t}(c)|
\le
m\left(
\frac{4p^2\sigma^2+\frac{2\sqrt e\,p\,\sigma^2\,t}{n}}{c^2+4p^2\sigma^2}
\right)^p.
\]

\end{proof}
The next corollary turns the bound on the number of \(c\)-violated requirements into a \(c_t\)-fairness guarantee.
The idea is to choose \(c_t\) so that the upper bound on \(|\mathcal D^t(c_t)|\) is smaller than \(1\); since this quantity is an integer, it must then be \(0\).
\begin{corollary}[Explicit perpetual $c_{t}$-fairness]
\label[corollary]{cor:ct-feasible-general}
Assume the local balance conditions.
For every $t\ge 0$, define
\[
c_{t}\ :=\ \sigma\,m^{1/(2p)}\cdot \sqrt{\,4p^2+\frac{2\sqrt e\,p\,t}{n}\,}
\]
Then for every \(t\), the outcome is \(c_{t}\)-fair, that is,
$\mathcal{D}^{t}(c_{t})=\emptyset$.
\end{corollary}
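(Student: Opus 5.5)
The plan is to apply \Cref{thm:violated-general} with $c=c_t$ and show that the right-hand side of \eqref{eq:violated-bound-general} is at most $1$. The inequality there is not strict, so a bound of exactly $1$ only gives $|\mathcal{D}^{t}(c_t)|\le 1$. To get emptiness we will go back one step to the strict inequality $z_q^t>c$ used in its proof.

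First, abbreviate $B_t:=4p^2\sigma^2+\frac{2\sqrt e\,p\,\sigma^2 t}{n}$. The definition of $c_t$ then reads
\[
c_t^2=\sigma^2 m^{1/p}\Bigl(4p^2+\tfrac{2\sqrt e\,p\,t}{n}\Bigr)=m^{1/p}B_t .
\]
Consequently $c_t^2+4p^2\sigma^2> c_t^2=m^{1/p}B_t$, because $4p^2\sigma^2>0$ (recall $\sigma^2\ge1$). Raising to the $p$-th power gives
\[
m\Bigl(\frac{B_t}{c_t^2+4p^2\sigma^2}\Bigr)^p<m\Bigl(\frac{B_t}{m^{1/p}B_t}\Bigr)^p=1 .
\]
So the right-hand side of \eqref{eq:violated-bound-general} is strictly less than $1$. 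Since $|\mathcal{D}^{t}(c_t)|$ is a nonnegative integer, it must be $0$. The inequality is in fact strict thanks to the additive $4p^2\sigma^2$ in the denominator, so the boundary case never occurs.

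For robustness, one can instead argue directly. Suppose some $q$ has $z_q^t>c_t$. Then $(\Psi^t)^p\ge f(z_q^t)>(c_t^2+4p^2\sigma^2)^p$. On the other hand, \Cref{lem:potential-bound-general} gives $(\Psi^t)^p\le mB_t^p=c_t^{2p}$. Combining the two yields $c_t^{2p}>(c_t^2+4p^2\sigma^2)^p$, which is a contradiction. Here we also need $c_t>0$, which holds since $m\ge2$ and $\sigma\ge1$, so \Cref{thm:violated-general} applies.

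The main obstacle is only bookkeeping. We need to verify that $c_t^2$ matches $m^{1/p}B_t$ exactly after pulling $\sigma^2$ out of $B_t$, and to make sure the argument yields a strict inequality so that the count is $0$ rather than at most $1$. Both are handled by the computation above. No conditions on $t$ are needed, since \Cref{lem:potential-bound-general} holds for every $t\ge0$.
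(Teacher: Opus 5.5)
Your proof is correct and follows the paper's own argument: substitute $c=c_t$ into the bound on the number of violated requirements, observe that $c_t^2=m^{1/p}B_t$, use the extra $4p^2\sigma^2$ in the denominator to make the bound strictly less than $1$, and conclude by integrality. Your ``direct'' variant just inlines the proof of that theorem together with the anytime potential bound. Your opening worry about a non-strict bound is moot, as your own computation shows.
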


\begin{proof}
By \Cref{thm:violated-general},
\[
|\mathcal{D}^{t}(c)|
\le
m\left(
\frac{4p^2\sigma^2+\frac{2\sqrt e\,p\,\sigma^2\,t}{n}}{c^2+4p^2\sigma^2}
\right)^{\!p}.
\]
Let
\[
c=c_t
=
m^{1/(2p)}\cdot
\sqrt{\,4p^2\sigma^2+\frac{2\sqrt e\,p\,\sigma^2\,t}{n}\,}.
\]
Then
\[
c_t^2
=
m^{1/p}\left(4p^2\sigma^2+\frac{2\sqrt e\,p\,\sigma^2\,t}{n}\right),
\]
and therefore
\[
c_t^2+4p^2\sigma^2
=
m^{1/p}\left(4p^2\sigma^2+\frac{2\sqrt e\,p\,\sigma^2\,t}{n}\right)+4p^2\sigma^2.
\]
Thus
\[
|\mathcal{D}^{t}(c_t)|
\le
m\left(
\frac{4p^2\sigma^2+\frac{2\sqrt e\,p\,\sigma^2\,t}{n}}
{m^{1/p}\!\left(4p^2\sigma^2+\frac{2\sqrt e\,p\,\sigma^2\,t}{n}\right)+4p^2\sigma^2}
\right)^p
<
m\left(\frac{1}{m^{1/p}}\right)^p
=1.
\]
Since $|\mathcal{D}^{t}(c_{t})|$ is an integer, it must be $0$.
\end{proof}

\paragraph{Choosing \(p\): many small violations vs.\ one worst violation.}

The parameter \(p\) determines how strongly the potential emphasizes large entries of the deficit vector. 
For a fixed threshold \(c\), \Cref{thm:violated-general} gives the tail bound
\[
|\mathcal{D}^{t}(c)|
\le
m\left(
\frac{4p^2\sigma^2+\frac{2\sqrt e\,p\,\sigma^2t}{n}}{c^2+4p^2\sigma^2}
\right)^p,
\]
so smaller \(p\) can be preferable when the goal is to bound how many tracked requirements exceed a fixed threshold \(c\). 
To prove \(c_t\)-fairness, however, we need the stronger conclusion that no requirement is violated, namely \(|\mathcal{D}^{t}(c_t)|=0\). 
For this purpose the dependence on \(m=|Q|\) becomes crucial. 
For example, \Cref{cor:ct-feasible-general} gives, with \(p=2\),
\[
c_t^{(p=2)}
=
m^{1/4}
\sigma\sqrt{16+\frac{4\sqrt e\,t}{n}},
\]
which still has polynomial dependence on \(m\). 
Choosing \(p=\ln(2m)\) instead gives \(m^{1/(2p)}\le\sqrt e\), and hence
\[
c_t
\le
\sqrt e\,\sigma
\sqrt{4\ln^2(2m)+\frac{2\sqrt e\,\ln(2m)t}{n}}.
\]
Thus \(p=\ln(2m)\) is the choice we use for the \(c_t\)-fairness guarantees in the applications.
\begin{corollary}[Logarithmic choice of the potential parameter]
\label[corollary]{cor:best-ct}
  If the $p$-potential algorithm is executed with \(p=\ln(2m)\), then for every time $t\geq 1$, the algorithm is $c_{t}$-fair for
\[
c_{t} \ :=\ \sqrt e\cdot \sqrt{\,4p^2\sigma^2+\frac{2\sqrt e\,p\,\sigma^2\,t}{n}\,}
\ =\ O\!\left(\sigma p+\sigma\sqrt{\frac{pt}{n}}\right)
\ =\ O\!\left(\sigma\ln m+\sigma\sqrt{\frac{t\ln m}{n}}\right).
\]
\end{corollary}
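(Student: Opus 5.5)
This is a direct specialization of \Cref{cor:ct-feasible-general}, so the plan is to substitute \(p=\ln(2m)\) and then bound the two resulting factors. First we check that the choice is admissible. The potential rule requires \(p\ge1\), and since \(m\ge2\) we have \(2m\ge4\), hence \(p=\ln(2m)\ge\ln 4>1\). \Cref{cor:ct-feasible-general} therefore applies verbatim. It says that the outcome is \(\tilde c_t\)-fair with
\[
\tilde c_t=\sigma\,m^{1/(2p)}\sqrt{4p^2+\frac{2\sqrt e\,p\,t}{n}}=m^{1/(2p)}\sqrt{4p^2\sigma^2+\frac{2\sqrt e\,p\,\sigma^2 t}{n}} .
\]

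Next we bound the prefactor \(m^{1/(2p)}\). We write \(m^{1/(2p)}=\exp\bigl(\ln m/(2\ln(2m))\bigr)\). Since \(\ln m<\ln(2m)\), the exponent is below \(1/2\), so \(m^{1/(2p)}\le\sqrt e\). It follows that \(\tilde c_t\le c_t\) for the \(c_t\) in the statement.

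Fairness is monotone in the threshold: \(\mathcal D^t(c)\) shrinks as \(c\) grows. Hence \(\mathcal D^t(c_t)\subseteq\mathcal D^t(\tilde c_t)=\emptyset\), and the outcome is \(c_t\)-fair for every \(t\). The statement restricts to \(t\ge1\), but the argument covers \(t=0\) as well.

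For the asymptotic forms we use \(\sqrt{x+y}\le\sqrt x+\sqrt y\). This gives \(c_t\le\sqrt e\,(2p\sigma+\sigma\sqrt{2\sqrt e\,pt/n})=O(\sigma p+\sigma\sqrt{pt/n})\). Since \(p=\ln(2m)\le 2\ln m\) for \(m\ge2\), we have \(p=O(\ln m)\), which yields the final bound \(O(\sigma\ln m+\sigma\sqrt{t\ln m/n})\).

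There is no real obstacle here. The only step requiring care is the exponent estimate \(m^{1/(2\ln(2m))}\le\sqrt e\), which holds because \(\ln m<\ln(2m)\).
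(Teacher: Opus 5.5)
Your proposal is correct and matches the paper's proof: specialize \Cref{cor:ct-feasible-general} to $p=\ln(2m)$, bound $m^{1/(2p)}\le\sqrt e$, and read off the big-$O$ forms from $p=\Theta(\ln m)$. You also make explicit two details the paper leaves implicit, namely the admissibility check $p\ge 1$ and the monotonicity of $\mathcal{D}^{t}(c)$ in the threshold $c$.
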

\begin{proof}
Apply \Cref{cor:ct-feasible-general} with \(p=\ln(2m)\). Then
\(m^{1/(2p)}=m^{1/(2\ln(2m))}\le \sqrt e\), and the displayed bound on
\(c_t\) follows immediately. The two big-\(O\) estimates follow from
\(p=\Theta(\ln m)\).
\end{proof}
Thus, after a short ``warm-up'' period of order $(n\ln{m})$ rounds, the square-root term dominates and we get 
$c_{t}\in O\left(\sigma\sqrt{\frac{t\ln m}{n}}\right)$.

\begin{remark}
\label{rem:why-deterministic}
One may allow a randomized online algorithm, i.e., at time step $t{+}1$ the algorithm
may sample $a^{t+1}$ from a distribution over $A^{t+1}$ based on the realized history $H^t$  and the current round input.
However, 
randomization does not improve worst-case performance:
the lower bound in \Cref{sec:lower_bound} applies to \emph{all} online algorithms, including randomized ones (equivalently, the adversary forces failure with probability $1$,
and therefore the same lower bound also holds in expectation). 
Accordingly, we present a deterministic rule that guarantees fairness even in the worst case, and against an adaptive adversary.
\end{remark}

Before turning to the applications, it is useful to separate the rest of the paper into two parts. First, we study the full-history model: we instantiate the framework in several domains and prove that the resulting \(\sqrt t\)-type dependence is unavoidable. Second, motivated by this lower bound, we study limited-memory fairness notions and show that discounted memory gives guarantees controlled by the \emph{squared history length}, which we define in that section.

\section{Applications}
\label{sec:applications}

This section instantiates the full-history deficit framework (\Cref{sec:definition}) for three fully-online fairness problems, and then gives one extension for classical \(\EFc\) which is not fully-online. Throughout this paper we assume \(n\ge2\).
Since we already motivated the domains and stated the informal fairness goals, we focus here on:
(i) the concrete definition of the tracked requirements and deficits \(z_q^t\),
(ii) a full verification of the local balance conditions (Definition~\ref{def:first-moment}),
and (iii) the resulting perpetual fairness guarantee.
We also summarize per round implementability and complexity in Table~\ref{tab:instantiation-complexity}.

Thus, each application has the same structure: define the deficit, choose the right
normalization, check local balance, and then apply the general theorem. The first application is written most explicitly, since it is the motivating example; the later applications follow the same template and can be read as parallel instantiations.


\subsection{Case 1: Approximately-proportional online item allocation}
\label{subsec:ex-propxc}

There are \(n\) agents.
At each round \(t{+}1\), an item \(g^{t+1}\) arrives and must be allocated irrevocably to one agent
\(a^{t+1}\in[n]\). Hence, the set of actions is the same at all times: $A^{t+1} = [n]$.

Agents have nonnegative additive valuations \(v_i(\cdot)\).
Let \(P_i^t\) be agent \(i\)'s bundle after \(t\) rounds and \(G^t=\bigcup_{i}P_i^t\) be the total set of items up to time $t$.
Agent \(i\)'s proportional share is \(\Prop_i^t:=\frac1n v_i(G^t)\), and the proportionality deficit is $d_i^t := \Prop_i^t - v_i(P_i^t)$. This is a full-history requirement: all items that have arrived up to time \(t\) are counted.
We scale the deficit of $i$ by the largest missed item (the largest item not allocated to $i$), defined as
\[
U_i^t \;:=\; \max\{v_i(g): g\in G^t\setminus P_i^t\},
\quad\text{with }U_i^t:=0 \text{ if } G^t = P_i^t.
\]
The \(\PROPax{c}\) condition at each time \(t\) is \(d_i^t \le c\,U_i^t\) for all \(i\in[n]\).
This scaling is useful because the same absolute deficit can have different
meaning in different histories. If the missed items have small value, then even
a small deficit may be important. If large missed items appeared, then a larger
absolute deficit can still be explained by only a few missed items.

We instantiate the framework with one tracked requirement per agent: \(Q=[n]\) and \(m=|Q|=n\).
The deficit of tracked requirement $i$ is defined as:
\[
z_i^t := \sdiv{[d_i^t]_+}{U_i^t},
\]
Here \(\sdiv{x}{y}\) denotes safe division: it equals \(x/y\) if \(y>0\), and equals \(0\) if \(y=0\) (in which case also \(x=0\)).

We claim that $c$-fairness implies \(\PROPax{c}\). This follows from the following lemma.
\begin{lemma}[Normalized proportionality deficit implies \(\PROPax{c}\)]
For any $i\in[n]$ and $t\geq 0$,
$z_i^t \leq c$
implies
$d_i^t\leq c U_i^t$.
\end{lemma}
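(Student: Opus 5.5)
The plan is a direct case split on whether the normalizing scale $U_i^t$ is positive, unwinding the safe-division definition of $z_i^t$. Throughout I will use that $c\ge 0$, as in the definition of $c$-violated requirements, and that $d_i^t\le [d_i^t]_+$ always.

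\emph{Case $U_i^t>0$.} Here the safe division is ordinary division, so $z_i^t=[d_i^t]_+/U_i^t$. The hypothesis $z_i^t\le c$ then gives, after multiplying by the positive number $U_i^t$,
\[
d_i^t\le [d_i^t]_+\le c\,U_i^t .
\]

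\emph{Case $U_i^t=0$.} This is the only step that needs an argument, because $z_i^t=0$ carries no information and I must show directly that $d_i^t\le 0$. The condition $U_i^t=0$ happens in one of two ways.

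The first way is $G^t=P_i^t$, that is, agent $i$ holds every item. Then
\[
d_i^t=\frac1n v_i(G^t)-v_i(G^t)=\Bigl(\frac1n-1\Bigr)v_i(G^t)\le 0,
\]
using $n\ge 2$ and nonnegative valuations.

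The second way is that every item in $G^t\setminus P_i^t$ has $v_i$-value $0$. Then $v_i(G^t)=v_i(P_i^t)$ by additivity, and the same computation again gives $d_i^t\le 0$.

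In both sub-cases $d_i^t\le 0=c\,U_i^t$, as required. This also justifies the parenthetical remark in the definition of $z_i^t$ that $x=0$ whenever $y=0$: indeed $[d_i^t]_+=0$ here.
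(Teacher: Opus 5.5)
Your proof is correct and follows essentially the same route as the paper's. Both arguments show $d_i^t\le 0$ when $U_i^t=0$ (via $v_i(G^t)=v_i(P_i^t)$), and otherwise multiply $z_i^t\le c$ through by $U_i^t>0$. The differences are cosmetic: the paper treats $d_i^t\le 0$ as a separate case, whereas you absorb it via $d_i^t\le [d_i^t]_+$. You also split the $U_i^t=0$ case into an empty-missed-set subcase and a zero-valued-missed-items subcase, which the paper handles together.
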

\begin{proof}
If \(d_i^t\le 0\), then \(d_i^t\le 0\le c\,U_i^t\) since \(U_i^t\ge 0\), so the claim holds.

If \(U_i^t=0\), then every missed item has value \(0\) for agent \(i\) (by nonnegativity), hence
\(v_i(G^t)=v_i(P_i^t)\).
Therefore
\(
d_i^t=\frac1n v_i(G^t)-v_i(P_i^t)=-(1-\frac1n)v_i(G^t)\le 0,
\)
so again the claim holds.

If \(U_i^t>0\) and \(d_i^t>0\), so \([d_i^t]_+=d_i^t\), then \(z_i^t=\frac{d_i^t}{U_i^t}\le c\) trivially implies \(d_i^t\le c\,U_i^t\).
\end{proof}

Before proving that the balance conditions hold, we give an example of our generic fully-online rule, using the valuations in \Cref{tab:intro-greedy-counterexample}.
\begin{example}[How the potential rule avoids the greedy mistake]
Fix \(0<\epsilon<1/2\), and suppose ties are broken in favor of agent~1.
For the two tracked requirements, write the normalized deficit vector as \((z_1,z_2)\).
The potential rule uses
\[
f(u)=(u^2+4p^2\sigma^2)^p,
\qquad
\Phi(z_1,z_2)=f(z_1)+f(z_2),
\]
and chooses the recipient that minimizes the post-round value of \(\Phi\).

Using the valuations in \Cref{tab:intro-greedy-counterexample}, the first six rounds give the following post-round vectors and potential comparisons:
\[
\begin{array}{c|c|c|c}
\text{round} & \text{if given to agent }1 & \text{if given to agent }2 & \text{chosen agent} \\
\hline
1 &
(0,1/2) &
(1/2,0) &
1 \\
2 &
(0,(1+\epsilon)/2) &
(0,(1-\epsilon)/2) &
2 \\
3 &
(0,1/2) &
(1/2,(1-2\epsilon)/2) &
1 \\
4 &
(0,1) &
(0,0) &
2 \\
5 &
(0,\epsilon/2) &
(\epsilon/2,0) &
1 \\
6 &
(0,(1+\epsilon)/2) &
(0,0) &
2
\end{array}
\]
In round 1, for both choices 
$\Phi=f(1/2)+f(0)$, so the tie is broken in favor of agent 1.
Then, the choices follow directly from monotonicity of \(f\) on \([0,\infty)\).
For instance, in round \(2\),
\[
f(0)+f((1-\epsilon)/2)<f(0)+f((1+\epsilon)/2),
\]
so the item is given to agent~2.
The key round is round \(3\): although agent~2 currently has the larger deficit,
\[
f(0)+f(1/2)
<
f(1/2)+f((1-2\epsilon)/2),
\]
because \(f((1-2\epsilon)/2)>f(0)\).
Thus the potential rule gives the third item to agent~1, unlike the deficit-greedy rule from \Cref{tab:intro-greedy-counterexample}.

The resulting choices are therefore
\[
1,\ 2,\ 1,\ 2,\ 1,\ 2.
\]
The point is that the rule minimizes the whole post-round deficit vector through \(\Phi\), rather than simply repairing the currently largest deficit.
\end{example}

\Cref{fig:greedy-vs-potential-clean} isolates the key fork in the example.
At round \(3\), agent \(2\)'s current deficit is larger, but the arriving item has
value vector \((1,\epsilon)\). Thus giving the item to agent \(2\) repairs \(z_2\)
only slightly while creating a new large \(z_1\). The potential rule avoids this
mistake by comparing the whole next deficit vector.

\begin{figure}[ht]
\centering
\resizebox{0.8\linewidth}{!}{%
\begin{tikzpicture}[
  x=1cm, y=1cm,
  >=Stealth,
  font=\small,
  panel/.style={rounded corners=5pt, draw=black!30, fill=black!1, line width=0.65pt},
  redpanel/.style={rounded corners=5pt, draw=red!62!black, fill=red!3, line width=0.95pt},
  bluepanel/.style={rounded corners=5pt, draw=blue!62!black, fill=blue!3, line width=0.95pt},
  axis/.style={->, draw=black!78, line width=0.65pt},
  redarr/.style={-{Stealth[length=2.7mm]}, very thick, draw=red!72!black},
  bluearr/.style={-{Stealth[length=2.7mm]}, very thick, draw=blue!72!black},
  whitelabel/.style={fill=white, inner sep=1.4pt, rounded corners=1pt}
]

\draw[redpanel] (0,0) rectangle (7.05,5.55);
\node[font=\bfseries, text=red!72!black] at (3.525,5.18) {greedy: choose agent 2};

\begin{scope}[shift={(0.85,0.65)}]
  \draw[axis] (0,0) -- (5.55,0) node[right] {$z_1$};
  \draw[axis] (0,0) -- (0,4.15) node[above] {$z_2$};
  \node[font=\scriptsize, below left] at (0,0) {$0$};

  \coordinate (s)  at (0.00,2.35);
  \coordinate (g1) at (2.65,2.10);
  \coordinate (g2) at (3.55,2.70);
  \coordinate (g3) at (4.55,3.27);
  \coordinate (g4) at (5.15,3.72);

  \fill[black] (s) circle (2.6pt);
  \draw[redarr] (s) -- (g1);
  \draw[redarr] (g1) -- (g2);
  \draw[redarr] (g2) -- (g3);
  \draw[redarr] (g3) -- (g4);
  \foreach \p in {g1,g2,g3,g4}{
    \fill[red!72!black] (\p) circle (2.6pt);
  }

  \node[font=\bfseries\small, text=red!72!black, whitelabel] at (3.55,0.46)
    {drifts away};
  \node[font=\scriptsize, text=red!72!black, whitelabel, anchor=west] at (2.78,1.95)
    {new $z_1$};
\end{scope}

\draw[bluepanel] (7.65,0) rectangle (14.7,5.55);
\node[font=\bfseries, text=blue!72!black] at (11.175,5.18) {potential rule: choose agent 1};

\begin{scope}[shift={(8.50,0.65)}]
  \draw[axis] (0,0) -- (5.55,0) node[right] {$z_1$};
  \draw[axis] (0,0) -- (0,4.15) node[above] {$z_2$};
  \node[font=\scriptsize, below left] at (0,0) {$0$};

  \coordinate (s)  at (0.00,2.35);
  \coordinate (p1) at (0.00,2.90);
  \coordinate (p2) at (1.25,1.10);
  \coordinate (p3) at (0.15,0.28);
  \coordinate (p4) at (0.00,0.13);

  \fill[black] (s) circle (2.6pt);
  \draw[bluearr] (s) -- (p1);
  \draw[bluearr] (p1) -- (p2);
  \draw[bluearr] (p2) -- (p3);
  \draw[bluearr] (p3) -- (p4);
  \foreach \p in {p1,p2,p3,p4}{
    \fill[blue!72!black] (\p) circle (2.6pt);
  }

  \node[font=\bfseries\small, text=blue!72!black, whitelabel] at (3.45,0.46)
    {returns near $0$};
  \node[font=\scriptsize, text=blue!72!black, whitelabel, anchor=west] at (0.20,3.02)
    {$z_1$ stays $0$};
\end{scope}

\end{tikzpicture}%
}
\caption{The key mistake of the deficit-greedy rule in the two-agent example.
At round \(3\), agent \(2\) has the larger current deficit, but the arriving item
has value \(1\) for agent \(1\) and only \(\epsilon\) for agent \(2\).
Greedy follows the current deficit and creates a new \(z_1\), after which the
trajectory drifts away. The potential rule gives the item to agent \(1\), keeping
the trajectory near the origin.}
\Description{A two-panel comparison. The top strip states that at round 3 the current deficit points to agent 2, while the item value points to agent 1. In the left panel, the greedy rule chooses agent 2 and the trajectory drifts away from the origin. In the right panel, the potential rule chooses agent 1 and the trajectory returns near the origin.}
\label{fig:greedy-vs-potential-clean}
\end{figure}

Now, we prove that our deficits satisfy the balance conditions.

For a candidate recipient agent \(a\in[n]\), the hypothetical post-round deficits are
\begin{equation}
\label{eq:propx-deficit}
z_i^{t+1}(a)\;:=\;\sdiv{\bigl[d_i^{\,t+1}(a)\bigr]_+}{U_i^{t+1}(a)}\qquad(i\in[n]).
\end{equation}

\begin{lemmarep}[Local balance for normalized proportional deficits]
For each round \(t{+}1\) and an agent \(i\),
the next-step deficit functions \eqref{eq:propx-deficit} satisfy local balance conditions
(\ref{eq:first-moment-mean}, \ref{eq:second-moment-sigma}), with \(\sigma^2=1\).
\end{lemmarep}
\begin{proofsketch}
Fix a round \(t{+}1\) and agent \(i\), and let \(x_i:=v_i(g^{t+1})\). After normalizing by the updated missed-item scale, giving the item to \(i\) decreases \(i\)'s normalized deficit by roughly \((1-\frac1n)x_i\), while giving it to another agent increases it by roughly \(x_i/n\). Thus, across the \(n\) reference actions, the positive and negative shifts cancel in the mean bound, and their squared size is at most \(1\). This gives the local balance conditions with \(\sigma^2=1\).
\end{proofsketch}
\begin{proof}
Let 
\[
x_i := v_i(g^{t+1}),\qquad
\widetilde U_i := \max\{U_i^t,\,x_i\},\qquad
s := \sdiv{x_i}{\widetilde U_i}\in[0,1].
\]
Observe that \(\widetilde U_i=0\) implies \(x_i=0\), hence \(s=0\).

The arrival of $g^{t+1}$ increases \(v_i(G^t)\) by \(x_i\), hence increases \(\Prop_i^t\) by \(x_i/n\).
If the item is allocated to \(a=i\), then \(v_i(P_i^t)\) increases by \(x_i\), whereas 
if the item is allocated to \(a\neq i\), then \(v_i(P_i^t)\) is unchanged. Hence,
\begin{align*}
d_i^{t+1}(a)
=
\begin{cases}
    \bigl(\Prop_i^t+\tfrac{x_i}{n}\bigr)-\bigl(v_i(P_i^t)+x_i\bigr)
=d_i^t-\Bigl(1-\frac1n\Bigr)x_i & a=i
    \\
    \bigl(\Prop_i^t+\tfrac{x_i}{n}\bigr)-v_i(P_i^t)
=d_i^t+\frac{x_i}{n} & a\neq i.
\end{cases}
\end{align*}

If \(a=i\), then \(g^{t+1}\in P_i^{t+1}\), so it is \emph{not} missed by \(i\); If \(a\neq i\), then \(g^{t+1}\notin P_i^{t+1}\), so it \emph{is} missed by \(i\). Hence,
\[
U_i^{t+1}(a)=
\begin{cases}
U_i^t & a=i
\\
\max\{U_i^t,x_i\}=\widetilde U_i & a\neq i.
\end{cases}
\]

We now upper bound the deficits \(z_i^{t+1}(a)\).

\noindent\emph{Case \(a=i\).}
Since \([u]_+\ge u\) and \([\cdot]_+\) is monotone, \([u-b]_+\le [[u]_+-b]_+\) for every \(b\ge 0\) 
Hence
\[
[d_i^{t+1}(i)]_+
=
\Bigl[d_i^t-\Bigl(1-\frac1n\Bigr)x_i\Bigr]_+
\le
\Bigl[[d_i^t]_+-\Bigl(1-\frac1n\Bigr)x_i\Bigr]_+.
\]
If \(U_i^t=0\), then \(v_i(G^t)=v_i(P_i^t)\) and hence \(d_i^t\le 0\), so \(z_i^t=0\).
Moreover \(U_i^{t+1}(i)=0\) and \(d_i^{t+1}(i)=d_i^t-(1-\frac1n)x_i\le 0\), so \(z_i^{t+1}(i)=0\) as well.
Thus \eqref{eq:propx-shift-i} holds trivially. Assume henceforth \(U_i^t>0\).

Dividing by \(U_i^{t+1}(i)=U_i^t\) gives
\[
z_i^{t+1}(i)
\le
\Bigl[z_i^t-\Bigl(1-\frac1n\Bigr)\frac{x_i}{U_i^t}\Bigr]_+.
\]
Since \(\widetilde U_i=\max\{U_i^t,x_i\}\), we have \(\frac{x_i}{U_i^t}\ge \frac{x_i}{\widetilde U_i}=s\).
Subtracting a larger number only decreases the \([\,\cdot\,]_+\) value, so
\begin{equation}
\label{eq:propx-shift-i}
z_i^{t+1}(i)\le \Bigl[z_i^t-\Bigl(1-\frac1n\Bigr)s\Bigr]_+.
\end{equation}
\noindent\emph{Case \(a\neq i\).}
Since \([u]_+\ge u\) and \([\cdot]_+\) is monotone, \([u+v]_+\le [u]_+ + v\) for every \(v\ge 0\). Thus
\[
[d_i^{t+1}(a)]_+
=
\Bigl[d_i^t+\frac{x_i}{n}\Bigr]_+
\le
[d_i^t]_+ + \frac{x_i}{n}.
\]
Divide by \(U_i^{t+1}(a)=\widetilde U_i\) and use \(\widetilde U_i\ge U_i^t\) (so \(\sdiv{[d_i^t]_+}{\widetilde U_i}\le \sdiv{[d_i^t]_+}{U_i^t}=z_i^t\)):
\begin{equation}
\label{eq:propx-shift-noti}
z_i^{t+1}(a)
\le
z_i^t+\frac{1}{n}\cdot \sdiv{x_i}{\widetilde U_i}
=
z_i^t+\frac{s}{n}
\leq 
[z_i^t+\frac{s}{n}]_+
\qquad(a\neq i).
\end{equation}

Equations \eqref{eq:propx-shift-i}--\eqref{eq:propx-shift-noti} fit the required shift form
\eqref{eq:first-moment-shift} with reference actions \(\hat a^{t+1}_{a}=a\in[n]\) 
and the following increments:
\begin{align}
\label{eq:first-moment-shifts-propc}
\Delta_i^{t+1}(i)= -\Bigl(1-\frac1n\Bigr)s,
\qquad
\Delta_i^{t+1}(a)= \frac{s}{n}\quad(a\neq i).
\end{align}

It remains to prove the local balance conditions for the increments in \eqref{eq:first-moment-shifts-propc}.

\begin{align*}
\sum_{a=1}^n \Delta_i^{t+1}(a)
&=
-\Bigl(1-\frac1n\Bigr)s + (n-1)\cdot\frac{s}{n}
=0,
\\[2pt]
\sum_{a=1}^n \bigl(\Delta_i^{t+1}(a)\bigr)^2
&=
\Bigl(1-\frac1n\Bigr)^2 s^2+(n-1)\Bigl(\frac{s}{n}\Bigr)^2
=
\frac{n-1}{n}\,s^2
\le 1.
\end{align*}
Thus, the local balance conditions hold with \(\sigma^2=1\).
\end{proof}
The local balance lemma is the only model-specific ingredient needed for the general theorem.
Combining it with the implication from \(z_i^t\le c\) to \(\PROPax{c}\) gives the following perpetual item-allocation guarantee.
\begin{corollary}[Perpetual \texorpdfstring{$\PROPax{\cdot}$}{PROPxc} guarantee for online items]
\label{cor:propxc-application}
In Case~\ref{subsec:ex-propxc}, we have \(m=n\), \(n\) reference actions, and \(\sigma^2=1\).
Thus, by \Cref{cor:ct-feasible-general} 
(with \(p=\ln(2m)=\ln(2n)\))
, 
for every time \(t\) and agent \(i\in[n]\),
\[
v_i(P_i^t)\ \ge\ \Prop_i^t - c_{t}^{\PROPx}\,U_i^t,
\qquad
c_{t}^{\PROPx}
:=m^{1/(2p)}\cdot \sqrt{\,4p^2+\frac{2\sqrt e\,p\,t}{n}\,}.
\]
\end{corollary}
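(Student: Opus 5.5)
This corollary is obtained by plugging the Case~\ref{subsec:ex-propxc} instantiation into the general machinery; the plan is to check the hypotheses of \Cref{cor:ct-feasible-general} and then translate its conclusion back through the normalization lemma.

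\textbf{Step 1: the instance satisfies the local balance conditions.} The tracked requirements are $Q=[n]$, so $m=n$, and $m\ge 2$ holds because we assume $n\ge 2$. The reference actions are $\hat a^{t+1}_k=k$ for $k\in[n]$, i.e., the $n$ possible recipients, so the number of reference actions equals the parameter $n$ of \Cref{def:first-moment}. The deficits start at zero: $G^0=\emptyset$, so $d_i^0=0$, $U_i^0=0$, and hence $z_i^0=0$. The local balance lemma for normalized proportional deficits gives, in every round, shifts satisfying \eqref{eq:first-moment-shift}, the mean bound with equality, and the squared-size bound $\frac{n-1}{n}s^2\le 1$. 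Hence we may take $\sigma^2=1$, which satisfies the required $\sigma^2\ge1$.

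\textbf{Step 2: apply the general corollary.} Run the $p$-potential rule with $p=\ln(2n)\ge\ln 4>1$, so $p\ge 1$ as \Cref{cor:ct-feasible-general} requires. That corollary yields $\mathcal D^t(c_t)=\emptyset$ for every $t\ge 0$, where $c_t=\sigma\,m^{1/(2p)}\sqrt{4p^2+2\sqrt e\,p\,t/n}$. With $\sigma=1$ this is exactly $c_t^{\PROPx}$. So for every $i$ and every $t$ we have $z_i^t\le c_t^{\PROPx}$.

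\textbf{Step 3: translate back to values.} The lemma stating that a normalized proportionality deficit implies $\PROPax{c}$, applied with $c=c_t^{\PROPx}$, gives $d_i^t\le c_t^{\PROPx}U_i^t$. Rewriting $d_i^t=\Prop_i^t-v_i(P_i^t)$ gives the displayed inequality.

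The only point needing care is the bookkeeping behind these identifications: the rule's reference-action count must match the $n$ in \Cref{def:first-moment}, the bound must be valid at $t=0$, and the edge cases $U_i^t=0$ must be handled. The last of these is already covered inside the two cited lemmas, so no real obstacle remains.
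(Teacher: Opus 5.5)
Your proposal is correct and follows the paper's own argument. Like the paper, you take $m=n$ and $\sigma^2=1$ from the local balance lemma for normalized proportional deficits, invoke the explicit perpetual $c_t$-fairness corollary with $p=\ln(2n)$, and translate $z_i^t\le c_t^{\PROPx}$ back to $\Prop_i^t-v_i(P_i^t)\le c_t^{\PROPx}U_i^t$ via the normalization lemma. Your extra checks ($z_i^0=0$, $p\ge1$, $m\ge2$) are accurate and only make explicit what the paper leaves implicit.
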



\paragraph{Faster per round evaluation (why it is \(O(n)\) and not \(O(n^2)\)).}
In any round, when agent $i$ does not receive an item, his updated deficit, denoted by $z_i^{\text{miss}}$, does not depend on the computation of any other deficit. Similarly, when agent $i$ receives an item, his updated deficit, denoted by $z_i^{\text{recv}}$, does not depend on the computation of any other deficit. In addition, recall that the per-round potential is a sum of per-agent deficits,
\(\Phi^t=\sum_{i=1}^n f(z_i^t)\).
Therefore, in each round, we can compute once the vector of $z_i^{\text{miss}}$ of all the agents, $z^{\text{miss}}=(z_1^{\text{miss}},\dots,z_n^{\text{miss}})$, and a baseline potential
\(
\Phi^{t+1}_{\text{miss}}=\sum_{i=1}^n f(z_i^{\text{miss}})
\). 
Then, evaluating a candidate recipient agent \(a\) is just a \emph{single-term swap}:
\(
\Phi^{t+1}(a)=\Phi^{t+1}_{\text{miss}}-f(z_a^{\text{miss}})+f(z_a^{\text{recv}}).
\)
That is, \(O(1)\) work per candidate, hence scanning all \(n\) candidates costs \(O(n)\) per round.

\begin{fullversion}
\begin{example}
Suppose the miss update gives contributions
\((f(z_1^{\text{miss}}),f(z_2^{\text{miss}}),f(z_3^{\text{miss}}),f(z_4^{\text{miss}}))=(5,2,7,4)\),
so \(\Phi^{t+1}_{\text{miss}}=5+2+7+4=18\).
If candidate \(a=3\) receives the item, only agent \(3\)'s contribution changes, say to \(f(z_3^{\text{recv}})=1\).
Then
\(
\Phi^{t+1}(3)=18-7+1=12.
\)
No other term is touched. Doing this swap for each \(a\in\{1,2,3,4\}\) is one \(O(1)\) update each \(\Rightarrow O(n)\).
\end{example}
\end{fullversion}

\subsection{Case 2: Approximately-proportional public decision-making (PDM)}
\label{subsec:ex-pdm}
There is a finite nonempty set of feasible outcomes (candidates) \(C\).
In each round \(t{+}1\), the decision maker chooses one outcome \(o^{t+1}\in C\) irrevocably, so
\(A^{t+1}=C\).
Each agent \(i\in[n]\) has a (time-varying) valuation function
\(v_i^{t+1}:C\to \mathbb{R}_{\ge 0}\), and utilities add across rounds:
\(
u_i^t \;:=\; \sum_{r=1}^t v_i^{r}(o_r).
\)
Following \citet{conitzer2017fair}, define the per round attainable maximum
\(
M_i^{t+1} \;:=\; \max_{o\in C} v_i^{t+1}(o),
\)
and the proportional share
\(
\Prop_i^t \;:=\; \frac{1}{n}\sum_{r=1}^t M_i^r.
\)
Again, this is a full-history requirement: all rounds \(1,\ldots,t\) enter the proportional share with equal weight.
Define the (unscaled) proportionality deficit
\(
d_i^t \;:=\; \Prop_i^t - u_i^t,
\)
and the running-max scale
\(
V_i^t \;:=\; \max_{r\le t} M_i^r,
\qquad\text{with }V_i^0:=0.
\)
The scale-based proportionality condition \(\PROPax{c}\) at time \(t\) is
\(
d_i^t \ \le\ c\,V_i^t
\)
for all $i\in[n]$,
equivalently \(u_i^t \ge \Prop_i^t - c\,V_i^t\).

We instantiate the framework with one tracked requirement per agent \(Q=[n]\), so \(m=|Q|=n\).
Define the deficit at time \(t\), and the hypothetical next-round deficits, as:
\begin{align}
\label{eq:pdm-deficit}
z_i^t \;:=\; \sdiv{[d_i^t]_+}{V_i^t}
~~
(i\in[n]) 
&&
z_i^{t+1}(o)
\;:=\;
\sdiv{\bigl[d_i^{t+1}(o)\bigr]_+}{V_i^{t+1}}
~~
(i\in[n], o\in C),
\end{align}
where \(V_i^{t+1}:=\max\{V_i^t,\,M_i^{t+1}\}\) (note that, in contrast to \(U_i^{t+1}\) in the item-allocation setting, here \(V_i^{t+1}\) depends only on round \(t{+}1\), not on the chosen \(o\)).

Since \(V_i^t\ge 0\), the constraint \(d_i^t\le c\,V_i^t\) is equivalent to \([d_i^t]_+\le c\,V_i^t\).
If \(V_i^t>0\) this is exactly \(z_i^t\le c\).
If \(V_i^t=0\), then \(M_i^r=0\) for all \(r\le t\), hence \(\Prop_i^t=u_i^t=0\) and \(z_i^t=0\).
Therefore \(\PROPax{c}\) holds at time \(t\) iff \(z_i^t\le c\) for all \(i\), i.e.\ \(\mathcal{D}^{t}(c)=\emptyset\).

\begin{lemmarep}[Local balance conditions for PDM]
\label{lem:pdm-moments}
For each round \(t{+}1\) and each agent \(i\in[n]\), the next-step deficits \eqref{eq:pdm-deficit}
satisfy the local balance conditions
(\ref{eq:first-moment-mean}, \ref{eq:second-moment-sigma})
with \(n\) reference actions and \(\sigma^2=1\).
\end{lemmarep}

\begin{proofsketch}
The calculation mirrors Case~\ref{subsec:ex-propxc}, with the missed-item scale replaced by the running-max
scale \(V_i^t\) (which does not depend on the chosen outcome \(o\)).
Full details are deferred to the appendix.
\end{proofsketch}

\begin{proof}
We now verify the local balance conditions.
Fix a round \(t{+}1\) and a tracked requirement \(i\in[n]\).
Abbreviate
\[
M:=M_i^{t+1},\qquad
V:=V_i^{t+1}=\max\{V_i^t,M\},\qquad
s:=\sdiv{M}{V}\in[0,1].
\]
For any outcome \(o\in C\), we have
\[
d_i^{t+1}(o)
=
\bigl(\Prop_i^t+\tfrac{1}{n}M\bigr)-\bigl(u_i^t+v_i^{t+1}(o)\bigr)
=
d_i^t+\frac{1}{n}M - v_i^{t+1}(o).
\]

\noindent\emph{Reference actions.}
For each \(k\in[n]\), choose an arbitrary favorite outcome of agent \(k\):
\[
\hat a^{t+1}_{k}\ \in\ \arg\max_{o\in C} v_k^{t+1}(o).
\]
These are our \(n\) reference actions.

\noindent\emph{Bound under \(\hat a^{t+1}_{i}\) (agent \(i\)'s favorite).}
Since \(v_i^{t+1}(\hat a^{t+1}_{i})=M\),
\[
d_i^{t+1}(\hat a^{t+1}_{i})
=
d_i^t-\Bigl(1-\frac1n\Bigr)M.
\]
Using \([x-b]_+\le [[x]_+-b]_+\) for \(b\ge 0\),
\[
\bigl[d_i^{t+1}(\hat a^{t+1}_{i})\bigr]_+
=
\Bigl[d_i^t-\Bigl(1-\frac1n\Bigr)M\Bigr]_+
\le
\Bigl[[d_i^t]_+-\Bigl(1-\frac1n\Bigr)M\Bigr]_+.
\]
Divide by \(V\) (if \(V=0\), then \(M=0\) and everything below is \(0\) by \(\sdiv{\cdot}{\cdot}\)):
\begin{align*}
z_i^{t+1}(\hat a^{t+1}_{i})
\le
\sdiv{\Bigl[[d_i^t]_+-\bigl(1-\frac1n\bigr)M\Bigr]_+}{V}
=
\Bigl[\sdiv{[d_i^t]_+}{V}-\Bigl(1-\frac1n\Bigr)\sdiv{M}{V}\Bigr]_+
\le
\Bigl[z_i^t-\Bigl(1-\frac1n\Bigr)s\Bigr]_+,
\end{align*}
where the last inequality uses \(V\ge V_i^t\), hence
\(\sdiv{[d_i^t]_+}{V}\le \sdiv{[d_i^t]_+}{V_i^t}=z_i^t\).

\noindent\emph{Bound under \(\hat a^{t+1}_{k}\) for \(k\neq i\).}
Since valuations are nonnegative, \(v_i^{t+1}(\hat a^{t+1}_{k})\ge 0\), so
\[
d_i^{t+1}(\hat a^{t+1}_{k})
=
d_i^t+\frac{1}{n}M - v_i^{t+1}(\hat a^{t+1}_{k})
\le
d_i^t+\frac{1}{n}M.
\]
Using \([x+b]_+\le [x]_+ + b\) for \(b\ge 0\),
\[
\bigl[d_i^{t+1}(\hat a^{t+1}_{k})\bigr]_+
\le
[d_i^t]_+ + \frac{1}{n}M,
\]
and dividing by \(V\) gives
\[
z_i^{t+1}(\hat a^{t+1}_{k})
\le
\sdiv{[d_i^t]_+}{V}+\frac{1}{n}\sdiv{M}{V}
\le
z_i^t+\frac{s}{n}
\le
\Bigl[z_i^t+\frac{s}{n}\Bigr]_+
\qquad(k\neq i),
\]
again using \(V\ge V_i^t\).

The bounds match the shift form \eqref{eq:first-moment-shift} with increments
\[
\Delta_i^{t+1}(i)= -\Bigl(1-\frac1n\Bigr)s,
\qquad
\Delta_i^{t+1}(k)= \frac{s}{n}\quad(k\neq i).
\]
\[
\sum_{k=1}^n \Delta_i^{t+1}(k)
=
-\Bigl(1-\frac1n\Bigr)s+(n-1)\frac{s}{n}
=0,
\quad
\sum_{k=1}^n \bigl(\Delta_i^{t+1}(k)\bigr)^2
=
\Bigl(1-\frac1n\Bigr)^2 s^2+(n-1)\Bigl(\frac{s}{n}\Bigr)^2
=
\frac{n-1}{n}\,s^2
\le 1.
\]
Hence the local balance conditions hold with \(\sigma^2=1\).
\end{proof} 

Thus the public-decision deficits satisfy the same local balance structure as in the item-allocation case.
Applying the general theorem yields the following prefix-wise proportionality guarantee for public decisions.

\begin{corollary}[Perpetual \texorpdfstring{$\PROPax{\cdot}$}{PROPxc} guarantee for public decisions]
\label{cor:pdm-application}
In Case~\ref{subsec:ex-pdm}, we have \(m=n\).
By Lemma~\ref{lem:pdm-moments}, the instance has \(n\) reference actions and \(\sigma^2=1\).
Thus, by \Cref{cor:ct-feasible-general},
(with \(p=\ln(2m)\))
for every time \(t\) and agent \(i\in[n]\),
\(
\qquad u_i^t \ \ge\ \Prop_i^t - c_{t}^{\textnormal{PDM}}\,V_i^t,
\qquad
c_{t}^{\textnormal{PDM}}
:=m^{1/(2p)}\cdot \sqrt{\,4p^2+\frac{2\sqrt e\,p\,t}{n}\,}.
\) 
\end{corollary}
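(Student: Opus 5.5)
This corollary follows by plugging the public-decision instance into the general machinery; no new analytic work is needed beyond checking that the hypotheses of \Cref{cor:ct-feasible-general} are met. The plan is to (i) identify the framework parameters, (ii) invoke the general fairness guarantee, and (iii) translate the conclusion $\mathcal{D}^{t}(c_t)=\emptyset$ back into the inequality $u_i^t\ge \Prop_i^t-c_t^{\mathrm{PDM}}V_i^t$.

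\emph{Parameters.} The tracked requirements are $Q=[n]$, one per agent, so $m=n$. By \Cref{lem:pdm-moments}, every round $t+1$ admits $n$ reference actions, namely each agent's favorite outcome $\hat a^{t+1}_k\in\arg\max_{o\in C}v_k^{t+1}(o)$. With these, the shift form \eqref{eq:first-moment-shift}, the mean bound \eqref{eq:first-moment-mean} and the squared-size bound \eqref{eq:second-moment-sigma} hold with $\sigma^2=1$. The initial deficits are zero, since $u_i^0=\Prop_i^0=0$ and $V_i^0=0$, so the convention $z_q^0=0$ is satisfied. The standing assumption $m\ge2$ holds because $n\ge 2$.

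\emph{Invoking the general result.} Run the $p$-potential rule over $A^{t+1}=C$ with $p=\ln(2m)$. \Cref{cor:ct-feasible-general} with $\sigma=1$ then gives, for every $t\ge0$, $z_i^t\le c_t:=m^{1/(2p)}\sqrt{4p^2+2\sqrt e\,p\,t/n}$ for all $i$. This is exactly $c_t^{\mathrm{PDM}}$.

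\emph{Translation.} Here I use the equivalence established just before \Cref{lem:pdm-moments}.
\begin{itemize}
\item If $V_i^t>0$, then $z_i^t\le c$ means $[d_i^t]_+\le cV_i^t$. Hence $d_i^t\le cV_i^t$, i.e.\ $u_i^t\ge\Prop_i^t-cV_i^t$.
\item If $V_i^t=0$, then $M_i^r=0$ for all $r\le t$. So $\Prop_i^t=0\le u_i^t$, and the inequality holds trivially.
\end{itemize}

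The only point requiring care, which I regard as the main (minor) obstacle, is confirming that the general theorem applies even though the scale $V_i^{t+1}$ changes over time. The deficit functions may depend on the whole history, and local balance is verified round by round, so a time-varying normalization is allowed. The argument of \Cref{lem:pdm-moments} already absorbs the scale change through the inequality $\sdiv{[d_i^t]_+}{V}\le z_i^t$ for $V\ge V_i^t$. No further argument is needed.
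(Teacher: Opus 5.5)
Your proposal is correct and follows the paper's route exactly. You identify $m=n$ and $\sigma^2=1$ via \Cref{lem:pdm-moments}, and apply \Cref{cor:ct-feasible-general} with $p=\ln(2m)$; note that $p=\ln(2n)\ge\ln 4>1$, as the framework requires. You then translate $z_i^t\le c_t^{\textnormal{PDM}}$ back to $u_i^t\ge\Prop_i^t-c_t^{\textnormal{PDM}}V_i^t$ using the equivalence stated just before the lemma, including the degenerate case $V_i^t=0$.
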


\begin{fullversion}
\paragraph{Back to the court-agenda example.}
Interpret each term (or decision epoch) as a round, and let $C$ be the set of feasible agenda choices in that round (e.g., the set of petitions that could be selected for review).
Treat each constituency/issue-area (or any other fairness-relevant category) as an ``agent'' with value $v_i^{t}(o)$ for selecting outcome $o\in C$ in round~$t$.
Then $M_i^{t}=\max_{o\in C}v_i^{t}(o)$ represents the largest stake faced by group~$i$ in that round, and our $\PROPax{c_{t}^{\textnormal{PDM}}}$ guarantee bounds each group's cumulative deficit relative to this moving reference value at every prefix~$t$.
\end{fullversion}

\subsection{Case 3: Approximately envy-free online item allocation}
\label{subsec:ex-efc}

We return to the full-history online item-allocation model of \Cref{subsec:ex-propxc}.
For agents \(i\neq j\), define directed envy at time \(t\) as
\(
\Envy_{i\to j}^t := v_i(P_j^t)-v_i(P_i^t). \quad
\)
We use the scale
\(
\quad s_{(i,j)}^t := \max\{v_i(g): g\in P_j^t\},
\quad\text{with }s_{(i,j)}^t:=0\text{ if }P_j^t=\emptyset. \quad
\)
The $\EFax{c}$ condition is: \(\Envy_{i\to j}^t \le c\cdot s_{(i,j)}^t\).

We instantiate the framework with one tracked requirement for each ordered pair:
\[
Q:=\{(i,j)\in[n]\times[n]: i\neq j\},
\qquad
m=|Q|=n(n-1).
\]

For each tracked requirement \(q=(i,j)\in Q\), define the normalized envy at time \(t\) by
\[
z_{(i,j)}^{t}
\;:=\;
\sdiv{\bigl[\Envy_{i\to j}^{\,t}\bigr]_+}{s_{(i,j)}^t}.
\]
In round \(t{+}1\), let \(z_{(i,j)}^{t+1}(r)\) denote the value after hypothetically allocating \(g^{t+1}\) to agent \(r\):
\begin{equation}
\label{eq:ef-deficit}
z_{(i,j)}^{t+1}(r)
\;:=\;
\sdiv{\bigl[\Envy_{i\to j}^{\,t+1}(r)\bigr]_+}{s_{(i,j)}^{t+1}(r)},
\qquad
s_{(i,j)}^{t+1}(r):=
\begin{cases}
\max\{s_{(i,j)}^t,\ v_i(g^{t+1})\}, & r=j,\\
s_{(i,j)}^t, & r\neq j.
\end{cases}
\end{equation}

Since \(s_{(i,j)}^t\ge 0\), the constraint \(\Envy_{i\to j}^t\le c\,s_{(i,j)}^t\) is equivalent to
\([\Envy_{i\to j}^t]_+\le c\,s_{(i,j)}^t\).
If \(s_{(i,j)}^t>0\) this is exactly \(z_{(i,j)}^t\le c\).
If \(s_{(i,j)}^t=0\), then \(v_i(g)=0\) for every \(g\in P_j^t\), hence \(v_i(P_j^t)=0\) and (under nonnegative valuations) \(\Envy_{i\to j}^t\le 0\), so \(z_{(i,j)}^t=0\) as well.
Therefore \(\EFax{c}\) holds at time \(t\) iff \(z_{(i,j)}^t\le c\) for all \((i,j)\), i.e.\ \(\mathcal{D}^{t}(c)=\emptyset\).

\begin{lemmarep}[Local balance conditions for max-normalized envy]
\label{lem:ef-moments}
For each round \(t{+}1\) and each ordered pair \((i,j)\in Q\), the next-step normalized envy functions \eqref{eq:ef-deficit}
satisfy the local balance conditions
(\ref{eq:first-moment-mean}, \ref{eq:second-moment-sigma})
with reference actions \(\hat a^{t+1}_{r}=r\in[n]\) and \(\sigma^2=2\).
\end{lemmarep}

\begin{proofsketch}
Fix a round \(t{+}1\) and a pair \((i,j)\).
Let \(x=v_i(g^{t+1})\) and \(\widetilde s:=\max\{s_{(i,j)}^t,x\}\). Set \(\alpha:=\sdiv{x}{\widetilde s}\in[0,1]\).
If the item is allocated to \(j\), normalized envy increases by at most \(\alpha\); if allocated to \(i\), it decreases by at least \(\alpha\) after truncation by \([\cdot]_+\); otherwise it is unchanged.
This yields a shift representation with increments \(+\alpha\), \(-\alpha\), and zeros, so the mean is \(0\) and the squared \(\ell_2\)-norm is at most \(2\).
\end{proofsketch}

\begin{proof}
Fix a round \(t{+}1\) and a tracked requirement \(q=(i,j)\in Q\).
Write \(x:=v_i(g^{t+1})\), \(\widetilde s:=\max\{s_{(i,j)}^t,x\}\), and \(\alpha:=\sdiv{x}{\widetilde s}\in[0,1]\).

Consider the three relevant recipient types:

\noindent\emph{Recipient \(r=j\) (give the item to the envied agent).}
Then \(v_i(P_j)\) increases by \(x\) while \(v_i(P_i)\) remains unchanged, so \(\Envy_{i\to j}\) increases by \(x\).
The scale updates to \(\widetilde s\).  If \(\widetilde s=0\), then \(x=0\), \(s_{(i,j)}^t=0\), and both sides below are \(0\). Otherwise,
Thus
\[
z_{(i,j)}^{t+1}(j)
\le
\frac{[\Envy_{i\to j}^{t}+x]_+}{\widetilde s}
\le
\frac{[\Envy_{i\to j}^{t}]_+}{\widetilde s}+\frac{x}{\widetilde s}
\le
z_{(i,j)}^t + \alpha.
\]
where the last inequality uses \(\widetilde s\ge s_{(i,j)}^t\).

\noindent\emph{Recipient \(r=i\) (give the item to the envious agent).}
Then \(v_i(P_i)\) increases by \(x\), so \(\Envy_{i\to j}\) decreases by \(x\).
The scale \(s_{(i,j)}\) does not increase, since the item is not added to \(P_j\).
If \(s_{(i,j)}^t=0\), then \(v_i(P_j^t)=0\), hence \(\Envy_{i\to j}^t\le0\), so \(z_{(i,j)}^t=0\). After giving the item to \(i\), envy only decreases, and therefore \(z_{(i,j)}^{t+1}(i)=0\le[z_{(i,j)}^t-\alpha]_+\).

If \(s_{(i,j)}^t>0\), then
\[
z_{(i,j)}^{t+1}(i)
=
\frac{[\Envy_{i\to j}^{t}-x]_+}{s_{(i,j)}^t}
\le
\left[z_{(i,j)}^t-\frac{x}{s_{(i,j)}^t}\right]_+
\le
\left[z_{(i,j)}^t-\alpha\right]_+,
\]
because \(s_{(i,j)}^t\le\widetilde s\), so \(x/s_{(i,j)}^t\ge x/\widetilde s=\alpha\).

\noindent\emph{Recipient \(r\notin\{i,j\}\).}
Neither \(P_i\) nor \(P_j\) changes for this pair, so both envy and scale stay unchanged:
\[
z_{(i,j)}^{t+1}(r)\le z_{(i,j)}^t.
\]

Therefore the shift form \eqref{eq:first-moment-shift} holds with reference actions \(\hat a^{t+1}_{r}=r\in[n]\) and increments
\[
\Delta_{(i,j)}^{t+1}(j)=\alpha,\qquad
\Delta_{(i,j)}^{t+1}(i)=-\alpha,\qquad
\Delta_{(i,j)}^{t+1}(r)=0\ \ (r\notin\{i,j\}).
\]
Clearly,
\[
\sum_{r=1}^n \Delta_{(i,j)}^{t+1}(r)=(-\alpha)+\alpha+0=0,
\qquad
\sum_{r=1}^n \bigl(\Delta_{(i,j)}^{t+1}(r)\bigr)^2 = \alpha^2+\alpha^2 = 2\alpha^2 \le 2.
\]
Hence the local balance conditions hold with \(\sigma^2=2\).
\end{proof}

The normalized envy deficits are therefore another direct instance of the generic framework.
Since \(z_{(i,j)}^t\le c\) is exactly the max-normalized envy condition, the general theorem gives the following guarantee.

\begin{corollary}[Perpetual max-normalized \texorpdfstring{$\mathrm{EF}$}{EF} guarantee]
\label{cor:ef-application}
In Case~\ref{subsec:ex-efc}, we have \(m=n(n-1)\).
By Lemma~\ref{lem:ef-moments}, the instance has \(n\) reference actions and \(\sigma^2=2\).
Thus, by \Cref{cor:ct-feasible-general} (with \(p=\ln(2m)\)), for every time \(t\) and all \(i\neq j\),
\[
\Envy_{i\to j}^t \ \le\ c_{t}^{EF}\,s_{(i,j)}^t,
\qquad
c_{t}^{EF}:=m^{1/(2p)}\cdot \sqrt{\,8p^2+\frac{4\sqrt e\,p\,t}{n}\,}.
\] 
\end{corollary}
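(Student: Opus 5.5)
The statement is a direct instantiation of \Cref{cor:ct-feasible-general}, so I will assemble three ingredients and check that the constants match. First, I record the parameters of Case~\ref{subsec:ex-efc}. There is one tracked requirement per ordered pair, so $m=n(n-1)\ge 2$ (using $n\ge 2$). The reference actions are the $n$ recipients $\hat a^{t+1}_r=r$. By Lemma~\ref{lem:ef-moments}, the local balance conditions \eqref{eq:first-moment-mean} and \eqref{eq:second-moment-sigma} hold in every round with $\sigma^2=2$. The deficits also start at zero as required, since all bundles are empty at $t=0$, so $\Envy^0_{i\to j}=0$ and $z^0_{(i,j)}=0$.

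Second, I apply \Cref{cor:ct-feasible-general} with $p=\ln(2m)\ge 1$ and $\sigma^2=2$. This gives $\mathcal D^t(c_t)=\emptyset$ for every $t$, with
\[
c_t=\sigma\,m^{1/(2p)}\sqrt{4p^2+\tfrac{2\sqrt e\,p\,t}{n}}
=m^{1/(2p)}\sqrt{8p^2+\tfrac{4\sqrt e\,p\,t}{n}},
\]
obtained by pulling $\sigma^2=2$ inside the square root. This is exactly $c_t^{EF}$, so $z^t_{(i,j)}\le c^{EF}_t$ for all ordered pairs $(i,j)$.

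Third, I translate this back to envy. This uses the equivalence established just before Lemma~\ref{lem:ef-moments}: $\Envy^t_{i\to j}\le c\,s^t_{(i,j)}$ holds iff $z^t_{(i,j)}\le c$. I will spell out the only delicate direction, the degenerate scale $s^t_{(i,j)}=0$. In that case every item in $P_j^t$ has value $0$ to $i$, so $v_i(P_j^t)=0$. By nonnegativity of valuations, $\Envy^t_{i\to j}\le 0=c\,s^t_{(i,j)}$. When $s^t_{(i,j)}>0$, the implication follows directly from $[\Envy]_+\le z\,s$.

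There is no real obstacle; the argument is bookkeeping. The only points to check are that the factor $\sigma$ is absorbed correctly into the displayed constant, and that the safe-division case $s=0$ is handled.
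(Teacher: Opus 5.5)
Your proposal is correct and follows the paper's route exactly. You substitute $m=n(n-1)$, $\sigma^2=2$ (from Lemma~\ref{lem:ef-moments}) and $p=\ln(2m)$ into \Cref{cor:ct-feasible-general}, absorb $\sigma=\sqrt2$ into the square root to obtain $c_t^{EF}$, and translate back using the equivalence $z_{(i,j)}^t\le c \iff \Envy_{i\to j}^t\le c\,s_{(i,j)}^t$ stated just before Lemma~\ref{lem:ef-moments}; your side checks ($m\ge2$, $p\ge1$, $\sigma^2\ge1$, zero initial deficits, and the degenerate case $s_{(i,j)}^t=0$) are precisely the hypotheses this instantiation needs.
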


\paragraph{Faster per round evaluation (why it is \(O(n^2)\), not \(O(n^3)\)).}
Here \(Q=\{(i,j)\in[n]\times[n]: i\neq j\}\), so the raw sum is
\(\Phi^t=\sum_{i\neq j} f\!\bigl(z_{(i,j)}^{t}\bigr)\).
Allocating \(g^{t+1}\) to agent \(a\) changes only \(P_a\), so \emph{only pairs that involve \(a\)}
can change: \((i,a)\) and \((a,i)\) for all \(i\neq a\).
All pairs \((i,j)\) with \(i\neq a\neq j\) are unchanged, i.e.\ \(z_{(i,j)}^{t+1}(a)=z_{(i,j)}^t\).
Thus, if we maintain the current pair terms \(f(z_{(i,j)}^t)\), then for a fixed candidate \(a\) we can compute
\(
\Phi^{t+1}(a)=\Phi^{t}
-\sum_{i\neq a}\bigl(f(z_{(i,a)}^{t})+f(z_{(a,i)}^{t})\bigr)
+\sum_{i\neq a}\bigl(f(z_{(i,a)}^{t+1}(a))+f(z_{(a,i)}^{t+1}(a))\bigr),
\)
touching only \(2(n-1)=O(n)\) pairs. Scanning all \(n\) candidates therefore takes \(O(n^2)\) per round.

\begin{fullversion}
\begin{example}The ordered pairs are exactly the 6 terms: (1,2),(2,1),(1,3),(3,1),(2,3),(3,2). If candidate \(a=2\) receives the item, the only pairs that may change are those containing \(2\): (1,2),(2,1),(2,3),(3,2) (4 terms).
The remaining pairs (1,3),(3,1), 2 terms do not involve \(2\), hence are unchanged and need not be recomputed.
So evaluating candidate \(a=2\) touches only \(2(n-1)=4\) pair terms, i.e.\ \(O(n)\) work.
Doing this for each candidate \(a\in\{1,2,3\}\) costs \(n\cdot O(n)=O(n^2)\) per round.
\end{example}
\end{fullversion}

\subsection{Classical \texorpdfstring{$\EFc$}{EFc}}
\label{subsec:ex-efc-classical}
This application is technically different from the previous three applications. It is included to show that the framework can also recover a threshold-based route to classical \(\EFc\), but unlike the previous applications it assumes that the set of possible item values is known in advance.

The standard definition of \(\EFc\) requires that for every \(i\neq j\), agent \(i\)'s envy toward \(j\) can be eliminated
by removing at most \(c\) goods from \(P_j^t\).
For additive valuations this is equivalent to
\[
[\Envy_{i\to j}^t]_+\ \le\ \sum_{g\in B_{(i,j)}^t(c)} v_i(g),
\]
where \(B_{(i,j)}^t(c)\) denotes the \(c\) goods in \(P_j^t\) with largest \(v_i\)-value
(padding with \(0\)-valued dummy goods if fewer than \(c\) exist).


Without restricting the possible item values in advance, no online algorithm can guarantee
classical \(\EFc\) with \(c=o(t)\) against an adaptive adversary: the exact
worst-case parameter at horizon \(T\) is \(\lceil T/n\rceil\); see
\Cref{thm:classical-efc-linear-lb}. The result below gives a complementary
sublinear guarantee when the finite pool of potential item values is known in
advance. It is perpetual, but not fully online, because it
requires this value pool as prior information.

We denote the finite set of potential item values by \(\Theta\) and let \(L=|\Theta|\). We assume that every realized value \(v_i(g)\), for every agent \(i\) and every item \(g\), belongs to \(\Theta\). For example, if item values are limited to integers between $0$ and $1000$ (as in the popular fair division website \url{spliddit.org}), then we can take $L=1001$.

\paragraph{Translation to our framework (threshold deficits).}
Define the set of tracked requirements as the following set of triples:
\[
Q\ :=\ \{(i,j,\tau): i,j\in[n],\ i\neq j,\ \tau\in\Theta\},
\qquad m=|Q|= n(n-1)L.
\]

For a threshold \(\tau \ge 0\), define the \(\tau\)-\emph{count} of agent \(j\)'s bundle as seen by agent \(i\):
\[
C_{i\rightarrow j,\tau}^t \ :=\ \bigl|\{g\in P_j^t:\ v_i(g)\ge \tau\}\bigr|.
\]
The deficit of tracked requirement $(i,j,\tau)$ at time $t$ is defined as:
\(
d_{i\to j,\tau}^t \ :=\ C_{i\rightarrow j,\tau}^t\ -\ C_{i\rightarrow i,\tau}^t
\quad
z_{i\to j,\tau}^t \ :=\ [d_{i\to j,\tau}^t]_+.
\) For a candidate recipient \(a\in[n]\), define the hypothetical post-round deficits
\(
z_{(i,j,\tau)}^{t+1}(a)\ :=\ \bigl[d_{i\to j,\tau}^{\,t+1}(a)\bigr]_+,
\quad (i,j,\tau)\in Q.
\) We claim that $c$-fairness (bounding all deficits by \(c\)) implies \(\mathrm{EF}
_{\lceil c\rceil}\).

\begin{lemmarep}[Threshold deficits imply \texorpdfstring{$\EFc$}{EFc}]
\label{lem:threshold-to-efc}
Fix \(t\ge 0\) and \(c\ge 0\). If for every \((i,j,\tau)\in Q\),
\(
z_{i\to j,\tau}^t=[d_{i\to j,\tau}^t]_+\ \le\ c,
\)
then the allocation at time \(t\) is \(\EFparam{\lceil c\rceil}\), i.e.\ for all \(i\neq j\), letting \(B:=B_{(i,j)}^t(\lceil c\rceil)\), 
\(
[\Envy_{i\to j}^t]_+\ \le\ \sum_{g\in B} v_i(g).
\)
\end{lemmarep}
\begin{proofsketch}
Let $k=\lceil c\rceil$. The condition $[d_{i\to j,\tau}^t]_+\le c$ implies
$C_{i\rightarrow j,\tau}^t - C_{i\rightarrow i,\tau}^t \le k$ for every threshold $\tau\in\Theta$.
Sort $i$'s values of $P_j^t$ as $x_1\ge x_2\ge\cdots$ and of $P_i^t$ as $y_1\ge y_2\ge\cdots$ (pad by zeros).
If $x_{k+r}>y_r$ for some $r$, then at $\tau=x_{k+r}$ agent $j$ has at least $k+r$ items $\ge\tau$ while agent $i$
has at most $r-1$, giving a deficit of at least $k+1$, which is a contradiction. Hence $x_{k+r}\le y_r$ for all $r$, so the value of
$j$ after removing its top $k$ items is at most $v_i(P_i^t)$, and thus $i$'s envy is at most the value of those $k$ items.
\end{proofsketch}
\begin{proof}
Fix agents $i\neq j$ and let $k:=\lceil c\rceil$.

\paragraph{Step 1: deficit bound inequality.}
For every threshold $\tau\in\Theta$, the assumption $z_{i\to j,\tau}^t=[d_{i\to j,\tau}^t]_+\le c\le k$ implies
\begin{equation}\label{eq:d-upper-k}
d_{i\to j,\tau}^t
\;=\;
C_{i\rightarrow j,\tau}^t - C_{i\rightarrow i,\tau}^t
\;\le\;
k.
\end{equation}
(Indeed, if $d_{i\to j,\tau}^t>k$, then $[d_{i\to j,\tau}^t]_+=d_{i\to j,\tau}^t>k$, contradicting
$[d_{i\to j,\tau}^t]_+\le k$.)

\paragraph{Step 2: sorted value lists.}
Let $P_j^t$ be the (finite) bundle of agent $j$ at time $t$, and write it as
$P_j^t=\{g^{(j)}_1,\dots,g^{(j)}_{\ell_j}\}$ where $\ell_j:=|P_j^t|$.
Consider the multiset of real numbers
\[
\{\,v_i(g): g\in P_j^t\,\}.
\]
Sort these numbers in nonincreasing order (breaking ties arbitrarily) and denote the resulting list by
\[
x_1 \ge x_2 \ge \cdots \ge x_{\ell_j}.
\]
For convenience, extend this to an infinite sequence by defining $x_r:=0$ for all $r>\ell_j$.

Similarly, let $P_i^t=\{g^{(i)}_1,\dots,g^{(i)}_{\ell_i}\}$ where $\ell_i:=|P_i^t|$,
and sort the multiset $\{\,v_i(g): g\in P_i^t\,\}$ in nonincreasing order to obtain
\[
y_1 \ge y_2 \ge \cdots \ge y_{\ell_i},
\]
and define $y_r:=0$ for all $r>\ell_i$.

With this notation,
\[
v_i(P_j^t)=\sum_{r\ge 1} x_r
\qquad\text{and}\qquad
v_i(P_i^t)=\sum_{r\ge 1} y_r,
\]
where the sums are actually finite because all but finitely many terms are $0$.

\paragraph{Step 3: rank inequality.}
We claim that for every integer $r\ge 1$,
\begin{equation}\label{eq:key-rank}
x_{k+r}\ \le\ y_r.
\end{equation}

\emph{Proof of the claim.}
If $x_{k+r}=0$, then \eqref{eq:key-rank} holds because $y_r\ge 0$.
Assume now that $x_{k+r}>0$ and suppose for contradiction that $x_{k+r}>y_r$.
Set $\tau:=x_{k+r}$ (note $\tau>0$). Since \(\tau\) is a realized item value, \(\tau\in\Theta\).

By definition of sorting, at least the first $k+r$ numbers in the list $(x_1,x_2,\dots)$ are $\ge \tau$,
so there are at least $k+r$ goods in $P_j^t$ with $v_i(g)\ge\tau$:
\[
C_{i\rightarrow j,\tau}^t \ \ge\ k+r.
\]
On the other hand, $y_r<\tau$ means that \emph{at most} $r-1$ goods in $P_i^t$ have $v_i(g)\ge\tau$:
\[
C_{i\rightarrow i,\tau}^t \ \le\ r-1.
\]
Therefore,
\[
d_{i\to j,\tau}^t
=
C_{i\rightarrow j,\tau}^t - C_{i\rightarrow i,\tau}^t
\ \ge\
(k+r)-(r-1)
=
k+1,
\]
contradicting \eqref{eq:d-upper-k}. This proves \eqref{eq:key-rank}. \hfill$\diamond$

\paragraph{Step 4: \(\EFparam{k}\) bound.}
Summing \eqref{eq:key-rank} over all $r\ge 1$ gives
\[
\sum_{r\ge 1} x_{k+r} \ \le\ \sum_{r\ge 1} y_r \;=\; v_i(P_i^t).
\]
But $\sum_{r\ge 1} x_{k+r} = \sum_{r\ge 1} x_r - \sum_{r=1}^k x_r = v_i(P_j^t)-\sum_{r=1}^k x_r$,
so
\[
v_i(P_j^t) - \sum_{r=1}^k x_r \ \le\ v_i(P_i^t),
\]
i.e.
\[
v_i(P_j^t)-v_i(P_i^t)\ \le\ \sum_{r=1}^k x_r.
\]
Taking positive parts,
\[
[\Envy_{i\to j}^t]_+
=
[v_i(P_j^t)-v_i(P_i^t)]_+
\ \le\
\sum_{r=1}^k x_r.
\]

Finally, by definition, $B_{(i,j)}^t(k)$ is the multiset of the $k$ goods in $P_j^t$ with largest $v_i$-values,
padding with $0$-valued dummy goods if $|P_j^t|<k$; therefore its total $v_i$-value is exactly
$\sum_{r=1}^k x_r$. Hence
\[
[\Envy_{i\to j}^t]_+
\ \le\
\sum_{g\in B_{(i,j)}^t(k)} v_i(g),
\qquad\text{with }k=\lceil c\rceil.
\]
Since $i\neq j$ was arbitrary, the allocation at time $t$ is $\EFparam{\lceil c\rceil}$.
\end{proof}

It remains to check that these threshold deficits satisfy the local balance conditions.
The key point is simple: for a fixed threshold \(\tau\), the new item either crosses the threshold or it does not; if it does, only the counts of the recipient can change.

\begin{lemmarep}[Local balance conditions for threshold deficits]
\label{lem:threshold-moments}
For each round \(t{+}1\) and each tracked requirement \(q=(i,j,\tau)\in Q\),
the next-step deficit functions satisfy the local balance conditions
(\ref{eq:first-moment-mean}, \ref{eq:second-moment-sigma}) with \(n\) reference actions and \(\sigma^2=2\).
\end{lemmarep}
\begin{proofsketch}
Fix a tracked requirement $q=(i,j,\tau)$ and write $s:=\mathbf{1}[v_i(g^{t+1})\ge \tau]\in\{0,1\}$.
Only allocating the new item to $i$ or to $j$ can change the deficit
$d_{i\to j,\tau}=C_{i\rightarrow j,\tau}-C_{i\rightarrow i,\tau}$:
if the item goes to $i$ then $C_{i\rightarrow i,\tau}$ increases by $s$ so $d$ decreases by $s$; if it goes to $j$
then $C_{i\rightarrow j,\tau}$ increases by $s$ so $d$ increases by $s$; otherwise $d$ stays unchanged.
Taking positive parts yields an update of the form
$z_q^{t+1}(a)\le [z_q^t+\Delta_q^{t+1}(a)]_+$ with increments
$\Delta_q^{t+1}(i)=-s$, $\Delta_q^{t+1}(j)=+s$, and $\Delta(a)=0$ for $a\notin\{i,j\}$.
These increments sum to zero and have squared sum $s^2+s^2\le 2$.
Thus the local balance conditions hold with $n$ reference actions and $\sigma^2=2$.
\end{proofsketch}
\begin{proof}
Fix \(q=(i,j,\tau)\). Let
\[
s:=s_{i,\tau}\ :=\ \mathbf{1}\!\left[v_i(g^{t+1})\ge \tau\right]\ \in\ \{0,1\}.
\]
If the item is allocated to \(a=i\), then \(C_{i\rightarrow i,\tau}\) increases by \(s\) and \(C_{i\rightarrow j,\tau}\) is unchanged, hence
\(d_{i\to j,\tau}\) decreases by \(s\).
Similarly, if the item is allocated to \(a=j\), then 
\(d_{i\to j,\tau}\) increases by \(s\).
If \(a\notin\{i,j\}\), then neither count changes, hence \(d_{i\to j,\tau}\) is unchanged. Therefore,
\[
d_{i\to j,\tau}^{t+1}(a)=
\begin{cases}
d_{i\to j,\tau}^t - s, & a=i,\\
d_{i\to j,\tau}^t + s, & a=j,\\
d_{i\to j,\tau}^t, & a\notin\{i,j\}.
\end{cases}
\]
Taking positive parts and using \([u-s]_+\le [[u]_+-s]_+\) and \([u+s]_+\le [u]_+ + s\) for \(s\ge 0\), we get:
\[
z_q^{t+1}(i)\le [z_q^t-s]_+,\quad
z_q^{t+1}(j)\le [z_q^t+s]_+,\quad
z_q^{t+1}(a)\le [z_q^t+0]_+ \ (a\notin\{i,j\}).
\]
Thus \eqref{eq:first-moment-shift} holds with reference actions \(\hat a_r^{t+1}=r\in[n]\) and increments
\[
\Delta_q^{t+1}(i)=-s_{i,\tau},\qquad \Delta_q^{t+1}(j)=s_{i,\tau},\qquad \Delta_q^{t+1}(a)=0\ (a\notin\{i,j\}).
\]
Clearly,
\[
\sum_{a=1}^n \Delta_q^{t+1}(a)=(-s)+s+0=0,
\qquad
\sum_{a=1}^n \bigl(\Delta_q^{t+1}(a)\bigr)^2 = s^2+s^2 \le 2.
\]
Hence the local balance conditions hold with \(\sigma^2=2\).
\end{proof}

We can now combine the two ingredients: threshold deficits imply classical \(\EFc\), and the threshold deficits satisfy local balance.
Applying the general theorem gives the following classical \(\EFc\) guarantee.

\begin{corollary}[Perpetual classical \texorpdfstring{$\EFc$}{EFc} guarantee (threshold form)]
\label{cor:efc-classical-threshold}
In Case~\ref{subsec:ex-efc-classical}, the local balance conditions hold with \(n\) reference actions and \(\sigma^2=2\).
Set \(p:=\ln(2m)=\ln\!\bigl(2n(n-1)L\bigr)\).
Then by \Cref{cor:ct-feasible-general}, for every time \(t\), all thresholds \((i,j,\tau)\in Q\) satisfy
\(z_{i\to j,\tau}^t\le c_t^{\mathrm{EFc}}\), where
\[
c_t^{\mathrm{EFc}}\ :=\ 
m^{1/(2p)}\cdot \sqrt{\,8p^2+\frac{4\sqrt e\,p\,t}{n}\,}
\in O\left(
\ln(nL)+\sqrt{\frac{t \ln(n L)}{n}}
\right),
\]
where $L$ is the number of potential item values.

Consequently, by Lemma~\ref{lem:threshold-to-efc}, the time-\(t\) allocation is \(\EFparam{\lceil c_t^{\mathrm{EFc}}\rceil}\).
\end{corollary}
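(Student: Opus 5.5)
The corollary is a direct assembly of three earlier ingredients, so the plan is to instantiate them in order. First, Lemma~\ref{lem:threshold-moments} shows that, with the reference actions \(\hat a^{t+1}_r=r\in[n]\), every tracked requirement \((i,j,\tau)\in Q\) satisfies the shift form \eqref{eq:first-moment-shift}, the mean bound, and the squared-size bound with \(\sigma^2=2\). The initial condition \(z_q^0=0\) holds because all bundles are empty at time \(0\), so every count \(C^0\) is zero. Hence the instance meets the hypotheses of \Cref{cor:ct-feasible-general} with \(m=n(n-1)L\) and \(\sigma=\sqrt2\). (If \(L=1\) and \(n=2\), then \(m=2\ge2\), so the standing assumption \(m\ge2\) is met.)

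Next I apply \Cref{cor:ct-feasible-general} with \(p=\ln(2m)\ge1\). Substituting \(\sigma^2=2\) into \(\sigma\,m^{1/(2p)}\sqrt{4p^2+2\sqrt e\,p\,t/n}\) gives exactly \(m^{1/(2p)}\sqrt{8p^2+4\sqrt e\,p\,t/n}\). This yields \(\mathcal D^t(c_t^{\mathrm{EFc}})=\emptyset\) for every \(t\), that is, \(z_{i\to j,\tau}^t\le c_t^{\mathrm{EFc}}\) for all triples.

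For the asymptotics, I follow \Cref{cor:best-ct}: \(m^{1/(2p)}=m^{1/(2\ln(2m))}\le\sqrt e\). I then use \(\sqrt{a+b}\le\sqrt a+\sqrt b\) to get \(c_t^{\mathrm{EFc}}=O(p+\sqrt{pt/n})\). Finally, \(p=\ln(2n(n-1)L)\le\ln(2n^2L)=O(\ln(nL))\), since \(\ln(2n^2L)\le 2\ln(nL)+\ln 2\) and \(nL\ge2\).

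Last, I invoke Lemma~\ref{lem:threshold-to-efc} with \(c=c_t^{\mathrm{EFc}}\) to conclude \(\EFparam{\lceil c_t^{\mathrm{EFc}}\rceil}\) at time \(t\). The only point I would check carefully is that Lemma~\ref{lem:threshold-to-efc} needs the deficit bound only at thresholds \(\tau\) that are realized item values. This is guaranteed by the standing assumption that every value \(v_i(g)\) lies in \(\Theta\). There is no real obstacle; the main care is the bookkeeping of \(\sigma^2=2\) inside the constant and the logarithm estimate.
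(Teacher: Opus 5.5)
Your proposal is correct and is exactly the assembly the paper intends: Lemma~\ref{lem:threshold-moments} gives local balance with $\sigma^2=2$, \Cref{cor:ct-feasible-general} with $p=\ln(2m)$ and $\sigma=\sqrt2$ gives the displayed $c_t^{\mathrm{EFc}}$, and Lemma~\ref{lem:threshold-to-efc} turns the threshold bound into $\EFparam{\lceil c_t^{\mathrm{EFc}}\rceil}$. Your side checks are all accurate: $m\ge2$, $p\ge1$, $m^{1/(2p)}\le\sqrt e$, $p=O(\ln(nL))$, and the fact that only realized values in $\Theta$ are needed as thresholds.
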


\subsection{Summary}
\Cref{tab:instantiation-complexity} summarizes all of our application results. Ignoring the warm-up term \(4p^2\sigma^2\), the general bound scales as
\(c_{t}\in O\!\left(\sigma\sqrt{\tfrac{t\log m}{n}}\right)\) when the set of tracked requirements is fixed with \(|Q|=m\).
Thus in rows~1--3 (where \(m\in\{n,n(n{-}1)\}\) and hence \(\log m\in\Theta(\log n)\)) we have
\(c_{t}\in O\!\left(\sigma\sqrt{\tfrac{t\log n}{n}}\right)\), whereas the classical \(\EFc\) row uses the fixed set
\(
Q=\{(i,j,\tau): i,j\in[n],\ i\neq j,\ \tau\in\Theta\},
\)
of size \(n(n-1)L\), and yields
\(
c_t^{\mathrm{EFc}}\in O\!\left(\sqrt{\tfrac{t\log(nL)}{n}}\right).
\) after ignoring the warm-up term.
\begin{table*}[ht]
\centering
\caption{Summary of our 4 instantiations: framework parameters, explicit perpetual guarantees (including the warm-up term), and per round implementability (after maintaining the standard running state needed to evaluate deficits).}
\label{tab:instantiation-complexity}
\small
\setlength{\tabcolsep}{3.5pt}
\renewcommand{\arraystretch}{1.15}
\resizebox{\textwidth}{!}{%
\begin{tabular}{@{}lcccccccc@{}}
\toprule
Instantiation
& \(|Q|\)
& \(\sigma^2\)
& \(p\)
& Warm-up term
& \(|A^{t+1}|\)
& Guarantee \(\;c_{t}\)
& Per-round time
& Storage
\\
\midrule
Item allocation, \(\PROPax{c}\) 
& \(n\)
& \(1\)
& \(\ln(2n)\)
& \(4\ln^2(2n)\)
& \(n\)
& \(  \approx \sqrt e\sqrt{\frac{2\sqrt e\,\ln(2n)\,t}{n}\,}\)
& \(O(n)\)
& \(O(n)\)
\\[2pt]

Public decisions, \(\PROPax{c}\) 
& \(n\)
& \(1\)
& \(\ln(2n)\)
& \(4\ln^2(2n)\)
& \(|C|\)
& \(  \approx \sqrt e\sqrt{\frac{2\sqrt e\,\ln(2n)\,t}{n}\,}\)
& \(O(n|C|)\)
& \(O(n)\)
\\[2pt]

Item allocation, \(\EFax{c}\) 
& \(n(n-1)\)
& \(2\)
& \(\ln\!\bigl(2n(n-1)\bigr)\)
& \(8\ln^2\!\bigl(2n(n-1)\bigr)\)
& \(n\)
& \( \approx \sqrt e\sqrt{ \frac{4\sqrt e\,\ln\!\bigl(2n(n-1)\bigr)\,t}{n}\,}\)
& \(O(n^2)\)
& \(O(n^2)\)
\\
Item allocation, \(\EFc\) (known value-set) 
& \(n(n-1)L\)
& \(2\)
& \(\ln\!\bigl(2n(n-1)L\bigr)\)
& \(8\ln^2\!\bigl(2n(n-1)L\bigr)\)
& \(n\)
& \( \approx \sqrt e\sqrt{ \frac{4\sqrt e\,\ln\!\bigl(2n(n-1)L\bigr)\,t}{n}\,}\)
& \(O(n^2 L)\)
& \(O(n^2 L)\)
\\
\bottomrule
\end{tabular}%
}
\end{table*}

\begin{fullversion}
\subsection{How to plug in your own fairness notions}
\paragraph{What is defined by the model.}
To instantiate our framework, you only need to define:
(i) a set of tracked requirements ${Q}$ (e.g.\ agents, pairs, coalitions),
(ii) for each round $t+1$ and action $a\in A^{t+1}$, the post-round deficits $z_{q}^{t+1}(a)$.
No additivity or linear structure is required at this level.

\paragraph{What is proved externally (local balance).}
All guarantees above use only:
(a) a reference-action comparison family
$\hat a^{t+1}_{1},\ldots,\hat a^{t+1}_{n}$ and increments $\Delta_q^{t+1}$ whose
sum is nonpositive, and
(b) the squared-size bound
$\sum_{k=1}^n(\Delta_{q}^{t+1}(k))^2\le \sigma^2$.
In most online fairness instantiations, the reference actions correspond to
$n$ natural ``reference actions'' (e.g.\ $n$ recipients in fair division, or
$n$ preferred choices in public decision-making), and the scaling is chosen so
that single-step normalized changes are at most $1$.
\end{fullversion}

\section{The Limits of Remembering the Full History}
\label{sec:lower_bound}

For the full-history model, \Cref{cor:best-ct} shows that it is possible to guarantee $c_{t}$-fairness for
\(c_{t}\in O\!\left(\sigma\ln m+\sigma\sqrt{\frac{t\ln m}{n}}\right)\), and after the corresponding warm-up period this becomes
$c_{t}\in O\!\left(\sigma\sqrt{\frac{t\log m}{n}}\right)$.
We now show that the \(\sqrt{t/n}\) dependence is unavoidable even for a fairness notion that is weaker than all three fairness notions studied in \Cref{sec:applications}.
We use the following weak bounded proportionality requirement, denoted \(\mathrm{bPROP}(c)\): all agent valuations lie in \([0,1]\), and each agent's proportionality deficit is always at most \(c\), that is,
\[
d_i^t \le c \qquad\text{for all } i\in[n]\text{ and all } t,
\]
where \(d_i^t=\Prop_i^t-v_i(P_i^t)\) is the proportionality deficit from Case~\ref{subsec:ex-propxc}.
This is a relaxation of \(\PROPax{c}\), because on bounded  instances the scale satisfies \(U_i^t\le 1\), so
\(\PROPax{c}\) implies \([d_i^t]_+\le c\,U_i^t\le c\).

We first prove that even \(\mathrm{bPROP}(c)\) cannot be maintained for more than \(O(nc^2)\) rounds; equivalently, any full-history guarantee must have \(c_t\in\Omega(\sqrt{t/n})\).
We then explain why this immediately implies the same \(\Omega(\sqrt{t/n})\) bound for
\(\PROPax{c}\) in online items and PDM, and for \(\EFax{\cdot}\) in online items. 

\subsection{A weak full-history requirement}
\label{subsec:hard-instance}
We work in the online item-allocation model of Case~\ref{subsec:ex-propxc}, restricted to \emph{bounded}
per-round values: in round \(t{+}1\) the adversary reveals an item's values
\(x^{t+1}\in[0,1]^n\) (where \(x_i^{t+1}=v_i(g^{t+1})\)), and the algorithm allocates the item to some
agent \(a^{t+1}\in[n]\).

Recall the proportional share \(\Prop_i^t=\frac1n v_i(G^t)\), utility \(u_i^t=v_i(P_i^t)\), and deficit
\(d_i^t:=\Prop_i^t-u_i^t\) from Case~\ref{subsec:ex-propxc}. This is a full-history fairness notion: all items that have arrived up to time \(t\) are counted.  On bounded instances, the weak relaxation
\(\mathrm{bPROP}(c)\) is simply the requirement \(d_i^t\le c\) for all agents \(i\) and times \(t\).

\subsection{The lower bound}
\label{subsec:lb-hard}

\paragraph{Setup.}
Fix \(c\ge1\). We measure how much slack each agent has before violating \(\mathrm{bPROP}(c)\):
\[
Z_i^t:=2c+u_i^t-\Prop_i^t .
\]
Initially,
\begin{equation}
\label{eq:Z-init-hard}
Z_i^0=2c
\qquad (i\in[n]).
\end{equation}
Moreover,
\begin{equation}
\label{eq:bprop-Z-equivalence-hard}
\mathrm{bPROP}(c)
\quad\Longleftrightarrow\quad
Z_i^t\ge c\ \text{ for all }i.
\end{equation}
Indeed, \(\mathrm{bPROP}(c)\) is equivalent to
\(\Prop_i^t-u_i^t\le c\), which is equivalent to
\(2c+u_i^t-\Prop_i^t\ge c\).

\paragraph{Adversary.}
In each time $t+1$, while the outcome remains \(c\)-fair, the adversary sends a good $g^{t+1}$ with values
\begin{equation}
\label{eq:adv-x-hard}
v_i(g^{t+1}) = x_i^{t+1}:=\frac{Z_i^t}{Z_i^t+c}
\qquad (i\in[n]).
\end{equation}
During a \(c\)-fair prefix, \(Z_i^t\ge c\), and therefore
\begin{equation}
\label{eq:x-range-hard}
\frac12\le x_i^{t+1}<1
\qquad (i\in[n]).
\end{equation}
Thus all revealed values lie in \([0,1]\). We also denote the average revealed value by
\[
\bar{x}^{t+1}:=\frac1n\sum_{i=1}^n x_i^{t+1}.
\]

Let \(w=a^{t+1}\) be the winner chosen by the algorithm in round \(t+1\). Since
\(\Prop_i^t\) increases by \(x_i^{t+1}/n\) for every agent \(i\), while
\(u_i^t\) increases by \(x_i^{t+1}\) only for the winner, the shifted slack
updates as
\begin{equation}
\label{eq:Z-update-hard}
Z_i^{t+1}
=
\begin{cases}
Z_i^t-\dfrac{x_i^{t+1}}{n}, & i\neq w,\\[2mm]
Z_w^t+\left(1-\dfrac1n\right)x_w^{t+1}, & i=w.
\end{cases}
\end{equation}

\paragraph{Proof idea.}
The proof has two parts. First, we prove three bounds that hold during every
\(c\)-fair prefix: the potential \(\Phi^t\) never increases, the total shifted
slack satisfies \(S^t<3nc\), and the average revealed value satisfies
\(\bar{x}^{t+1}\le3/4\). Second, we use these bounds to count rounds. Rounds in which the winner received a low value decrease the potential by at least fixed amount, while rounds in which the winner received a high value increase the total shifted slack. These two counting bounds imply that any
\(c\)-fair prefix has length at most \(O(nc^2)\).

\paragraph{Why the adversary is hard.}
\(Z_i^t\) measures how much slack agent \(i\) still has before
violating \(\mathrm{bPROP}(c)\). The adversary gives larger values to agents
with larger slack. Thus the algorithm faces a tradeoff: choosing low-value
winners burns potential, while choosing high-value winners makes the total slack
grow. Since both the potential and the total slack are bounded while the prefix
is \(c\)-fair, the algorithm cannot keep all agents \(c\)-fair for too many
rounds,  \Cref{fig:slack-lb} illustrates this.

\begin{figure}[ht]
\centering
\begin{tikzpicture}[
  x=.055\linewidth,
  y=1cm,
  >=Latex,
  font=\small,
  bg/.style={draw=black!10, fill=black!1, rounded corners=8pt},
  panel/.style={draw=black!28, fill=white, rounded corners=5pt, line width=.5pt},
  tank/.style={draw=black!45, fill=white, rounded corners=2pt, line width=.45pt},
  arrow/.style={-{Latex[length=2.1mm]}, line width=.85pt, draw=black!58},
  faint/.style={-{Latex[length=1.7mm]}, line width=.55pt, draw=black!35},
  drain/.style={-{Latex[length=1.8mm]}, line width=.75pt, draw=red!65!black},
  gain/.style={-{Latex[length=1.8mm]}, line width=.75pt, draw=green!45!black},
  tiny/.style={font=\scriptsize, text=black!72},
  head/.style={font=\bfseries\small, text=black!88},
  badge/.style={draw=black!25, fill=white, rounded corners=2pt, inner sep=1.7pt, font=\scriptsize}
]

\draw[bg] (0,0) rectangle (17.9,6.75);

\node[head] at (2.35,6.35) {slack vector};
\node[head] at (6.75,6.35) {adversary prices slack};
\node[head] at (11.10,6.35) {one update};
\node[head] at (15.75,6.35) {after repeats};

\draw[panel, fill=blue!2] (.45,1.05) rectangle (4.25,5.95);
\node[badge, fill=blue!8] at (2.35,5.58) {$Z^t$};

\draw[dashed, draw=red!65!black, line width=.6pt] (.90,2.25) -- (3.80,2.25);
\node[tiny, text=red!65!black, anchor=west] at (3.87,2.25) {$c$};

\foreach \x/\h/\lab in {1.00/2.15/1,1.95/1.45/2,2.90/2.65/3}{
  \draw[tank] (\x,1.65) rectangle +(0.55,3.20);
  \fill[blue!34] (\x,1.65) rectangle +(0.55,\h);
  \draw[draw=blue!55!black, line width=.32pt] (\x,1.65) rectangle +(0.55,\h);
  \node[tiny] at (\x+.275,1.38) {$Z_{\lab}^t$};
}

\node[tiny, align=center] at (2.35,.40) {fair while all slacks stay above $c$};

\draw[arrow] (4.25,3.48) -- (5.05,3.48);

\draw[panel, fill=orange!3] (5.05,1.05) rectangle (8.55,5.95);
\node[badge, fill=orange!12] at (6.80,5.58) {$x_i=\frac{Z_i^t}{Z_i^t+c}$};

\draw[draw=black!35] (5.70,2.00) -- (7.95,2.00);

\foreach \x/\h/\lab/\val in {5.95/1.55/1/.68,6.70/1.05/2/.59,7.45/1.95/3/.73}{
  \draw[fill=orange!35, draw=orange!65!black, line width=.35pt] (\x,2.00) rectangle +(0.42,\h);
  \node[tiny] at (\x+.21,1.72) {$x_{\lab}$};
  \node[tiny] at (\x+.21,2.18+\h) {$\val$};
}

\node[tiny, align=center] at (6.80,4.70) {more slack\\larger value};
\draw[faint] (6.00,4.25) -- (7.35,4.25);
\draw[faint] (7.35,4.25) -- (7.95,4.55);

\draw[arrow] (8.55,3.48) -- (9.25,3.48);

\draw[panel, fill=green!2] (9.25,1.05) rectangle (13.05,5.95);
\node[badge, fill=green!10] at (11.15,5.58) {algorithm picks $w=3$};

\draw[dashed, draw=red!65!black, line width=.6pt] (9.70,2.25) -- (12.60,2.25);
\node[tiny, text=red!65!black, anchor=west] at (12.67,2.25) {$c$};

\foreach \x/\h/\lab/\col in {9.85/1.88/1/blue,10.80/1.24/2/blue,11.75/3.02/3/green}{
  \draw[tank] (\x,1.65) rectangle +(0.55,3.20);
  \fill[\col!34] (\x,1.65) rectangle +(0.55,\h);
  \draw[draw=\col!55!black, line width=.32pt] (\x,1.65) rectangle +(0.55,\h);
  \node[tiny] at (\x+.275,1.38) {$Z_{\lab}^{t+1}$};
}

\draw[drain] (10.12,5.03) -- (10.12,4.63);
\node[tiny, text=red!65!black] at (9.85,5.20) {$-$};

\draw[drain] (11.07,5.03) -- (11.07,4.63);
\node[tiny, text=red!65!black] at (10.80,5.20) {$-$};

\draw[gain] (12.03,4.55) -- (12.03,5.05);
\node[tiny, text=green!38!black] at (12.38,5.05) {$+$};

\node[tiny, align=center] at (11.15,.40) {non-winners drain; winner rises};

\draw[arrow] (13.05,3.48) -- (13.75,3.48);

\draw[panel, fill=red!2] (13.75,1.05) rectangle (17.45,5.95);
\node[badge, fill=red!7] at (15.60,5.58) {eventually};

\draw[dashed, draw=red!65!black, line width=.6pt] (14.15,2.25) -- (17.02,2.25);
\node[tiny, text=red!65!black, anchor=west] at (17.08,2.25) {$c$};

\foreach \x/\h/\lab/\col in {14.30/1.55/1/blue,15.23/.42/2/red,16.16/2.35/3/blue}{
  \draw[tank] (\x,1.65) rectangle +(0.55,3.20);
  \fill[\col!34] (\x,1.65) rectangle +(0.55,\h);
  \draw[draw=\col!55!black, line width=.32pt] (\x,1.65) rectangle +(0.55,\h);
  \node[tiny] at (\x+.275,1.38) {$Z_{\lab}$};
}

\node[badge, fill=red!9, text=red!65!black] at (15.50,3.15) {violation};
\draw[drain] (15.50,2.95) -- (15.50,2.15);

\node[tiny, align=center] at (15.60,.40) {some slack falls below $c$};

\draw[arrow] (11.15,1.05) .. controls (9.4,.28) and (4.9,.28) .. (2.35,1.05);
\node[tiny, fill=black!1, inner sep=1.2pt] at (6.75,.40) {repeat the same adaptive rule};

\end{tikzpicture}
\caption{Lower-bound intuition as a draining process. \(Z_i^t\) is the slack agent \(i\) has before violating \(\mathrm{bPROP}(c)\). The adversary converts slack into values \(x_i=Z_i^t/(Z_i^t+c)\), so agents with more slack look more valuable. The winner's slack rises, but non-winners lose slack; repeated adaptive rounds eventually push some slack below \(c\).}
\Description{A diagram showing how agents' slack changes over time. The chosen agent gains slack, the other agents lose slack, and repeated rounds can make one agent fall below the fairness threshold.}
\label{fig:slack-lb}
\end{figure}

We use two elementary claims in the lower-bound proof.
The first gives a quadratic lower bound on a logarithmic gap, and the second says that concavity plus cancellation of the linear terms yields an actual decrease in the potential.

\begin{claimrep}[A useful logarithmic inequality]
\label{claim:log-gap-hard}
For every \(0\le s\le1\),
\[
s-\ln(1+s)\ge \frac{s^2}{4}.
\]
\end{claimrep}
\begin{proofsketch}
Consider \(g(s):=s-\ln(1+s)-s^2/4\). Since \(g(0)=0\) and \(g'(s)\ge0\) on \([0,1]\), the claim follows.
\end{proofsketch}
\begin{proof}
Let
\[
g(s):=s-\ln(1+s)-\frac{s^2}{4}.
\]
Then \(g(0)=0\). For \(0\le s\le1\),
\[
g'(s)
=
1-\frac1{1+s}-\frac{s}{2}
=
\frac{s(1-s)}{2(1+s)}
\ge0.
\]
Thus \(g(s)\ge0\) on \([0,1]\), proving the claim.
\end{proof}

\begin{claimrep}[Concavity gap]
\label{claim:concavity-gap-hard}
Let \(\psi\) be concave and differentiable.
Let $(z_i)_{i=1}^n$ and 
$(\Delta_i)_{i=1}^n$ be real numbers satisfying \(z_i>0\),
\(z_i+\Delta_i>0\), and
\[
\sum_{i=1}^n \psi'(z_i)\Delta_i=0.
\]
Define

%

\[
G_i:=\psi(z_i)-\psi(z_i+\Delta_i)+\psi'(z_i)\Delta_i .
\]
Then \(G_i\ge0\) for every \(i\), and
\[
\sum_{i=1}^n\psi(z_i)-\sum_{i=1}^n\psi(z_i+\Delta_i)
=
\sum_{i=1}^n G_i.
\]
In particular,
\[
\sum_{i=1}^n\psi(z_i+\Delta_i)
\le
\sum_{i=1}^n\psi(z_i).
\]
\end{claimrep}
\begin{proofsketch}
This is exactly the first-order concavity inequality, summed over coordinates. The assumed cancellation of the linear terms leaves only the nonnegative concavity gaps.
\end{proofsketch}
\begin{proof}
Concavity gives
\[
\psi(z_i+\Delta_i)-\psi(z_i)\le \psi'(z_i)\Delta_i,
\]
so \(G_i\ge0\). Also, since \(\sum_i\psi'(z_i)\Delta_i=0\),
\[
\sum_i G_i
=
\sum_i\psi(z_i)-\sum_i\psi(z_i+\Delta_i)
+
\sum_i\psi'(z_i)\Delta_i
=
\sum_i\psi(z_i)-\sum_i\psi(z_i+\Delta_i).
\]
The final inequality follows from \(G_i\ge0\).
\end{proof}
With these two elementary claims in hand, we can prove the lower bound.
The logarithmic inequality quantifies how much potential is lost in certain rounds, while the concavity-gap claim shows that the adversary's updates never increase the potential during a \(c\)-fair prefix.
\begin{theoremrep}[Lower bound]
\label{thm:lb-framework}
\label{thm:lb-rounds}
For every \(n\ge2\) and \(c\ge1\), every online algorithm, deterministic or
randomized, can be forced by an adaptive adversary so that no \(c\)-fair prefix
has length at least \(4900nc^2\).%
\footnote{
The constant $4900$ can be substantially reduced, but the proof would be more complicated. 
}

Moreover, for every \(T\geq n\), even if the horizon $T$ is known to the algorithm, any
worst-case prefix-wise guarantee must have
\[
c_T\in\Omega\!\left(\sqrt{\frac{T}{n}}\right).
\]
\end{theoremrep}
\begin{proofsketch}
We apply the adversary and shifted slack variables defined above. The formal proof turns the preceding proof idea
into two counting inequalities. First, the logarithmic-potential argument bounds the number of rounds in which
the chosen winner has relatively low value. Second, the bound on the total shifted slack bounds the number of
remaining rounds. Both bounds are \(O(nc^2)\), and the explicit constants give the stated \(4900nc^2\) prefix-length
bound.

To translate this into a time-indexed lower bound for \(t\ge n\), the construction
above gives the \(\Omega(\sqrt{t/n})\) bound, using zero-valued padding when the
violation occurs before the target horizon. The argument is pathwise, so it also
applies to randomized algorithms against an adaptive adversary.
\end{proofsketch}
\begin{proof}
For the proof, we use \emph{two} potential functions, $\Phi$ and $S$, defined by:
\[
\Phi^t:=\sum_{i=1}^n\psi(Z_i^t)
\qquad
\text{ where }
\psi(z):=z+c\ln z,
\qquad
S^t:=\sum_{i=1}^n Z_i^t .
\]

Note that this definition of $\Phi^t$ is unrelated to the definition of $\Phi^t$ in the proofs of Lemmas~\ref{lem:transform-general} and \ref{lem:potential-bound-general}.

Fix any online algorithm and run the adversary above until \(c\)-fairness fails.
The argument is pathwise: for every realized sequence of choices made by the
algorithm, the adaptive adversary defined above forces the same bound. Therefore
the lower bound also applies to randomized algorithms against an adaptive
adversary. We analyze an
arbitrary \(c\)-fair prefix.

We first record a one-round calculation that will be used several times. Fix any
round \(t+1\) in this prefix, and write
\[
z_i:=Z_i^t,\qquad
x_i:=x_i^{t+1},\qquad
\Delta_i:=Z_i^{t+1}-Z_i^t .
\]
Since the whole prefix is \(c\)-fair, both the state before and the state after
this round are \(c\)-fair. Hence, by \eqref{eq:bprop-Z-equivalence-hard},
\[
z_i=Z_i^t\ge c>0,
\qquad
z_i+\Delta_i=Z_i^{t+1}\ge c>0.
\]
Thus the logarithmic potential $\Phi^t$ is well-defined at all points used below.

By the adversary's definition,
\[
x_i=\frac{z_i}{z_i+c}.
\]
Also,
\[
\psi'(z)=1+\frac{c}{z}=\frac{z+c}{z},
\qquad
\psi''(z)=-\frac{c}{z^2}<0.
\]
Thus \(\psi\) is concave, and
\[
\psi'(z_i)=\frac{z_i+c}{z_i}=\frac1{x_i}.
\]
By the update rule \eqref{eq:Z-update-hard},
\[
\Delta_i=-\frac{x_i}{n}\quad (i\neq w),
\qquad
\Delta_w=\left(1-\frac1n\right)x_w .
\]
Therefore the first-order terms cancel:
\begin{align}
\sum_{i=1}^n\psi'(z_i)\Delta_i
&=
\sum_{i\neq w}
\frac1{x_i}\left(-\frac{x_i}{n}\right)
+
\frac1{x_w}\left(\left(1-\frac1n\right)x_w\right) =
-\frac{n-1}{n}
+
\frac{n-1}{n}
=
0.
\label{eq:linear-cancel-hard}
\end{align}

\paragraph{\textbf{Step 1: the potential and the average value are bounded.}}
By \(\psi''(z)=-c/z^2<0\), the function \(\psi\) is concave. Applying
Claim~\ref{claim:concavity-gap-hard} together with the cancellation identity
\eqref{eq:linear-cancel-hard} gives
\[
\Phi^{t+1}\le\Phi^t.
\]
Therefore the potential never increases during the prefix.

Since \(Z_i^0=2c\), we have \(\Phi^0=n\psi(2c)\),
so 
\[
\Phi^t\le \Phi^0=n\psi(2c)
\]
throughout the prefix.

Also, during a
\(c\)-fair prefix, \(Z_i^t\ge c\), and \(\psi\) is increasing because
\(\psi'(z)=1+c/z>0\). Hence \(\Phi^t\ge n\psi(c)\). Therefore the total possible
decrease of the potential during any \(c\)-fair prefix is at most
\begin{equation}
\label{eq:potential-budget-hard}
\Phi^0-n\psi(c)
=
n(\psi(2c)-\psi(c))
=
nc(1+\ln2)
<
2nc.
\end{equation}

Also, \(c\)-fairness gives \(Z_i^t\ge c\), and therefore
\[
\psi(Z_i^t)
=
Z_i^t+c\ln Z_i^t
\ge
Z_i^t+c\ln c.
\]
Hence
\begin{align}
S^t+nc\ln c
&\le
\sum_{i=1}^n\psi(Z_i^t)
=
\Phi^t \le
n\psi(2c)
=
2nc+nc\ln(2c).
\end{align}
Thus
\begin{equation}
\label{eq:S-bound-hard}
S^t
\le
nc(2+\ln2)
<
3nc.
\end{equation}

Now define
\[
\chi(z):=\frac{z}{z+c}.
\]
Then \(x_i^{t+1}=\chi(Z_i^t)\). Moreover,
\[
\chi'(z)=\frac{c}{(z+c)^2}>0,
\qquad
\chi''(z)=-\frac{2c}{(z+c)^3}<0.
\]
Thus \(\chi\) is increasing and concave. Jensen's inequality for the concave
function \(\chi\) gives
\[
\frac1n\sum_{i=1}^n\chi(Z_i^t)
\le
\chi\left(\frac1n\sum_{i=1}^n Z_i^t\right).
\]
Using this, the definition of \(S^t\), the total-slack bound
\eqref{eq:S-bound-hard}, and the fact that \(\chi\) is increasing, we get
\begin{equation}
\label{eq:xbar-bound-hard}
\bar{x}^{t+1}
=
\frac1n\sum_{i=1}^n \chi(Z_i^t)
\le
\chi\left(\frac{S^t}{n}\right)
<
\chi(3c)
=
\frac34.
\end{equation}

\Cref{fig:lower-bound-two-meters} summarizes the counting argument used next:
low-value winner rounds are charged to the logarithmic-potential meter \(\Phi\),
whereas high-value winner rounds are charged to the total-slack meter \(S\).

\begin{figure}[ht]
\centering
\begin{adjustbox}{max width=\linewidth, max totalheight=.64\textheight, center}
\begin{tikzpicture}[
  >=Latex,
  font=\small,
  outer/.style={draw=black!10, fill=black!1, rounded corners=8pt},
  round/.style={circle, draw=black!28, fill=white, minimum size=4.2mm, inner sep=0pt},
  gate/.style={draw=black!18, fill=black!2, rounded corners=8pt, line width=.8pt},
  caseL/.style={draw=green!45!black, fill=green!7, rounded corners=6pt,
    line width=1pt, inner sep=5pt, align=center},
  caseH/.style={draw=orange!80!black, fill=orange!8, rounded corners=6pt,
    line width=1pt, inner sep=5pt, align=center},
  Lbox/.style={draw=green!45!black, fill=green!6, rounded corners=6pt,
    line width=.8pt, inner sep=5pt, align=center},
  Hbox/.style={draw=orange!80!black, fill=orange!7, rounded corners=6pt,
    line width=.8pt, inner sep=5pt, align=center},
  meter/.style={draw=black!22, fill=white, rounded corners=5pt, line width=.8pt},
  capstyle/.style={draw=black!20, fill=black!3, rounded corners=4pt},
  small/.style={font=\scriptsize, align=center},
  bound/.style={draw=black!16, fill=black!2, rounded corners=5pt,
    inner sep=4pt, align=center},
  final/.style={draw=purple!65!black, fill=purple!5, rounded corners=6pt,
    inner sep=5pt, align=center},
  arrow/.style={->, black!45, line width=.9pt},
  arrowL/.style={->, green!45!black, line width=1pt},
  arrowH/.style={->, orange!80!black, line width=1pt}
]

\draw[outer] (0,-.20) rectangle (15.8,7.45);

\node[small, text=black!65] at (7.9,7.00)
  {all rounds in the still-\(c\)-fair prefix};

\draw[black!25, line width=.7pt] (2.15,6.62) -- (13.65,6.62);

\foreach \x in {2.30,2.90,3.50,4.10,4.70,5.30,5.90,6.50,7.10,7.70,8.30,8.90,9.50,10.10,10.70,11.30,11.90,12.50,13.10,13.50} {
  \node[round] at (\x,6.62) {};
}

\node[small, text=black!55] at (2.30,6.27) {\(1\)};
\node[small, text=black!55] at (13.50,6.27) {\(T_{\mathrm{fair}}\)};

\node[gate, minimum width=12.0cm, minimum height=1.55cm] (gate) at (7.9,5.35) {};
\node[font=\scriptsize\bfseries, text=black!65] at (7.9,5.93)
  {classify each round by exactly one case};

\node[caseL, text width=4.55cm] (caseLeft) at (4.85,5.22) {%
  \(\boldsymbol{x_w^{t+1}\le 4/5}\)\\[-1pt]
  \scriptsize send round \(t\) to \(L\)
};

\node[caseH, text width=4.55cm] (caseRight) at (10.95,5.22) {%
  \(\boldsymbol{x_w^{t+1}>4/5}\)\\[-1pt]
  \scriptsize send round \(t\) to \(H\)
};

\draw[black!35, line width=1.2pt] (7.9,4.72) -- (7.9,5.62);

\node[
  circle,
  draw=black!32,
  fill=white,
  line width=.8pt,
  inner sep=2.2pt,
  font=\scriptsize\bfseries,
  text=black!70
] at (7.9,5.20) {OR};

\draw[arrow] (7.9,6.40) -- (7.9,6.13);

\node[Lbox, text width=4.25cm] (L) at (3.65,3.85)
  {\(\mathbf{L}\): low-value winner rounds};

\node[Hbox, text width=4.25cm] (H) at (12.15,3.85)
  {\(\mathbf{H}\): high-value winner rounds};

\draw[arrowL] (caseLeft.south) -- (L.north);
\draw[arrowH] (caseRight.south) -- (H.north);

\draw[meter] (2.50,1.78) rectangle (4.80,3.05);
\draw[capstyle] (2.95,3.05) rectangle (4.35,3.22);

\foreach \y in {2.02,2.26,2.50,2.74} {
  \draw[black!12] (2.68,\y) -- (4.62,\y);
}

\fill[green!22] (2.68,1.92) rectangle (4.62,2.55);
\draw[green!45!black, line width=.9pt] (2.68,2.55) -- (4.62,2.55);

\node[font=\bfseries\large, text=green!38!black] at (3.65,2.30) {\(\Phi\)};
\node[bound, text width=3.75cm] (LB) at (3.65,1.18) {\(L<800nc^2\)};

\draw[meter] (11.00,1.78) rectangle (13.30,3.05);
\draw[capstyle] (11.45,3.05) rectangle (12.85,3.22);

\foreach \y in {2.02,2.26,2.50,2.74} {
  \draw[black!12] (11.18,\y) -- (13.12,\y);
}

\fill[orange!24] (11.18,1.92) rectangle (13.12,2.42);
\draw[orange!80!black, line width=.9pt] (11.18,2.42) -- (13.12,2.42);

\node[font=\bfseries\large, text=orange!80!black] at (12.15,2.30) {\(S\)};
\node[bound, text width=3.85cm] (HB) at (12.15,1.18) {\(H<20nc+5L\)};

\draw[arrowL] (L.south) -- (3.65,3.22);
\draw[arrowH] (H.south) -- (12.15,3.22);

\draw[->, dashed, black!38, line width=.8pt]
  (LB.east) -- node[small, fill=white, inner sep=1pt, text=black!55]
  {use \(L\)-bound} (HB.west);

\node[final, text width=5.9cm] (F) at (7.9,.28)
  {\(T_{\mathrm{fair}}=L+H<4900nc^2\)};

\draw[arrow] (LB.south) .. controls (4.25,.72) and (5.35,.48) .. (F.west);
\draw[arrow] (HB.south) .. controls (11.55,.72) and (10.45,.48) .. (F.east);

\end{tikzpicture}
\end{adjustbox}
\caption{Partition-and-charge view of the fair-prefix bound.}
\Description{Fair-prefix rounds are split into L or H, charged to Phi(logarithmic-
potential) or S, and combined.}
\label{fig:lower-bound-two-meters}
\end{figure}

\paragraph{\textbf{Step 2: low-value winner rounds are few.}}
Denote by $L$, the number of rounds in the $c$-fair prefix in which the winner receives a value at most $4/5$:
\[
L:=\#\{\text{rounds in the prefix with }x_w^{t+1}\le4/5\}.
\]
We show that
\[
L<800nc^2.
\]

Fix one low-value winner round and apply the one-round calculation above to this
round. By Claim~\ref{claim:concavity-gap-hard} and
\eqref{eq:linear-cancel-hard},
\[
\Phi^t-\Phi^{t+1}
=
\sum_{i=1}^n
\left[
\psi(z_i)-\psi(z_i+\Delta_i)+\psi'(z_i)\Delta_i
\right],
\]
and every summand is non-negative. Hence it is enough to keep only the winner's
summand.

For the winner,
\[
\Delta_w
=
\left(1-\frac1n\right)x_w
=
\frac{n-1}{n}\cdot\frac{z_w}{z_w+c}.
\]
Define
\[
s_w:=\frac{n-1}{n(z_w+c)}.
\]
Then \(\Delta_w=z_ws_w\). The winner's summand is
\[
\psi(z_w)-\psi(z_w(1+s_w))+\psi'(z_w)z_ws_w.
\]
Using \(\psi(z)=z+c\ln z\) and \(\psi'(z)=1+c/z\),
\begin{align*}
&\psi(z_w)-\psi(z_w(1+s_w))+\psi'(z_w)z_ws_w\\
&=
(z_w+c\ln z_w)
-
(z_w(1+s_w)+c\ln z_w+c\ln(1+s_w))
+
(z_ws_w+cs_w)\\
&=
c\bigl(s_w-\ln(1+s_w)\bigr).
\end{align*}
Thus
\begin{equation}
\label{eq:low-round-drop-start-hard}
\Phi^t-\Phi^{t+1}
\ge
c\bigl(s_w-\ln(1+s_w)\bigr).
\end{equation}

By assumption, the winner value is at most $4/5$:
\[
\frac{z_w}{z_w+c}=x_w\le\frac45,
\]
so \(z_w\le4c\). Hence, using \(n\ge2\),
\[
s_w
=
\frac{n-1}{n(z_w+c)}
\ge
\frac1{2(z_w+c)}
\ge
\frac1{10c}.
\]
Also, since the prefix is \(c\)-fair, \(z_w\ge c\), and therefore
\[
s_w
\le
\frac1{z_w+c}
\le
\frac1{2c}
\le
\frac12.
\]
Thus \(0\le s_w\le1\). By Claim~\ref{claim:log-gap-hard} and
\eqref{eq:low-round-drop-start-hard},
\begin{equation}
\label{eq:low-round-drop-hard}
\Phi^t-\Phi^{t+1}
\ge
c\cdot\frac{s_w^2}{4}
\ge
\frac{c}{4}\left(\frac1{10c}\right)^2
=
\frac1{400c}.
\end{equation}

By the potential budget \eqref{eq:potential-budget-hard}, the total possible
decrease of \(\Phi^t\) during a \(c\)-fair prefix is less than \(2nc\). Together
with the per-round drop in \eqref{eq:low-round-drop-hard}, this gives
\begin{equation}
\label{eq:L-bound-hard}
L\cdot \frac1{400c}<2nc,
\qquad\text{so}\qquad
L<800nc^2.
\end{equation}

\paragraph{\textbf{Step 3: high-value winner rounds are controlled by low-value winner rounds.}}
Denote by $H$, the number of rounds in the $c$-fair prefix in which the winner receives a value higher than $4/5$:
\[
H:=\#\{\text{rounds in the prefix with }x_w^{t+1}>4/5\}.
\]
We first compute the change in \(S^t\). From the explicit update rule
\eqref{eq:Z-update-hard},
\begin{align}
S^{t+1}-S^t
&=
\sum_{i\neq w}\left(-\frac{x_i^{t+1}}{n}\right)
+
\left(1-\frac1n\right)x_w^{t+1} =
-\frac1n\sum_{i=1}^n x_i^{t+1}
+
x_w^{t+1} =
x_w^{t+1}-\bar{x}^{t+1}.
\label{eq:S-change-hard}
\end{align}

In a high-value winner round \(x_w^{t+1}>4/5\). Using
\eqref{eq:S-change-hard} and the average-value bound \eqref{eq:xbar-bound-hard},
\[
S^{t+1}-S^t
=
x_w^{t+1}-\bar{x}^{t+1}
>
\frac45-\frac34
=
\frac1{20}.
\]
In contrast, in a low-value winner round \(x_w^{t+1}\le4/5\). Also, by
\eqref{eq:x-range-hard}, \(x_w^{t+1}\ge1/2\). Using again
\eqref{eq:S-change-hard} and \eqref{eq:xbar-bound-hard},
\[
S^{t+1}-S^t
=
x_w^{t+1}-\bar{x}^{t+1}
\ge
\frac12-\frac34
=
-\frac14.
\]

At the beginning, \(S^0=2nc\). Therefore, after a \(c\)-fair prefix with \(H\)
high-winner rounds and \(L\) low-winner rounds,
\[
S^t
\ge
2nc+\frac{H}{20}-\frac{L}{4}.
\]
On the other hand, the total-slack bound \eqref{eq:S-bound-hard} gives
\(S^t<3nc\). Thus
\[
2nc+\frac{H}{20}-\frac{L}{4}<3nc.
\]
Rearranging,
\begin{equation}
\label{eq:H-bound-hard}
H<20nc+5L.
\end{equation}

\paragraph{\textbf{Step 4: conclusion.}}
Combining two bounds \eqref{eq:L-bound-hard} and \eqref{eq:H-bound-hard}, and using
\(c\ge1\), we get
\[
H
<
20nc+5L
<
20nc+5\cdot800nc^2
\le
4020nc^2.
\]
Therefore
\[
H+L
<
4020nc^2+800nc^2
=
4820nc^2
<
4900nc^2.
\]
Thus no \(c\)-fair prefix has length at least \(4900nc^2\).

Finally, we derive the time-indexed form for every horizon. First, there is
a simple short-horizon adversary: reveal items with value \(1\) for every agent.
If \(T<n\), after \(T\) such rounds at least one agent has received no item, and
hence has proportionality deficit \(T/n\).

If \(n\le T<4900n\), reveal \(n-1\) items with value \(1\) for every agent.
At time \(n-1\), at least one agent has received no item, so her proportionality
deficit is \((n-1)/n\). Then reveal only zero-valued items until time \(T\).
Utilities, proportional shares, and deficits do not change, so at time \(T\) the
deficit is still at least
\[
\frac{n-1}{n}\ge\frac12
>
\frac1{140}\sqrt{\frac{T}{n}},
\]
where the last inequality uses \(T<4900n\).

It remains to handle \(T\ge4900n\). Set
\[
c=\frac1{70}\sqrt{\frac{T}{n}}.
\]
Then \(c\ge1\) and \(4900nc^2=T\). By the first part of the theorem, the adaptive adversary forces a violation by some
time \(\tau\le T\). After time \(\tau\), the adversary reveals only zero-valued items
until time \(T\). This does not change utilities, proportional shares, or deficits,
so the violation persists until time \(T\). Hence, for \(T\ge4900 n\), any
prefix-wise guarantee must satisfy
\[
c_T>\frac1{70}\sqrt{\frac{T}{n}}.
\]

Combining the three cases, for every \(T\) any worst-case prefix-wise guarantee
must satisfy
\[
c_T
\in
\Omega\!\left(
\min\left\{\frac{T}{n},\sqrt{\frac{T}{n}}\right\}
\right).
\]
In particular, for \(T\ge n\),
\[
c_T\in\Omega\!\left(\sqrt{\frac{T}{n}}\right).
\]
\end{proof}
\subsection{Transferring the lower bound to our motivating models}
\label{subsec:lb-instantiations}

The hard instance of \Cref{subsec:hard-instance} targets a very weak, purely additive notion:
on bounded inputs (\([0,1]\)-valuations), each agent's proportionality deficit stays within \(c\) at every time \(t\).
We now formalize the transfer via three implication lemmas showing that, on bounded instances,
our max-normalized guarantees imply this weak bounded \(c\)-fairness.

\begin{lemmarep}[\(\PROPax{c}\) (items) \(\Rightarrow\) $\mathrm{bPROP}(c)$]
\label{lem:lb-items-propx-implies-weak}
In the online item-allocation model of Case~\ref{subsec:ex-propxc}, assume \(v_i(g)\in[0,1]\) for all agents \(i\) and items \(g\).
If the allocation is \(\PROPax{c}\) at time \(t\), then
\(
v_i(P_i^t)\ \ge\ \Prop_i^t - c
\qquad\text{for all } i\in[n].
\)
\end{lemmarep}

\begin{proofsketch}
On bounded instances \(U_i^t\le 1\), so \(d_i^t\le c\,U_i^t\) implies \(d_i^t\le c\).
\end{proofsketch}

\begin{proof}
Work in the notation of Case~\ref{subsec:ex-propxc} and fix \(t\) and \(i\).
Let \(d_i^t:=\Prop_i^t-v_i(P_i^t)\) and \(U_i^t:=\max\{v_i(g):g\in G^t\setminus P_i^t\}\).
Since \(v_i(g)\le 1\) for every item \(g\), we have \(U_i^t\le 1\).
If the allocation is \(\PROPax{c}\) at time \(t\), then \(d_i^t\le c\,U_i^t\le c\),
equivalently \(v_i(P_i^t)\ge \Prop_i^t-c\).
\end{proof}

\begin{lemmarep}[\(\PROPax{c}\) (PDM) \(\Rightarrow\) bounded \(c\)-fairness]
\label{lem:lb-pdm-propx-implies-weak}
In the public decision-making model of Case~\ref{subsec:ex-pdm}, assume \(v_i^{r}(o)\in[0,1]\) for all agents \(i\), rounds \(r\), and outcomes \(o\in C\).
If the outcome sequence is \(\PROPax{c}\) at time \(t\), then
\(
 u_i^t\ \ge\ \Prop_i^t - c
\qquad\text{for all } i\in[n].
\)
\end{lemmarep}

\begin{proofsketch}
Bounded valuations imply \(V_i^t\le 1\), so \(d_i^t\le c\,V_i^t\) implies \(d_i^t\le c\).
Moreover, the hard instance of \Cref{subsec:hard-instance} is realizable in this model by taking \(C=[n]\) and
setting \(v_i^{t}(i)=x_i^t\) and \(v_i^{t}(o)=0\) for \(o\neq i\).
\end{proofsketch}

\begin{proof}
Work in the notation of Case~\ref{subsec:ex-pdm} and fix \(t\) and \(i\).
Let \(d_i^t:=\Prop_i^t-u_i^t\) and \(V_i^t:=\max_{r\le t}M_i^r\), where \(M_i^r=\max_{o\in C}v_i^r(o)\).
Since \(v_i^{r}(o)\le 1\), we have \(M_i^r\le 1\) for every \(r\le t\), hence \(V_i^t\le 1\).
If the sequence is \(\PROPax{c}\) at time \(t\), then \(d_i^t\le c\,V_i^t\le c\),
equivalently \(u_i^t\ge \Prop_i^t-c\).

For the realization of the hard instance, take \(C=[n]\) and define valuations by
\(v_i^{t}(i)=x_i^t\in[0,1]\) and \(v_i^{t}(o)=0\) for all \(o\neq i\).
Then \(M_i^t=x_i^t\) and \(\Prop_i^t=\frac1n\sum_{r=1}^t x_i^r\).
Choosing outcome \(a^{t+1}\in[n]\) yields
\(u_i^{t+1}=u_i^t+x_i^{t+1}\) if \(a^{t+1}=i\) and \(u_i^{t+1}=u_i^t\) otherwise, so
\(
d_i^{t+1}
= d_i^t+\frac{x_i^{t+1}}{n}-\mathbf{1}[a^{t+1}=i]\cdot x_i^{t+1},
\)
matching the hard-instance recurrence for the shifted slack \(Z_i^t:=2c-d_i^t\), up to the fixed additive shift.
\end{proof}

\begin{lemmarep}[\(\EFax{c}\) (items) \(\Rightarrow\) bounded \(c\)-fairness]
\label{lem:lb-items-efx-implies-weak}
In the online item-allocation model of Case~\ref{subsec:ex-propxc}, assume \(v_i(g)\in[0,1]\) for all agents \(i\) and items \(g\).
If the allocation is \(\EFax{c}\) at time \(t\), then
\(
v_i(P_i^t)\ \ge\ \Prop_i^t - c
\qquad\text{for all } i\in[n].
\)
\end{lemmarep}

\begin{proofsketch}
For each \(i\), pick \(j\) maximizing \(v_i(P_j^t)\), so \(d_i^t\le \Envy_{i\to j}^t\).
Under \(\EFax{c}\) and boundedness, \(\Envy_{i\to j}^t\le c\,s_{(i,j)}^t\le c\).
\end{proofsketch}

\begin{proof}
Work in the notation of Case~\ref{subsec:ex-propxc} and Case~\ref{subsec:ex-efc}, and fix \(t\) and an agent \(i\).
Let \(d_i^t:=\Prop_i^t-v_i(P_i^t)\), and choose \(j\in[n]\) maximizing \(v_i(P_j^t)\).
Since \(\frac1n\sum_{k=1}^n v_i(P_k^t)=\Prop_i^t\), we have \(v_i(P_j^t)\ge \Prop_i^t\) and thus
\[
d_i^t \le v_i(P_j^t)-v_i(P_i^t)=\Envy_{i\to j}^t.
\]
If \(d_i^t\le 0\) we are done, so assume \(d_i^t>0\), which implies \(j\neq i\).
Because the allocation is \(\EFax{c}\) at time \(t\), we have \(\Envy_{i\to j}^t\le c\,s_{(i,j)}^t\),
where \(s_{(i,j)}^t=\max\{v_i(g):g\in P_j^t\}\).
Under the boundedness assumption \(v_i(g)\le 1\) for all items \(g\), we have \(s_{(i,j)}^t\le 1\).
Therefore \(d_i^t\le \Envy_{i\to j}^t\le c\), i.e., \(v_i(P_i^t)\ge \Prop_i^t-c\).
\end{proof}

Together, Lemmas~\ref{lem:lb-items-propx-implies-weak}--\ref{lem:lb-items-efx-implies-weak} show that on bounded instances,
our max-normalized notions \(\PROPax{\cdot}\) (items and PDM) and \(\EFax{\cdot}\) (items) imply the weak bounded \(c\)-fairness requirement.
Hence the lower bound proved in \Cref{subsec:lb-hard} transfers to all three motivating settings.

In \Cref{sec:detailed_compare} we prove that the envy-freeness notion of \citet{benade2018make} also implies $\mathrm{bPROP}(c)$. Hence, our negative result also implies their negative result (and it is stronger, as it holds for proportionality, a much weaker fairness notion).


The lower bound shows that, under full-history fairness, a time-independent threshold is impossible in the unrestricted fully-online model.
This motivates the limited-memory models studied next: if old rounds are counted differently, stronger perpetual guarantees may become possible.

\section{Limited-Memory Fairness: Windows and Discounting}
\label{sec:limited-memory}
In the full-history model, the deficits \(z_q^t\) are computed from the entire realized prefix \(H^t\).
The lower bound of \Cref{sec:lower_bound} shows that this full-history fairness requirement cannot, in general, be maintained with a time-independent threshold.
We therefore study limited-memory fairness notions, where older rounds count less when fairness is evaluated.
We first show that hard windows can hide repeated long-run unfairness, and then give a smoother discounted-memory alternative that fits within our framework. \Cref{fig:memory-comparison} summarizes the three ways of counting the past that we compared in this section.

\begin{figure}[ht]
\centering
\begin{tikzpicture}[
  x=.055\linewidth,
  y=1cm,
  >=Latex,
  font=\small,
  bg/.style={draw=black!10, fill=black!1, rounded corners=8pt},
  panel/.style={draw=black!25, fill=white, rounded corners=5pt, line width=.5pt},
  arrow/.style={-{Latex[length=2.1mm]}, line width=.8pt, draw=black!55},
  cut/.style={dashed, draw=red!65!black, line width=.65pt},
  tiny/.style={font=\scriptsize, text=black!70},
  rowlab/.style={font=\bfseries\small, text=black!88, anchor=east},
  badge/.style={draw=black!22, fill=white, rounded corners=2pt, inner sep=2pt, font=\scriptsize}
]

\draw[bg] (0,0) rectangle (17.8,5.5);

\node[font=\bfseries\large] at (8.9,5.95) {How much does each past round count?};
\node[tiny] at (8.9,5.62) {weights assigned at the current time \(t\)};

\node[rowlab] at (3.25,4.75) {full history};
\node[rowlab] at (3.25,3.35) {hard window};
\node[rowlab] at (3.25,1.95) {discounting};

\draw[panel, fill=blue!2]   (3.55,4.25) rectangle (15.95,5.12);
\draw[panel, fill=orange!3] (3.55,2.85) rectangle (15.95,3.72);
\draw[panel, fill=green!3]  (3.55,1.45) rectangle (15.95,2.32);

\foreach \x in {4.05,5.05,6.05,7.05,8.05,9.05,10.05,11.05,12.05,13.05}{
  \draw[fill=blue!34, draw=blue!55!black, line width=.35pt]
    (\x,4.45) rectangle +(0.58,.44);
}
\node[badge, fill=blue!7] at (14.72,4.68) {\(w_r=1\)};
\node[tiny] at (14.72,4.38) {all rounds count};

\foreach \x in {4.05,5.05,6.05,7.05,8.05,9.05}{
  \draw[fill=black!8, draw=black!22, line width=.3pt]
    (\x,3.05) rectangle +(0.58,.44);
  \draw[draw=black!26, line width=.35pt]
    (\x+.08,3.10) -- (\x+.50,3.44);
}
\foreach \x in {10.05,11.05,12.05,13.05}{
  \draw[fill=orange!42, draw=orange!65!black, line width=.35pt]
    (\x,3.05) rectangle +(0.58,.44);
}
\draw[cut] (9.75,2.88) -- (9.75,3.70);
\node[tiny, text=red!65!black] at (9.75,3.86) {cutoff};
\node[badge, fill=orange!9] at (14.72,3.28) {\(w_r=0/1\)};
\node[tiny] at (14.72,2.98) {old rounds vanish};

\foreach \x/\h/\shade in {
  4.05/.12/10,
  5.05/.16/14,
  6.05/.20/18,
  7.05/.25/24,
  8.05/.31/31,
  9.05/.37/39,
  10.05/.43/47,
  11.05/.49/55,
  12.05/.54/63,
  13.05/.58/70
}{
  \draw[fill=green!\shade, draw=green!55!black, line width=.35pt]
    (\x,1.63) rectangle +(0.58,\h);
}
\node[badge, fill=green!8] at (14.72,1.88) {\(w_r=\prod\gamma\)};
\node[tiny] at (14.72,1.58) {smooth fading};

\node[tiny] at (4.34,0.95) {old};
\node[tiny] at (8.34,0.95) {\(\cdots\)};
\node[tiny] at (10.34,0.95) {\(t-W+1\)};
\node[tiny] at (13.34,0.95) {\(t\)};
\draw[arrow] (3.95,1.16) -- (13.85,1.16);
\node[tiny] at (14.55,1.16) {recent};

\draw[draw=black!35, line width=.55pt] (13.34,1.34) -- (13.34,5.22);
\node[tiny, fill=black!1, inner sep=1pt] at (13.34,5.37) {now};

\end{tikzpicture}
\caption{Three ways of counting the past. Full-history fairness gives every past round equal weight. A hard window keeps only the most recent rounds and drops older rounds at a sharp cutoff. Discounted memory keeps every round but assigns gradually smaller weight to older rounds.}
\Description{A comparison of three memory rules over time: full-history memory gives equal weight to all past rounds, hard-window memory keeps only recent rounds, and discounted memory gives gradually smaller weight to older rounds.}
\label{fig:memory-comparison}
\end{figure}

\subsection{Hard windows can hide repeated unfairness}
We illustrate the issue using the weak bounded proportionality notion \(\mathrm{bPROP}(c)\) defined in \Cref{subsec:hard-instance}.
To keep the discussion simple, we describe windowing in the online item-allocation setting with \(n=2\) agents.
Fix a window length \(W\ge 1\).
A windowed version of \(\mathrm{bPROP}(c)\) checks proportionality only inside the most recent \(W\) rounds.
Thus, at each time, the window test ignores all items that arrived before the current window.
We write \(\mathrm{bPROP}_W(c)\) for the requirement that, in every length-\(W\) window, each agent receives at least her proportional share of the value in that window, up to an additive loss of \(c\).

The following example shows why this kind of guarantee can be misleading.
Every short window will look approximately fair, but the same agent will be slightly disadvantaged in every repetition.
Since the window forgets older rounds, these small losses never need to be repaired.

\paragraph{A numeric example.}
In \Cref{ex:window-bprop-numbers} we show that \(\mathrm{bPROP}_W(c)\) can hold forever while the \emph{full-history} proportional deficit grows without bound.
The point of the example is simple: each local window contains almost the right balance of value, but the same agent is always on the losing side of the small imbalance. This is visualized in \Cref{fig:hard-window-reset}.

\begin{example}[Hard window hides unbounded long-run unfairness]
\label{ex:window-bprop-numbers}
Take \(n=2\) and window length \(W=3\).
Assume both agents value each arriving item by its listed number (so values are identical across agents).

Consider the repeating sequence of item values:
\[
1,\ 0.3,\ 0.3,\ 1,\ 0.3,\ 0.3,\ 1,\ 0.3,\ 0.3,\ \ldots
\]
Allocate every value-\(1\) item to agent~1, and allocate every value-\(0.3\) item to agent~2.

Let us first check the windowed guarantee.
Every three consecutive rounds contain exactly one item of value \(1\) and two items of value \(0.3\).
The total value in such a window is therefore \(1.6\), so each agent's proportional share inside the window is \(0.8\).

In the same window, agent~1 receives the value-\(1\) item, while agent~2 receives the two value-\(0.3\) items, for total value \(0.6\).
Thus agent~1 is above her proportional share, and agent~2 is below her proportional share by only \(0.2\).
In particular, every length-\(3\) window satisfies \(\mathrm{bPROP}_3(1)\).

But full-history proportionality fails badly.
In each three-round block, the total value is \(1.6\), so each agent's proportional share for that block is \(0.8\).
Agent~2 receives only \(0.6\), so she falls behind by \(0.2\) in every block.

These losses accumulate.
After \(K\) blocks, at time \(t=3K\), agent~2's full-history deficit equals \(0.2K\), which grows linearly with time without bound.
\end{example}

\begin{figure}[ht]
\centering
\resizebox{0.98\linewidth}{!}{%
\begin{tikzpicture}[
  >=Stealth,
  font=\small,
  card/.style={rounded corners=6pt, draw=green!45!black, fill=green!5, line width=0.9pt},
  one/.style={rounded corners=2pt, draw=blue!60!black, fill=blue!9, minimum width=0.72cm, minimum height=0.52cm, font=\footnotesize, inner sep=1pt},
  small/.style={rounded corners=2pt, draw=orange!75!black, fill=orange!13, minimum width=0.72cm, minimum height=0.52cm, font=\footnotesize, inner sep=1pt},
  redbox/.style={rounded corners=6pt, draw=red!60!black, fill=red!4, line width=0.9pt},
  token/.style={circle, draw=red!65!black, fill=red!10, minimum size=0.58cm, inner sep=0pt, font=\scriptsize},
  arrow/.style={-{Stealth[length=2.1mm]}, thick, draw=black!45},
  redarrow/.style={-{Stealth[length=2.0mm]}, thick, draw=red!65!black},
  label/.style={font=\scriptsize, fill=white, inner sep=1.2pt, rounded corners=1pt}
]

\node[font=\bfseries] at (7.0,5.08) {Hard windows reset; full history accumulates};

\node[font=\bfseries\small, text=green!45!black, anchor=east] at (-0.15,3.62) {window view};
\node[font=\scriptsize, text=green!45!black, anchor=east, align=right] at (-0.15,3.30) {only current\\block is checked};

\node[font=\bfseries\small, text=red!65!black, anchor=east] at (-0.15,1.22) {full history};
\node[font=\scriptsize, text=red!65!black, anchor=east, align=right] at (-0.15,0.88) {losses are\\kept};

\foreach \i/\x in {1/0.55,2/5.00,3/9.45}{
  \draw[card] (\x,2.75) rectangle +(3.25,1.72);

  \node[font=\scriptsize, text=black!60] at (\x+1.625,4.22) {window \i};

  \node[one]   at (\x+0.75,3.68) {\(1\)};
  \node[small] at (\x+1.625,3.68) {\(.3\)};
  \node[small] at (\x+2.50,3.68) {\(.3\)};

  \node[font=\scriptsize] at (\x+0.75,3.27) {\(A_1\)};
  \node[font=\scriptsize] at (\x+1.625,3.27) {\(A_2\)};
  \node[font=\scriptsize] at (\x+2.50,3.27) {\(A_2\)};

  \node[font=\scriptsize, text=green!45!black] at (\x+1.625,2.96)
    {loss \(0.2\): OK};
}

\draw[arrow] (3.95,3.62) -- (4.85,3.62);
\draw[arrow] (8.40,3.62) -- (9.30,3.62);

\node[label, text=green!45!black] at (7.0,4.67)
  {same local pattern repeats};

\draw[redbox] (0.55,0.25) rectangle (13.35,2.02);

\node[font=\scriptsize, text=black!60] at (6.95,1.78)
  {each block contributes one more hidden loss};

\node[token] (t1) at (2.05,1.10) {\(+.2\)};
\node[font=\Large, text=red!65!black] at (3.45,1.10) {\(+\)};

\node[token] (t2) at (4.85,1.10) {\(+.2\)};
\node[font=\Large, text=red!65!black] at (6.25,1.10) {\(+\)};

\node[token] (t3) at (7.65,1.10) {\(+.2\)};
\node[font=\Large, text=red!65!black] at (9.05,1.10) {\(+\cdots\)};

\node[font=\Large, text=red!65!black] at (11.35,1.10) {\(=0.2K\)};

\draw[redarrow] (2.175,2.73) -- (t1.north);
\draw[redarrow] (6.625,2.73) -- (t2.north);
\draw[redarrow] (11.075,2.73) -- (t3.north);

\end{tikzpicture}%
}
\caption{Hard windows can hide repeated unfairness.
Each length-\(3\) window sees only the current block \(1,0.3,0.3\), so agent~2 is
short by only \(0.2\) inside the window and the windowed test passes.
But the same loss is repeated in every block; a full-history deficit does not
reset, so after \(K\) blocks agent~2's deficit is \(0.2K\).}
\Description{A two-row schematic. The top row shows three repeated hard-window snapshots, each containing the same block with values 1, 0.3, and 0.3 allocated to agents A1, A2, and A2. Each window passes with local loss 0.2. The bottom row shows the full-history deficit accumulating one +0.2 loss from each block, yielding 0.2K after K blocks.}
\label{fig:hard-window-reset}
\end{figure}

This is the core problem with a hard window.
The window test only sees the unfairness that remains inside the current window.
Once a deficit is older than \(W\) rounds, it disappears from the window test completely.
As a result, the mechanism may repeat the same unfair pattern again and again: every fixed window still looks fair, but the long-run deficit keeps growing.

This motivates a smoother memory model, where old rounds gradually lose weight
rather than disappearing abruptly. The key point is that discounted memory
changes how fairness is evaluated over time.
We are no longer asking whether the full history is fair; we are asking whether the discounted history is fair.

\subsection{Discounted memory in the general framework}
\label{subsec:discounted-memory-framework}
\label{subsec:apx-future-inflation-framework}

We now describe discounted memory directly in the general deficit framework of
\Cref{subsec:norm-def-framework}. The domain still determines what the tracked
requirements \(q\in Q\) mean and how the current round affects them. The memory
rule determines how much of the previously accumulated deficit remains visible.

The framework has three ingredients: a memory function, discounted deficits, and
a discounted analogue of local balance.

\begin{definition}[Memory function and squared history length]
\label{def:memory-function}
A \emph{memory function} is a sequence
\[
\gamma:\mathbb Z_{\ge0}\to(0,1].
\]
In round \(t+1\), the previously accumulated discounted deficit is multiplied by
the current memory factor \(\gamma(t)\). Thus full history corresponds to
\(\gamma(t)\equiv1\), while constant geometric discounting corresponds to
\(\gamma(t)\equiv\lambda\) for some \(\lambda\in(0,1)\).

For \(1\le r\le t\), the \emph{retained weight} of round \(r\) at time \(t\) is
\[
w_{r,t}^{\gamma}
:=
\prod_{\ell=r}^{t-1}\gamma(\ell),
\]
with the empty product equal to \(1\). Thus the current round has weight \(1\),
and older rounds have weights obtained by multiplying the memory factors that
occur after them.
\end{definition}

\begin{definition}[Discounted deficits and discounted fairness]
\label{def:discounted-deficits-fairness}
Fix a memory function \(\gamma\). 
Similarly to the full-memory case, we define a deficit for each requirement $q$ and round $t$. To emphasize that the deficit values can be different for different memory functions, we add the superscript $\gamma$:
\[
z_q^{t+1,\gamma}(\cdot):A^{t+1}\to\mathbb R_{\ge0}
\qquad(q\in Q).
\]
After the algorithm chooses \(a^{t+1}\in A^{t+1}\), the realized discounted
deficit is
\[
z_q^{t+1,\gamma}:=z_q^{t+1,\gamma}(a^{t+1}).
\]
We set \(z_q^{0,\gamma}:=0\) for all \(q\in Q\).

Similarly to the full-memory case, for a threshold \(c_t^\gamma\), the outcome is \emph{discounted \(c_t^\gamma\)-fair} at time \(t\) if
\[
z_q^{t,\gamma}\le c_t^\gamma
\qquad\text{for every }q\in Q .
\]
\end{definition}
Thus discounted fairness is measured on the discounted deficits induced by the chosen memory function \(\gamma\).

\begin{definition}[Discounted local balance conditions]
\label{def:discounted-moments}
Fix a memory function \(\gamma\). We say that an instance satisfies the
\emph{\(\gamma\)-discounted local balance conditions with parameters \(n\) and
\(\sigma^2\ge1\)} if for every round \(t+1\ge1\) there exist \(n\) not
necessarily distinct \emph{reference actions}
\[
\hat a^{t+1}_1,\ldots,\hat a^{t+1}_n\in A^{t+1}
\]
such that for every tracked requirement \(q\in Q\) there exist numbers
\[
\Delta_q^{t+1}(1),\ldots,\Delta_q^{t+1}(n)\in\mathbb R
\]
such that
\begin{equation}
\label{eq:discounted-shift}
z_q^{t+1,\gamma}(\hat a^{t+1}_k)
\le
\bigl[\gamma(t)z_q^{t,\gamma}+\Delta_q^{t+1}(k)\bigr]_+
\qquad\text{for every }k\in[n].
\end{equation}
The shifts must satisfy the same local balance inequalities as in the full-memory case --- mean bound \eqref{eq:first-moment-mean} and squared-size bound \eqref{eq:second-moment-sigma}.
\end{definition}
The only difference from the full-history model is that, before the current round contributes its one-step shift, the old deficit is contracted by the memory factor \(\gamma(t)\).

\paragraph{Potential notation for discounted deficits.}
Fix \(p\ge1\), and let \(\sigma^2\) be the squared-size parameter in the
discounted local balance conditions. We define $f$ as in the full-history analysis
\eqref{eq:f-def-general}.
For the realized discounted deficit vector at time \(t\), write
\[
\Phi^{t,\gamma}:=\sum_{q\in Q} f(z_q^{t,\gamma}),
\qquad
\Psi^{t,\gamma}:=(\Phi^{t,\gamma})^{1/p}.
\]
For a candidate action \(a\in A^{t+1}\), write
\[
\Phi^{t+1,\gamma}(a)
:=
\sum_{q\in Q} f(z_q^{t+1,\gamma}(a)),
\qquad
\Psi^{t+1,\gamma}(a):=(\Phi^{t+1,\gamma}(a))^{1/p}.
\]
Thus the \(p\)-potential rule chooses an action
\[
a^{t+1}\in\arg\min_{a\in A^{t+1}}\Psi^{t+1,\gamma}(a),
\]
equivalently minimizing \(\Phi^{t+1,\gamma}(a)\). After the action is chosen,
we write
\[
\Phi^{t+1,\gamma}:=\Phi^{t+1,\gamma}(a^{t+1}),
\qquad
\Psi^{t+1,\gamma}:=\Psi^{t+1,\gamma}(a^{t+1}).
\]

With this notation in place, the full-history potential argument extends almost
verbatim. The only change is that the raw time \(t\) is replaced by an expression that depends on the memory function $\gamma$, which we call the \emph{squared history length} at time \(t\), and denote by \(\M_\gamma(t)\):
\begin{definition}[Squared history length]
\label{def:squared-history-length}
\begin{equation}
\label{eq:memory-length-explicit}
\M_\gamma(t)
:=
\sum_{r=1}^{t}\bigl(w_{r,t}^{\gamma}\bigr)^2
=
\sum_{r=1}^{t}\prod_{\ell=r}^{t-1}\gamma(\ell)^2 .
\end{equation}
Equivalently,
\begin{equation}
\label{eq:memory-length-recursion}
\M_\gamma(0):=0,
\qquad
\M_\gamma(t+1):=1+\gamma(t)^2\M_\gamma(t).
\end{equation}
\end{definition}


The squared history length is the effective number of fully-remembered rounds
for the squared-potential analysis. Note that $M_{\gamma}(t) = t$ when $\gamma\equiv 1$. 
We return to concrete choices of
\(\gamma\), and to the resulting values of \(\M_\gamma(t)\), in
\Cref{subsec:interpreting-discounted-memory}.

\begin{lemmarep}[Potential bound under discounted memory]
\label{lem:discount-uniform-bound}
Assume the \(\gamma\)-discounted local balance conditions hold with parameters
\(n\) and \(\sigma^2\ge1\). Let \(m:=|Q|\ge2\). For the fixed \(p\ge1\) in the
potential notation above, if the \(p\)-potential rule is run on the discounted
deficits, then, for every time \(t\),
\[
\max_{q\in Q} z_q^{t,\gamma}
\le
m^{1/(2p)}
\sqrt{
4p^2\sigma^2+
\frac{2\sqrt e\,p\sigma^2}{n}\M_\gamma(t)
}.
\]
\end{lemmarep}

\begin{proofsketch}
The proof is the full-history potential argument applied after contracting the
old deficit vector by \(\gamma(t)\). In round \(t+1\), the Taylor estimate is
applied with baseline \(\gamma(t)z_q^{t,\gamma}\) instead of \(z_q^t\). This
creates the same one-step additive term as before, but the previously
accumulated potential excess is multiplied by \(\gamma(t)^2\). Unrolling these
multipliers gives exactly \(\M_\gamma(t)\).
\end{proofsketch}

\begin{proof}
Fix a round \(t+1\). Since the chosen action minimizes
\(\Phi^{t+1,\gamma}(a)\), it is no worse than the average over the reference
actions from \Cref{def:discounted-moments}. Hence
\[
(\Psi^{t+1,\gamma})^p
=
\Phi^{t+1,\gamma}
\le
\frac1n\sum_{k=1}^n\Phi^{t+1,\gamma}(\hat a_k^{t+1}).
\]
Expanding the definition of \(\Phi^{t+1,\gamma}(\cdot)\), this gives
\[
(\Psi^{t+1,\gamma})^p
\le
\sum_{q\in Q}
\frac1n\sum_{k=1}^n
f\!\left(z_q^{t+1,\gamma}(\hat a_k^{t+1})\right).
\]

By the discounted local balance condition \eqref{eq:discounted-shift} from
\Cref{def:discounted-moments},
\[
z_q^{t+1,\gamma}(\hat a_k^{t+1})
\le
\bigl[
\gamma(t)z_q^{t,\gamma}
+
\Delta_q^{t+1}(k)
\bigr]_+ ,
\]
where
\[
\sum_{k=1}^n\Delta_q^{t+1}(k)\le0,
\qquad
\sum_{k=1}^n\bigl(\Delta_q^{t+1}(k)\bigr)^2\le\sigma^2 .
\]
Therefore, for each fixed \(q\), the one-dimensional Taylor estimate
\Cref{lem:moment-taylor} applies with baseline
\(\gamma(t)z_q^{t,\gamma}\). Thus
\[
\frac1n\sum_{k=1}^n
f\!\left(z_q^{t+1,\gamma}(\hat a_k^{t+1})\right)
\le
f\!\left(\gamma(t)z_q^{t,\gamma}\right)
+
\frac{2\sqrt e\,p^2\sigma^2}{n}
\left(
(\gamma(t)z_q^{t,\gamma})^2+4p^2\sigma^2
\right)^{p-1}.
\]
Summing over \(q\in Q\), we get
\[
(\Psi^{t+1,\gamma})^p
\le
\sum_{q\in Q} f\!\left(\gamma(t)z_q^{t,\gamma}\right)
+
\frac{2\sqrt e\,p^2\sigma^2}{n}
\sum_{q\in Q}
\left(
(\gamma(t)z_q^{t,\gamma})^2+4p^2\sigma^2
\right)^{p-1}.
\]
By the Jensen power-sum bound \Cref{lem:jensen-power-sum-general},
\[
\sum_{q\in Q}
\left(
(\gamma(t)z_q^{t,\gamma})^2+4p^2\sigma^2
\right)^{p-1}
\le
m^{1/p}
\left(
\sum_{q\in Q} f\!\left(\gamma(t)z_q^{t,\gamma}\right)
\right)^{(p-1)/p}.
\]
Applying the concavity-transfer inequality
\Cref{lem:concavity-transfer-general} yields
\[
\Psi^{t+1,\gamma}
\le
\left(
\sum_{q\in Q} f\!\left(\gamma(t)z_q^{t,\gamma}\right)
\right)^{1/p}
+
\frac{2\sqrt e\,p\sigma^2}{n}m^{1/p}.
\]

We now compare the contracted potential with the previous potential. For every
\(q\in Q\),
\[
(\gamma(t)z_q^{t,\gamma})^2+4p^2\sigma^2
=
\gamma(t)^2\left((z_q^{t,\gamma})^2+4p^2\sigma^2\right)
+
\left(1-\gamma(t)^2\right)4p^2\sigma^2 .
\]
Define two nonnegative vectors \(U,V\in\mathbb R_{\ge 0}^{Q}\) by
\[
U_q
:=
\gamma(t)^2\left((z_q^{t,\gamma})^2+4p^2\sigma^2\right),
\qquad
V_q
:=
\left(1-\gamma(t)^2\right)4p^2\sigma^2 .
\]
Then, for each \(q\in Q\),
\[
U_q+V_q
=
(\gamma(t)z_q^{t,\gamma})^2+4p^2\sigma^2 .
\]
Hence, by the definitions of \(f\) and the \(\ell_p\)-norm,
\[
\left(
\sum_{q\in Q} f\!\left(\gamma(t)z_q^{t,\gamma}\right)
\right)^{1/p}
=
\left(
\sum_{q\in Q}
\left(
(\gamma(t)z_q^{t,\gamma})^2+4p^2\sigma^2
\right)^p
\right)^{1/p}
=
\|U+V\|_p .
\]
Applying Minkowski's inequality in \(\ell_p(Q)\), namely
\[
\|U+V\|_p\le \|U\|_p+\|V\|_p ,
\]
gives
\[
\left(
\sum_{q\in Q} f\!\left(\gamma(t)z_q^{t,\gamma}\right)
\right)^{1/p}
\le
\|U\|_p+\|V\|_p .
\]
We compute the two terms separately. First,
\[
\|U\|_p
=
\left(
\sum_{q\in Q}
\left[
\gamma(t)^2
\left((z_q^{t,\gamma})^2+4p^2\sigma^2\right)
\right]^p
\right)^{1/p}
=
\gamma(t)^2
\left(
\sum_{q\in Q}
\left((z_q^{t,\gamma})^2+4p^2\sigma^2\right)^p
\right)^{1/p}.
\]
By the definitions of \(f\) and \(\Psi^{t,\gamma}\), this is
\[
\|U\|_p
=
\gamma(t)^2
\left(
\sum_{q\in Q} f(z_q^{t,\gamma})
\right)^{1/p}
=
\gamma(t)^2\Psi^{t,\gamma}.
\]
Second, since \(V_q\) is constant over \(q\in Q\),
\[
\|V\|_p
=
\left(
\sum_{q\in Q}
\left[
\left(1-\gamma(t)^2\right)4p^2\sigma^2
\right]^p
\right)^{1/p}
=
\left(1-\gamma(t)^2\right)m^{1/p}4p^2\sigma^2 .
\]
Therefore
\[
\left(
\sum_{q\in Q} f\!\left(\gamma(t)z_q^{t,\gamma}\right)
\right)^{1/p}
\le
\gamma(t)^2\Psi^{t,\gamma}
+
\left(1-\gamma(t)^2\right)m^{1/p}4p^2\sigma^2 .
\]

Combining this with the previous one-step bound gives
\[
\Psi^{t+1,\gamma}
\le
\gamma(t)^2\Psi^{t,\gamma}
+
\left(1-\gamma(t)^2\right)m^{1/p}4p^2\sigma^2
+
\frac{2\sqrt e\,p\sigma^2}{n}m^{1/p}.
\]
Equivalently,
\[
\frac{\Psi^{t+1,\gamma}}{m^{1/p}}-4p^2\sigma^2
\le
\gamma(t)^2
\left(
\frac{\Psi^{t,\gamma}}{m^{1/p}}-4p^2\sigma^2
\right)
+
\frac{2\sqrt e\,p\sigma^2}{n}.
\]

Since \(z_q^{0,\gamma}=0\) for all \(q\in Q\), we have
\[
\Psi^{0,\gamma}
=
\left(
\sum_{q\in Q}
(4p^2\sigma^2)^p
\right)^{1/p}
=
m^{1/p}4p^2\sigma^2,
\]
and therefore
\[
\frac{\Psi^{0,\gamma}}{m^{1/p}}-4p^2\sigma^2=0 .
\]
Iterating the one-dimensional recursion from time \(0\) to time \(t\) gives
\[
\frac{\Psi^{t,\gamma}}{m^{1/p}}-4p^2\sigma^2
\le
\frac{2\sqrt e\,p\sigma^2}{n}
\sum_{r=1}^{t}\prod_{\ell=r}^{t-1}\gamma(\ell)^2
=
\frac{2\sqrt e\,p\sigma^2}{n}\M_\gamma(t),
\]
where the last equality is \eqref{eq:memory-length-explicit}. Hence
\[
\Psi^{t,\gamma}
\le
m^{1/p}
\left(
4p^2\sigma^2+
\frac{2\sqrt e\,p\sigma^2}{n}\M_\gamma(t)
\right).
\]

Finally, for every \(q\in Q\),
\[
\left((z_q^{t,\gamma})^2+4p^2\sigma^2\right)^p
\le
\Phi^{t,\gamma}
=
(\Psi^{t,\gamma})^p .
\]
Therefore
\[
(z_q^{t,\gamma})^2+4p^2\sigma^2
\le
\Psi^{t,\gamma}
\le
m^{1/p}
\left(
4p^2\sigma^2+
\frac{2\sqrt e\,p\sigma^2}{n}\M_\gamma(t)
\right).
\]
Dropping the nonnegative term \(4p^2\sigma^2\) on the left and taking square
roots gives
\[
z_q^{t,\gamma}
\le
m^{1/(2p)}
\sqrt{
4p^2\sigma^2+
\frac{2\sqrt e\,p\sigma^2}{n}\M_\gamma(t)
}.
\]
Taking the maximum over \(q\in Q\) proves the claim.
\end{proof}

\subsection{Interpreting the discounted-memory bound}
\label{subsec:interpreting-discounted-memory}

The discounted bound separates the potential analysis from the choice of memory
rule. Once a memory function \(\gamma\) is fixed, all dependence on time enters
through the squared history length \(\M_\gamma(t)\). In the general theorem,
with the standard choice \(p=\ln(2m)\), where \(m=|Q|\), the bound has the form
\[
O\!\left(
\sigma\ln m+
\sigma\sqrt{\frac{\M_\gamma(t)\ln m}{n}}
\right).
\]
Here \(\sigma^2\) is the squared-size parameter in the local balance condition,
and \(n\) is the number of reference actions in that condition. Thus
\(\M_\gamma(t)\) plays the same role that \(t\) played in the full-history
bound. A larger value of \(\M_\gamma(t)\) means that the fairness test retains
more of the past; a smaller value means that old rounds fade more aggressively.

The recursion
\[
\M_\gamma(t+1)=1+\gamma(t)^2\M_\gamma(t)
\]
also gives a useful way to read the guarantee. Since \(0<\gamma(t)\le1\), we
always have \(\M_\gamma(t)\le t\). Therefore discounted memory is never worse
than full history in its time dependence. The difference between memory rules is
how quickly \(\M_\gamma(t)\) grows.

\begin{example}[Full history]
If \(\gamma(t)\equiv1\), then no past round is discounted. Every round keeps
weight \(1\), and therefore
\[
\M_\gamma(t)=t.
\]
The general guarantee becomes
\[
O\!\left(
\sigma\ln m+
\sigma\sqrt{\frac{t\ln m}{n}}
\right).
\]
This is the most history-sensitive notion in this family, because every past
round remains fully relevant.
\end{example}

\begin{example}[Geometric memory]
At the other extreme, suppose \(\gamma(t)\equiv\lambda\) for a fixed
\(\lambda\in(0,1)\). Then a round from \(j\) steps ago has retained weight
\(\lambda^j\), so
\[
\M_\gamma(t)
=
\sum_{j=0}^{t-1}\lambda^{2j}
=
\frac{1-\lambda^{2t}}{1-\lambda^2}
\le
\frac1{1-\lambda^2}.
\]
Thus \(\M_\gamma(t)\) is bounded uniformly in \(t\), and the general guarantee is
time-uniform:
\[
O\!\left(
\sigma\ln m+
\sigma\sqrt{\frac{\ln m}{n(1-\lambda^2)}}
\right).
\]
For instance,
\[
\lambda=\frac12
\quad\Longrightarrow\quad
\M_\gamma(t)\le \frac{4}{3},
\]
whereas
\[
\lambda=0.9
\quad\Longrightarrow\quad
\M_\gamma(t)\le \frac{1}{1-0.9^2}\approx5.27 .
\]
Larger \(\lambda\) means slower forgetting. The fairness test then retains more
of the past, and the bound correspondingly becomes larger.

More generally, if there exist \(t_0\ge1\) and \(\lambda<1\) such that
\(\gamma(t)\le\lambda\) for every \(t\ge t_0\), then
\[
\M_\gamma(t)
\le
\max\left\{
\M_\gamma(t_0),\frac{1}{1-\lambda^2}
\right\}
\qquad (t\ge t_0).
\]
Hence the resulting fairness bound is uniform in time, with a constant that
also depends on the finite transient up to \(t_0\).
\end{example}

The recursion can also be read in reverse. If we want a target squared history
length \(L_t\), then choosing
\[
\gamma(t)^2=\frac{L_{t+1}-1}{L_t}
\]
makes the recursion produce \(\M_\gamma(t)=L_t\), provided the right-hand side
lies in \((0,1]\).

\begin{example}[Polynomial remembered history]
The memory rule can be chosen so that the effective squared history length grows
sublinearly. Fix \(\alpha\in(0,1)\), and suppose we want
\[
\M_\gamma(t)=t^\alpha
\qquad(t\ge1).
\]
Using the recursion, this is obtained by choosing
\[
\gamma(t)^2
=
\frac{(t+1)^\alpha-1}{t^\alpha}
\qquad(t\ge1).
\]
Then the general guarantee becomes
\[
O\!\left(
\sigma\ln m+
\sigma\sqrt{\frac{t^\alpha\ln m}{n}}
\right).
\]
For example, \(\alpha=1/2\) gives \(\M_\gamma(t)=\sqrt t\), and hence
\[
O\!\left(
\sigma\ln m+
\sigma\sqrt{\frac{\sqrt t\,\ln m}{n}}
\right).
\]
This rule retains more and more history, but much more slowly than full history.
\end{example}

\begin{example}[Logarithmic remembered history]
As a slower-growing example, suppose we want
\[
\M_\gamma(t)=1+\ln t
\qquad(t\ge1).
\]
Since \(\M_\gamma(1)=1\), it is enough to choose
\[
\gamma(t)^2
=
\frac{\M_\gamma(t+1)-1}{\M_\gamma(t)}
=
\frac{\ln(t+1)}{1+\ln t}
\qquad(t\ge1).
\]
Then the recursion gives \(\M_\gamma(t)=1+\ln t\), and the general guarantee
becomes
\[
O\!\left(
\sigma\ln m+
\sigma\sqrt{\frac{\ln t\cdot\ln m}{n}}
\right).
\]
Here the amount of retained history grows with time, but only logarithmically.
\end{example}

In summary, discounted memory gives a spectrum of memory rules:
\begin{center}
\renewcommand{\arraystretch}{1.55}
\setlength{\tabcolsep}{2pt}
\begin{tabular}{@{}p{0.18\linewidth}p{0.25\linewidth}p{0.22\linewidth}p{0.26\linewidth}@{}}
\hline
\textbf{Rule}
&
\(\boldsymbol{\gamma(t)^2}\)
&
\(\boldsymbol{\M_\gamma(t)}\)
&
\(\boldsymbol{\text{Extra term}}\)
\\[3pt]
\hline
Full history
&
\(1\)
&
\(t\)
&
\(O\!\left(\sigma\sqrt{t\ln m/n}\right)\)
\\[7pt]
Polynomial
&
\(\displaystyle
\frac{(t+1)^\alpha-1}{t^\alpha}
\)
&
\(t^\alpha\), fixed \(\alpha\in(0,1)\)
&
\(O\!\left(\sigma\sqrt{t^\alpha\ln m/n}\right)\)
\\[9pt]
Logarithmic
&
\(\displaystyle
\frac{\ln(t+1)}{1+\ln t}
\)
&
\(1+\ln t\)
&
\(O\!\left(\sigma\sqrt{\ln t\cdot\ln m/n}\right)\)
\\[9pt]
Geometric
&
\(\lambda^2\), fixed \(\lambda\in(0,1)\)
&
\(\displaystyle
\le\frac{1}{1-\lambda^2}
\)
&
\(O\!\left(\sigma\sqrt{\ln m/(n(1-\lambda^2))}\right)\)
\\[7pt]
\hline
\end{tabular}
\end{center}
The additive \(\sigma\ln m\) term is present in all rows. Slower forgetting gives
a more history-sensitive fairness notion but a larger time-dependent term.
Faster forgetting gives a more local fairness notion but a smaller, and
sometimes time-uniform, guarantee.

For discounted proportional item allocation, the parameters specialize to
\(m=n\) and \(\sigma^2=1\).

\section{Conclusions and Future Work}
\label{sec:conclusion}

We introduced a generic framework for perpetual online fairness and, in the full-history model, a fully online rule that guarantees \(z_q^t \le c_{t}\) with \(c_{t}=\tilde O\!\left(\sigma\sqrt{t/n}\right)\).
We instantiated the framework for online indivisible-goods allocation and online public decisions, obtaining perpetual \(\PROPax{c}\) and \(\EFax{c}\) guarantees.
For classical \(\EFc\), we obtained a separate tight characterization.
Without restricting the possible item values in advance, the exact worst-case parameter at
horizon \(T\) is \(\lceil T/n\rceil\) against an adaptive adversary. When a
finite value set of size \(L\) is known in advance, our threshold
instantiation instead gives
\[
O\!\left(
\ln(nL)+\sqrt{\frac{t\ln(nL)}{n}}
\right).
\]

Taken together, our lower bounds clarify the price of full-history fairness.
When the whole past counts forever, some deterioration with time is unavoidable: even a substantially weaker bounded proportionality cannot generally be maintained with a constant threshold.
Accordingly, the time-growing guarantees in this paper should be read as near-optimal consequences of the fully-online model with full-history, rather than as a weakness of the potential-based analysis.

The lower bound also motivates changing how fairness counts the past. Our discounted-memory result shows that the same potential-rule framework then yields a threshold controlled by the squared history length \(\M_\gamma(t)\). In particular, the time dependence is never worse than the full-history \(\sqrt t\)-type dependence; it can be logarithmic for intermediate memory schedules, and it becomes time-uniform when \(\M_\gamma(t)\) is bounded.


\paragraph{Additional future directions.}
The general viewpoint that we propose suggests several natural extensions and refinements:
\begin{itemize}[leftmargin=1.5em]
\item \emph{Constants and parameter dependence.}
The square-root dependence on \(t\) is tight in the full-history worst case, but there is room to sharpen the dependence on $m=|{Q}|$ and on the number $n$ of reference actions through more refined potentials, adaptive choices of the potential parameter, or additional structure in the local balance conditions.

\item \emph{Stronger fairness objectives.}
We focused on \(c_t\)-fairness, namely the requirement that there are no \(c_t\)-violated requirements.
In applications one may want guarantees that are group-aware, weighted, or
prioritized (e.g., different ${q}$ carry different importance).
One possible approach is to modify the potential, or to duplicate tracked requirements with weights, while preserving the same local-balance proof strategy.

\item \emph{Beyond worst-case adversaries.}
Our guarantees and lower bounds are proved against a fully adaptive adversary. In many settings, inputs are structured or follow a stochastic model, in which one might hope for substantially lower perpetual fairness guarantees. A natural direction is thus to quantify how $c_{t}$ might improve as the adversary is weakened.

\item \emph{Limited recourse.}
While our main model is fully irrevocable, many practical systems allow small
amounts of controlled recourse (e.g., occasional swaps or bounded reassignments).
It would be interesting to extend the framework so that the action set $A^{t+1}$
includes limited recourse moves, and to quantify how even modest recourse improves perpetual fairness guarantees (without harming the worst-case guarantee).

\item \emph{Learning-augmented variants.}
In some settings, the input sequence is partially predictable.
A promising direction is to combine our framework with predictions while preserving the worst-case guarantees.

\end{itemize}

Overall, our results suggest that memory is a central modeling choice in perpetual fairness: the question is not only which fairness requirement to impose, but also how the system should count past unfairness.

\GenAIDisclosureInPaper

\begin{acks}
This research is partly funded by the Israel Science Foundation grant 1092/24.

\textbf{Generative AI usage disclosure.} \GenAIDisclosureText
\end{acks}

\printbibliography

\newpage
\appendix


\section{How our lower bound implies an envy lower bound in the normalized model of \texorpdfstring{\citet{benade2018make}}{Benade et al. (2018)}}
\label{sec:detailed_compare}

We prove that our negative result (\Cref{thm:lb-framework}) implies a matching square-root lower bound on envy in the
online fair division model of \citet{benade2018make}.
Note that \citet{benade2018make} assume that valuations are normalized to $[0,1]$.
Similarly, during the active lower-bound phase, the adversarial inputs used in our hard instance are vectors
\(x^t\in[1/2,1)^n\), and the later padding phase uses the all-zero vector. Hence all item values lie in \([0,1]\).
We first recall the envy definitions used by \citet{benade2018make}.

\paragraph{The normalized envy model of \citet{benade2018make}.}
There are $n$ agents and items arrive online, one per round.
At round $t$ an item arrives with values $(v_1(g^{t}),\ldots,v_n(g^{t}))\in[0,1]^n$
and must be allocated irrevocably to a single agent.
Let $P_i^t$ denote agent $i$'s bundle after $t$ rounds.

\begin{definition}
(a) The \emph{amount of envy agent $i$ has towards agent $j$ at time $t$}, denoted $Envy^{t}_{i,j}$, is defined as
\[
Envy^{t}_{i,j} := \max\{v_i(P^t_j) - v_i(P^t_i),\,0\}.
\]

(b) The \emph{amount of envy at time $t$}, denoted $Envy^{t}$, is defined as
\[
Envy^{t} := \max_{i,j \in [n]} Envy^{t}_{i,j}.
\]
\end{definition}

Now, fix a time $t$ and let $G^t:=P_1^t\cup\cdots\cup P_n^t$ be the set of arrived items.
Recall our definition from Application 1: 
$\Prop_i^t := \frac{1}{n}v_i(G^t)$.

Define the proportionality deficit
\[
d_i^t := \Prop_i^t-v_i(P_i^t).
\]

\begin{lemma}[Envy from proportionality deficit]
\label[lemma]{lem:envy-from-deficit-framework}
For any time $t$ and agent $i$,
\[
\max_{j\in[n]} Envy_{i,j}^t
\ \ge\
d_i^t.
\]
\end{lemma}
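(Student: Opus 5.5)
The plan is to compare agent $i$'s proportional share with the bundle she values most, exactly as in the proof of Lemma~\ref{lem:lb-items-efx-implies-weak}. By additivity and because the bundles $P_1^t,\ldots,P_n^t$ partition $G^t$,
\[
\Prop_i^t=\frac1n v_i(G^t)=\frac1n\sum_{k=1}^n v_i(P_k^t).
\]
An average is at most its maximum. So if I choose $j\in\arg\max_{k\in[n]} v_i(P_k^t)$, then $v_i(P_j^t)\ge \Prop_i^t$.

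Subtracting $v_i(P_i^t)$ from both sides gives
\[
v_i(P_j^t)-v_i(P_i^t)\ \ge\ \Prop_i^t-v_i(P_i^t)=d_i^t.
\]
Taking the positive part only increases the left-hand side, so $Envy_{i,j}^t=\max\{v_i(P_j^t)-v_i(P_i^t),0\}\ge d_i^t$. Therefore $\max_{j'} Envy_{i,j'}^t\ge Envy_{i,j}^t\ge d_i^t$.

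One degenerate case needs a remark. If $j=i$, then $Envy_{i,i}^t=0$, and the argument above already shows $d_i^t\le v_i(P_i^t)-v_i(P_i^t)=0$. So the bound still holds, and in any event the maximum over $j$ is at least $0$. Note also that the definition of $Envy^t$ ranges over all $i,j\in[n]$, including $j=i$, so no separate restriction to $j\neq i$ is needed.

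None of these steps is a real obstacle. The only point needing care is the identity $v_i(G^t)=\sum_k v_i(P_k^t)$, which uses additivity of $v_i$ together with the fact that every arrived item is allocated to exactly one agent.
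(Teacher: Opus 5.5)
Your proof is correct and follows essentially the same route as the paper: averaging $v_i(P_k^t)$ over the bundles to find a $j$ with $v_i(P_j^t)\ge \Prop_i^t$, then passing to the positive part. The paper handles the positive part by splitting into the cases $d_i^t\le 0$ and $d_i^t>0$, whereas you argue directly via $\max\{x,0\}\ge x$; the two are equivalent.
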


\begin{proof}
Since $\frac{1}{n}\sum_{j\in[n]} v_i(P^t_j)=\Prop_i^t$, there exists a $j$ with
$v_i(P^t_j)\ge \Prop_i^t$. Hence
\[
\max_j\{v_i(P_j^t)-v_i(P_i^t)\}
\ge
\Prop_i^t-v_i(P_i^t)
=
d_i^t.
\]
If \(d_i^t\le 0\), then the claim is trivial since envy is nonnegative. If \(d_i^t>0\), the same inequality implies
\[
\max_{j\in[n]} Envy_{i,j}^t\ge d_i^t.
\]
\end{proof}

We can now explain why the hard instance of \Cref{sec:lower_bound} is already contained in the
online fair division model above.
In each active round \(t+1\) of the hard instance, the adversary reveals a vector \(x^{t+1}\in[1/2,1)^n\); after the forced bad prefix is obtained, we may pad with all-zero vectors.
Interpret this as an arriving item $g^{t+1}$ with valuations $v_i(g^{t+1}) := x_i^{t+1}$.
This gives the same proportionality deficits \(d_i^t=\Prop_i^t-v_i(P_i^t)\) as in
\Cref{sec:lower_bound}.

The next theorem transfers the abstract lower bound to envy.
The reason is that a proportionality deficit for some agent implies envy toward at least one other agent, so the hard instance for proportionality also creates large envy.
\begin{theorem}[Envy lower bound in the normalized online fair-division model]
\label{thm:envy-lb-from-framework}
Fix \(n\ge 2\) and a horizon \(T\ge n\).
For every online allocation algorithm, there exists an adaptive valuation sequence
with item values in \([0,1]\) such that
\[
Envy^{T}
\in
\Omega\!\left(\sqrt{\frac{T}{n}}\right).
\]
In particular, for fixed \(n\), this specializes to \(Envy^{T}\in\Omega(\sqrt{T})\).
\end{theorem}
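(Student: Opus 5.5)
The plan is a direct reduction: run the adversary from \Cref{thm:lb-framework} and convert its proportionality lower bound into an envy lower bound via Lemma~\ref{lem:envy-from-deficit-framework}. The inputs already fit the model of \citet{benade2018make}. During the active phase the adversary reveals vectors \(x^{t+1}\in[1/2,1)^n\), and during the padding phase it reveals all-zero vectors. We interpret each vector as an item \(g^{t+1}\) with \(v_i(g^{t+1})=x_i^{t+1}\in[0,1]\). The proportionality deficits \(d_i^t\) are then exactly those analyzed in \Cref{sec:lower_bound}, for every realized run of any (possibly randomized) algorithm.

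\textbf{Step 1: extract a bad agent at time \(T\).} We follow the three-case analysis at the end of the proof of \Cref{thm:lb-framework}.

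If \(n\le T<4900n\), the adversary sends \(n-1\) all-ones items, then zero items until time \(T\). Some agent has no item, so \(d_i^T\ge (n-1)/n\ge 1/2>\frac1{140}\sqrt{T/n}\).

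If \(T\ge 4900n\), set \(c=\frac1{70}\sqrt{T/n}\ge1\), so that \(4900nc^2=T\). The first part of \Cref{thm:lb-framework} guarantees that no \(c\)-fair prefix has length \(T\). Hence some \(\tau\le T\) and some agent \(i\) satisfy \(d_i^\tau>c\). After \(\tau\), the adversary pads with zero items. These change neither utilities nor proportional shares, so \(d_i^T=d_i^\tau>c\).

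In both cases there is an agent \(i\) with \(d_i^T\ge \frac1{140}\sqrt{T/n}\).

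\textbf{Step 2: transfer to envy.} Apply Lemma~\ref{lem:envy-from-deficit-framework} at time \(T\):
\[
Envy^T\ \ge\ \max_j Envy_{i,j}^T\ \ge\ d_i^T\ \ge\ \tfrac1{140}\sqrt{T/n}.
\]
This gives \(Envy^T\in\Omega(\sqrt{T/n})\), and for fixed \(n\) it specializes to \(\Omega(\sqrt T)\). The argument is pathwise, so it covers randomized algorithms facing the adaptive adversary, as in \Cref{thm:lb-framework}.

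\textbf{Main obstacle.} The only delicate point is the time indexing. \Cref{thm:lb-framework} bounds the length of fair prefixes, so it yields a violation at some time \(\tau\le T\), not necessarily at \(T\) itself. The zero-padding argument closes this gap: zero-valued items leave every \(v_i(P_j^t)\) and \(v_i(G^t)\) unchanged, so both the deficit and the envy are frozen from \(\tau\) to \(T\). We also need to check that the padding vectors and the all-ones vectors lie in \([0,1]^n\), which is immediate. Everything else is a direct invocation of earlier results.
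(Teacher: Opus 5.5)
Your proposal is correct and follows the paper's proof: you use the time-indexed form of \Cref{thm:lb-framework} to get an agent with \(d_i^T=\Omega(\sqrt{T/n})\), and then apply Lemma~\ref{lem:envy-from-deficit-framework} to get \(Envy^T\ge d_i^T\). The paper simply cites that time-indexed form, while you re-derive it explicitly using the same case split and zero-padding that appear at the end of that theorem's proof.
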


\begin{proof}
By the time-indexed form of \Cref{thm:lb-framework}, there is a universal
constant \(\alpha>0\) such that, for every \(T\ge n\), the adaptive adversary can
force a prefix proportionality deficit larger than
\[
\alpha\sqrt{\frac{T}{n}}.
\]
After the first such prefix, the adversary may set all subsequent item values to
\(0\) for every agent. Hence utilities, proportional shares, and deficits do not
change afterward. Therefore, at time \(T\), there is some agent \(i\) with
\[
d_i^T
=
\Prop_i^T-v_i(P_i^T)
>
\alpha\sqrt{\frac{T}{n}}.
\]
By Lemma~\ref{lem:envy-from-deficit-framework},
\[
Envy^{T}
\ge
d_i^T.
\]
Thus
\[
Envy^{T}
\in
\Omega\!\left(\sqrt{\frac{T}{n}}\right).
\]
\end{proof}

\paragraph{Discussion.}
In the regime \(T\ge n\), the theorem gives
\(
Envy^T\in\Omega\!\left(\sqrt{\frac{T}{n}}\right)
\)
in the normalized online fair-division model of \citet{benade2018make}. Thus, in this regime, our lower bound
reaches the square-root rate corresponding to the endpoint \(r=1\). By comparison, \citet{benade2018make}
proved \(\Omega((T/n)^{r/2})\) for every fixed \(n\), every \(T\), and every \(r<1\).

\section{Comparison with \texorpdfstring{\citet{benade2018make}}{Benade et al. (2018)} deterministic online algorithm}
\label[appendix]{app:benade-intermediate-linear}
\citet{benade2018make} 
present a deterministic online allocation algorithm that guarantees that the envy at the final time $T$ is at most $O(\sqrt{\frac{T}{n}})$, which is analogous to our Corollary~\ref{cor:ef-application} .

However, their analysis relies on fixed global normalization: they assume that all item values are in $[0,1]$. This is equivalent to assuming that the maximum value of an item is known in advance.

One could try to remove this global normalization requirement by re-normalizing online, whenever a new
item exceeds the previously seen maximum value. However, such a renormalization rescales the entire
history and can change the signed envy differences even without any allocation action.
For example, suppose agent~1 (Alice) receives $100$ items of value $1$ that only she wants (so agent~2
values them at $0$). If all these items are allocated to Alice, then after these $100$ rounds the signed
envy differences are exactly
\[
f_{12}=-100,
\qquad
f_{21}=0.
\]
Under \citet{benade2018make} exponential potential (see \eqref{eq:benade-potential-app}),
\[
\phi_{12}(100)=\kappa^{T-100}\exp\!\bigl(s(-100-\lambda)\bigr),
\qquad
\phi_{21}(100)=\kappa^{T-100}\exp\!\bigl(s(0-\lambda)\bigr),
\]
so the total potential at that point is
\[
\phi(100)=\phi_{12}(100)+\phi_{21}(100)
=\kappa^{T-100}\exp(-s\lambda)\bigl(e^{-100s}+1\bigr).
\]
Now suppose that at the beginning of round~101 an item of value $1000$ arrives and the algorithm re-normalizes by the running
maximum. This rescales all past values by a factor $1/1000$, so the same allocation history now
corresponds to
\[
\widetilde f_{12}=-100/1000=-0.1,
\qquad
\widetilde f_{21}=0,
\]
and hence the potential becomes
\[
\widetilde\phi(100)
=\kappa^{T-100}\exp(-s\lambda)\bigl(e^{-0.1s}+1\bigr)
>\phi(100).
\]
Here the index $t$ counts the number of allocated items; the renormalization happens at the beginning of round~101 (before allocating the new item), so the state is still the post-$100$-allocation state and the appropriate notation is $\widetilde\phi(100)$.

In particular, the single term $\phi_{12}(100)$ increases from
$\kappa^{T-100}\exp\!\bigl(s(-100-\lambda)\bigr)$ to
$\kappa^{T-100}\exp\!\bigl(s(-0.1-\lambda)\bigr)$, i.e., by the multiplicative factor
\[
\frac{\kappa^{T-100}e^{s(-0.1-\lambda)}}{\kappa^{T-100}e^{s(-100-\lambda)}}=\exp(99.9\,s).
\]
Thus, the potential used in \citet{benade2018make} can \emph{increase} sharply in a single step purely
because of rescaling, which breaks the monotonicity property that is a key ingredient in their analysis.

Independently of these scale issues, a main motivation for our framework is that \emph{final-time} guarantees for a fixed horizon $T$
do not automatically give \emph{anytime} (prefix-wise) control.
Below we show an explicit two-agent instance in which the deterministic rule of
\citet{benade2018make} (specialized to $n=2$) can have \emph{linear} envy at an intermediate time
$t^*=\lfloor \sqrt{T}\rfloor$.
This does not contradict their $O(\sqrt{T})$ final-time guarantee; rather, it shows that the analysis
does not imply an $O(\sqrt{t})$-type bound that holds for \emph{all} prefixes $t$.

We consider $n=2$ agents and horizon $T$.
At each round $t\in[T]$ 
one item arrives with values $(v_{1t},v_{2t})\in[0,1]^2$.
\citet{benade2018make}'s deterministic algorithm allocates each item to minimize the potential
\begin{equation}\label{eq:benade-potential-app}
\phi(t)\;=\;\sum_{i\neq j}\phi_{ij}(t),
\qquad
\phi_{ij}(t)\;=\;\kappa^{T-t}\exp\!\bigl(s(f_{ij}(t)-\lambda)\bigr),
\end{equation}
where (for $n=2$) $f_{12}(t)$ and $f_{21}(t)$ are the signed envy-differences and
\[
\Envy_{ij}(t)\;=\;\max\{f_{ij}(t),0\}.
\]
The parameters $s$ and $\lambda$ depend explicitly on $T$:
\begin{equation}\label{eq:benade-params-app}
s
=\sqrt{\,2\log\!\Bigl(1+\frac{n\log n}{T}\Bigr)}
=\sqrt{\,2\log\!\Bigl(1+\frac{2\log 2}{T}\Bigr)},
\qquad
\lambda
=10\sqrt{\frac{T\log n}{n}}
=10\sqrt{\frac{T\log 2}{2}}.
\end{equation}
From this, it is already clear that their rule is tuned to a particular  horizon, and thus it is not fully-online in our sense, even though it still gives guarantees at every time step up to the specified horizon. We now prove this formally.

\begin{proposition}[Linear envy up to time $\lfloor\sqrt{T}\rfloor$]\label{thm:benade-linear-envy-app}
There exists a sequence of valuations such that,
for any $T\geq 1$, 
if \citet{benade2018make}'s deterministic online algorithm is executed with horizon $T$, then for every time $t\le t^*:=\lfloor\sqrt{T}\rfloor$, the envy at time $t$ satisfies
\[
\Envy^{t} \;\ge\; 0.1\, t.
\]
\end{proposition}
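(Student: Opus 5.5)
The plan is to exhibit one explicit two-agent valuation sequence and show by induction that the potential-minimizing rule of \citet{benade2018make} gives every item to agent~1 for all rounds up to $t^*=\lfloor\sqrt T\rfloor$, while agent~2 keeps accumulating envy. I would use the stationary sequence in which every item has values $(v_{1t},v_{2t})=(1,\,0.1)$. If the first $k$ items all go to agent~1, then
\[
f_{12}(k)=-k,
\qquad
f_{21}(k)=0.1k .
\]
Hence $\Envy_{21}(k)=0.1k$, so $\Envy^{k}\ge 0.1k$, which is exactly the claimed bound. So it suffices to prove that, for every $k<t^*$, the rule allocates item $k+1$ to agent~1.

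For the inductive step, compare the two candidate potentials at round $k+1$. The common factor $\kappa^{T-k-1}e^{-s\lambda}$ appears in both candidates and can be cancelled. Giving the item to agent~1 yields
\[
e^{s(-k-1)}+e^{s(0.1k+0.1)} .
\]
Giving it to agent~2 yields
\[
e^{s(-k+1)}+e^{s(0.1k-0.1)} .
\]
Agent~1 is strictly preferred if and only if
\[
e^{-sk}\bigl(e^{s}-e^{-s}\bigr) \;>\; e^{0.1sk}\bigl(e^{0.1s}-e^{-0.1s}\bigr),
\]
that is, if and only if
\[
e^{-1.1sk}\,\frac{\sinh s}{\sinh(0.1s)}>1 .
\]
I would then use two elementary facts. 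First, $\sinh(s)/\sinh(0.1s)\ge 10$ for every $s>0$, which follows from the convexity of $\sinh$ on $[0,\infty)$ together with $\sinh 0=0$. Second, it remains to show $1.1sk<\ln 10\approx 2.30$.

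The main obstacle is controlling $sk$ uniformly in $T$ using the explicit parameter from \eqref{eq:benade-params-app}. Using $\log(1+x)\le x$ (natural logarithm), I get
\[
s\le\sqrt{\frac{4\ln 2}{T}}\approx \frac{1.665}{\sqrt T}.
\]
Since $k<t^*\le\sqrt T$, this gives $sk\le 1.665$, so $1.1sk\le 1.84<\ln 10$, and the strict inequality holds. Consequently the rule never breaks a tie and deterministically chooses agent~1, so the induction closes through $t=t^*$. If their $\log$ is taken in another base, the constant changes slightly. I would then check that the margin still suffices, or shrink $0.1$ to a slightly smaller value and restate the envy constant accordingly. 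Finally, the case $t^*$ small, e.g.\ $T<4$, is trivially covered by the same computation, since there $sk$ is even smaller relative to the threshold.
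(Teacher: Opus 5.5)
Your proposal is correct and follows essentially the same route as the paper. It uses the same two-agent instance with values $(1,0.1)$, reduces the potential comparison to $e^{1.1sk}$ versus $\sinh(s)/\sinh(0.1s)\ge 10$ via convexity of $\sinh$, and bounds $s\le 2\sqrt{\ln 2/T}$ using $\ln(1+x)\le x$. Your two small variations are harmless and slightly tidier than the paper's version: the sequence stays stationary after $t^*$, so it is genuinely independent of $T$, and the inequality is strict, so tie-breaking never matters.
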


\begin{proof}
Fix $\rho=0.1$ and let $t^*=\lfloor\sqrt{T}\rfloor$.
Consider the (nonadaptive) valuation sequence
\[
(v_{1t},v_{2t})=
\begin{cases}
(1,\rho), & 1\le t\le t^*,\\
(0,0), & t^*< t\le T.
\end{cases}
\]

We prove that the algorithm allocates every item $1,2,\dots,t^*$ to agent~1; then agent~2 envies agent~1
by exactly $\rho t^*$ at time $t^*$.

\paragraph{Step 1: Two-agent decision rule.}
For $n=2$, minimizing $\phi(t)$ is equivalent to minimizing
\begin{align}
\label{eq:benade-potential-2}
\exp\!\bigl(sf_{12}(t)\bigr)+\exp\!\bigl(sf_{21}(t)\bigr),
\end{align}
since at any round $t$ the factors $\kappa^{T-t}$ and $\exp(-s\lambda)$ are common to both recipient choices.
Thus, to decide who gets the item at round $t$, the algorithm chooses the recipient that minimizes \eqref{eq:benade-potential-2}
after updating the envy levels $(f_{12},f_{21})$ by the chosen allocation.

\paragraph{Step 2: Track \texorpdfstring{$f_{12}(t)$}{f_{12}(t)} and $f_{21}(t)$ if all first $t$ items go to agent 1.}
Assume items $1,2,\dots,t$ have been allocated to agent~1.
Then for each such item denoted by $\ell\in\{1,\dots,t\}$, agent~1's signed envy-difference toward agent~2 decreases by $v_{1\ell}=1$,
while agent~2's signed envy-difference toward agent~1 increases by $v_{2\ell}=\rho$.
Therefore
\begin{equation}\label{eq:f-track-app}
f_{12}(t)=-t,
\qquad
f_{21}(t)=\rho t.
\end{equation}

\paragraph{Step 3: A one-step comparison inequality.}
Consider round $t+1\le t^*$ (so the arriving item has values $(1,\rho)$).

If we give item $t+1$ to agent~1, then the envy-levels change to
\[
f_{12}=-(t+1),
\qquad
f_{21}=\rho(t+1),
\]
so the objective becomes
\begin{equation}\label{eq:S1-app}
S_1(t):=\exp\!\bigl(-s(t+1)\bigr)+\exp\!\bigl(s\rho(t+1)\bigr).
\end{equation}

If we give item $t+1$ to agent~2, then $f_{12}$ increases by $1$ and $f_{21}$ decreases by $\rho$, so
\[
f_{12}=1-t,
\qquad
f_{21}=\rho(t-1),
\]
so the objective becomes
\begin{equation}\label{eq:S2-app}
S_2(t):=\exp\!\bigl(s(1-t)\bigr)+\exp\!\bigl(s\rho(t-1)\bigr).
\end{equation}

We claim that $S_1(t)\le S_2(t)$ whenever
\begin{equation}\label{eq:key-ineq-app}
\exp\!\bigl(s(1+\rho)t\bigr)\ \le\ \frac{\sinh(s)}{\sinh(s\rho)}.
\end{equation}
Indeed,
\begin{align*}
S_1(t)\le S_2(t)
&\Longleftrightarrow
\exp\!\bigl(-s(t+1)\bigr)-\exp\!\bigl(s(1-t)\bigr)
\le
\exp\!\bigl(s\rho(t-1)\bigr)-\exp\!\bigl(s\rho(t+1)\bigr)
\\
&\Longleftrightarrow
-2\exp(-st)\sinh(s)\ \le\ -2\exp(s\rho t)\sinh(s\rho)
\\
&\Longleftrightarrow
\exp(-st)\sinh(s)\ \ge\ \exp(s\rho t)\sinh(s\rho)
\\
&\Longleftrightarrow
\exp\!\bigl(s(1+\rho)t\bigr)\ \le\ \frac{\sinh(s)}{\sinh(s\rho)},
\end{align*}
which is exactly \eqref{eq:key-ineq-app}. Therefore, as long as \eqref{eq:key-ineq-app} holds,
the algorithm allocates item $t+1$ to agent~1.

\paragraph{Step 4: Lower bound the right-hand side up to \texorpdfstring{$t=t^*-1$}{t=t^*-1}.}
Using convexity of $\sinh$ and $\sinh(0)=0$, for any $\rho\in(0,1)$ we have
\[
\sinh(s\rho)\le \rho\sinh(s)
\qquad\Longrightarrow\qquad
\frac{\sinh(s)}{\sinh(s\rho)}\ge \frac{1}{\rho}.
\]
So it is sufficient that
\begin{equation}\label{eq:suff-exp-app}
\exp\!\bigl(s(1+\rho)t\bigr)\le \frac{1}{\rho}
\qquad\Longleftrightarrow\qquad
t\le \frac{\log(1/\rho)}{s(1+\rho)}.
\end{equation}

We will show that for $\rho=0.1$ and all $T \geq 1$ we have
\begin{equation}\label{eq:tminus1-bound-app}
t^*-1=\lfloor\sqrt{T}\rfloor-1\ \le\ \frac{\log(1/\rho)}{s(1+\rho)}.
\end{equation}

First note that for all $x\ge 0$, $\log(1+x)\le x$, hence by \eqref{eq:benade-params-app},
\[
s=\sqrt{\,2\log\!\Bigl(1+\frac{2\log 2}{T}\Bigr)}
\ \le\
\sqrt{\,2\cdot \frac{2\log 2}{T}}
\ =\
2\sqrt{\frac{\log 2}{T}}.
\]
Therefore
\[
\frac{\log(1/\rho)}{s(1+\rho)}
\ \ge\
\frac{\log(1/\rho)}{1+\rho}\cdot \frac{1}{2}\sqrt{\frac{T}{\log 2}}.
\]
With $\rho=0.1$, we have $\log(1/\rho)=\log 10$ and $(1+\rho)=1.1$, and one checks numerically that
\[
\frac{\log 10}{1.1}\cdot \frac{1}{2\sqrt{\log 2}}\ >\ 1.
\]
Therefore for all $T\ge 1$,
\[
\frac{\log(1/\rho)}{s(1+\rho)}
\ \ge\
\sqrt{T}
\ \ge\
t^*-1.
\]
So by induction using Steps~2--3, the algorithm allocates every item $1,2,\dots,t^*$ to agent~1.

\paragraph{Step 5: Compute envy at time \texorpdfstring{$t$}{t}.}
For any time $t\le t^*$, agent~2's value for agent~1's bundle is $\rho t$ and her value for her own bundle is $0$.
Thus $\Envy_{2\to 1}(t)=\rho t$, and hence $\Envy^{t}\ge \rho t$ for all $t\le t^*$.

\end{proof}

\section{Intuition via Lorenz Curves: Why Minimize a Smooth \texorpdfstring{$p$}{p}-Potential?}
\label[appendix]{sec:lorenz-intuition}
This section gives an informal intuition for the design of the $p$-potential rule
\eqref{eq:f-def-general}--\eqref{eq:rvm-choice-general}.
In our framework, we keep track of ``how unfair things currently are,'' we want to accumulate that information over time,
and we want a single score that lets us compare different feasible actions in the next round. Indeed, one may think that using $\max_q z_q^t$ (the worst-off deficit) is appropriate for this task. However, 
$\max_q z_q^t$ does not tell us whether:
\begin{itemize}
  \item unfairness is \emph{concentrated} (there are several large deficits, close to the maximum), or
  \item unfairness is \emph{spread out} (only one deficit is large, and the others are moderate or small).
\end{itemize}
The first situation is more risky, as when several deficits are large, an adversary can create a situation in which the algorithm must increase one of these deficits, and then, for any choice of the algorithm, one of the deficits will grow over the bound. Thus, we must control all the deficits, not only the largest one.

We introduce some notation. Sort the deficits in nonincreasing order,
\[
z_{(1)}^t \ \ge\ z_{(2)}^t \ \ge\ \cdots \ \ge\ z_{(m)}^t,
\]
and define the \emph{tail sums}:
\begin{equation}
\label{eq:tail-sums}
S^{t}(k)\ :=\ \sum_{i=1}^{k} z_{(i)}^t
\qquad (k=0,1,\ldots,m),
\end{equation}
with $S^{t}(0)=0$.
So $S^{t}(k)$ is the total deficit among the $k$ worst-off deficits.
This is the ``upper tail'' of the deficit distribution: it focuses on the largest values of $z_q^t$.

Tail sums are directly related to the fairness threshold $c$ in two ways:
\begin{enumerate}
    \item The worst deficit is the first tail sum.
\[
S^{t}(1)\ =\ z_{(1)}^t\ =\ \max_{q\in {Q}} z_q^t.
\]
Therefore, $c$-fairness is equivalent to $S^{t}(1)\le c$.
\item {Many $c$-violated requirements force a large tail sum.}

Let $k:=|\mathcal{D}^{t}(c)|$.
That is, the $k$-th largest deficit satisfies $z_{(k)}^t>c$, and therefore
\begin{equation}
\label{eq:tail-implies-many}
S^{t}(k)\ =\ \sum_{i=1}^k z_{(i)}^t \ >\ k\,c.
\end{equation}
\end{enumerate}

\subsection*{Lorenz curves for deficit vectors (via tail sums)}

Lorenz curves are a classical way to visualize how mass is distributed across a vector.
Consider \Cref{fig:lorenz-tail-intuition}, which depicts the Lorenz curve of a deficit vector. Note that if \emph{many} deficits exceed $c$, then the tail-sum curve rises above the straight line $kc$.

\begin{figure}[t]
\centering

\pgfmathsetmacro{\cval}{1.5}   
\pgfmathsetmacro{\kzero}{4}   
\pgfmathsetmacro{\kcut}{\kzero+0.5}

\def\sorteddeficits{
  (1,3.4) (2,2.8) (3,2.1) (4,1.9) (5,1.2)
  (6,0.9) (7,0.7) (8,0.4) (9,0.3) (10,0.1)
}
\def\tailsums{
  (0,0) (1,3.4) (2,6.2) (3,8.3) (4,10.2) (5,11.4)
  (6,12.3) (7,13.0) (8,13.4) (9,13.7) (10,13.8)
}

\begin{minipage}{0.49\linewidth}
\centering
\begin{tikzpicture}
\begin{axis}[
  width=\linewidth,
  height=0.72\linewidth,
  axis lines=left,
  xmin=0.5, xmax=10.5,
  ymin=0, ymax=3.8,
  xlabel={rank $i$ (sorted: $z_{(1)}^t \ge \cdots \ge z_{(m)}^t$)},
  ylabel={$z_{(i)}^t$},
  xtick={1,4,7,10},
  ytick={0,1,2,3},
  ticklabel style={font=\scriptsize},
  label style={font=\scriptsize},
  clip=false,
]
  \addplot+[ybar, bar width=4pt] coordinates {\sorteddeficits};

  \addplot[densely dashed] coordinates {(0.5,\cval) (10.5,\cval)};
  \node[font=\scriptsize, anchor=west] at (axis cs:10.55,\cval) {$c$};

  \addplot[dotted] coordinates {(\kcut,0) (\kcut,3.8)};
  \node[font=\scriptsize, anchor=south] at (axis cs:\kcut,3.85) {$k=|\mathcal{D}^{t}(c)|$};

\end{axis}
\end{tikzpicture}

\end{minipage}
\hfill
\begin{minipage}{0.49\linewidth}
\centering
\begin{tikzpicture}
\begin{axis}[
  width=\linewidth,
  height=0.72\linewidth,
  axis lines=left,
  xmin=0, xmax=10,
  ymin=0, ymax=15,
  xlabel={$k$ (number of worst requirements)},
  ylabel={$S^{t}(k)=\sum_{i=1}^k z_{(i)}^t$},
  xtick={0,2,4,6,8,10},
  ytick={0,3,6,9,12,15},
  ticklabel style={font=\scriptsize},
  label style={font=\scriptsize},
  legend style={font=\scriptsize, draw=none, fill=none, at={(0.02,0.98)}, anchor=north west},
  clip=false,
]
  \addplot[thick, mark=*, mark size=1.2pt] coordinates {\tailsums};
  \addlegendentry{$S^{t}(k)$}

  \addplot[densely dashed, domain=0:10, samples=2] {\cval*x};
  \addlegendentry{$k\,c$}

  \addplot[only marks, mark=*, mark size=2.0pt] coordinates {(\kzero,10.2)};
  \draw[dotted] (axis cs:\kzero,0) -- (axis cs:\kzero,10.2);
  \node[font=\scriptsize, anchor=west] at (axis cs:\kzero+0.2,10.2)
    {$S^{t}(k) > kc$};

\end{axis}
\end{tikzpicture}

\end{minipage}

\caption{Lorenz view of a deficit vector (schematic).
\textbf{Left:} sorted deficits with threshold $c$; $k_0=|\mathcal{D}^{t}(c)|$ is the number exceeding $c$.
\textbf{Right:} tail sums $S^{t}(k)$ compared to the line $k\,c$; when $k_0=|\mathcal{D}^{t}(c)|$ we must have
$S^{t}(k_0)>k_0c$ as in \eqref{eq:tail-implies-many}.}
\Description{Two-panel schematic. Left panel shows a bar chart of sorted deficits with a dashed horizontal line at threshold c and a dotted vertical line at k0. Right panel shows the tail-sum curve S^{t}(k) together with the dashed line k c; the point at k0 lies above the line.}
\label{fig:lorenz-tail-intuition}
\end{figure}

Tail sums can help compare two deficit vectors.
Given two deficit vectors $x,y\in\mathbb{R}_{\ge 0}^m$, if we sort $x$ and $y$ in nonincreasing order we can say that
\[
x \ \text{Lorenz-dominates}\ y
\quad\Longleftrightarrow\quad
\sum_{i=1}^k x_{(i)} \ \le\ \sum_{i=1}^k y_{(i)}\ \ \text{for every }k\in[m].
\]
That is, for \emph{every} choice of ``how many worst deficits you look at,'' the total deficit among those worst
deficits is smaller under $x$ than under $y$. 
Clearly, Lorenz dominance can serve as a fairness notion, since it prefers a low worst-off deficit ($k=1$) and low upper-tail deficits ($k=2,3,\dots$) simultaneously.

However, Lorenz dominance is only a \emph{partial} order.
Two tail-sum curves can cross (one vector is better for small $k$, the other is better for medium $k$),
so Lorenz dominance alone does not always tell us which action to prefer.
To define a single-step decision rule, we need to turn this ``curve comparison'' into one scalar score.

\subsection*{From a curve to a number: symmetric convex potentials}

A standard way to turn a whole vector into one number is to apply the same function to each coordinate and add:
\[
\Phi(z)\ :=\ \sum_{q\in {Q}} f(z_q).
\]
This additive form is not just conceptually natural; it is also exactly what matches our local balance conditions.
Those conditions control one-step changes coordinate-by-coordinate (and then average over a few reference actions),
so a sum-of-coordinates potential is the right container: we can bound each coordinate's contribution and then sum over $q$.

Two structural requirements are important in our setting:

\paragraph{Symmetry (treat all coordinates the same).}
A function $F:\mathbb{R}_{\ge 0}^m\to\mathbb{R}$ is \emph{symmetric} if permuting the coordinates does not change its value.
This matches the idea that the tracked requirements are unlabeled: the score should depend on the multiset
$\{z_q^t:q\in {Q}\}$, not on which $q$ happens to carry which deficit.

\paragraph{Convexity (penalize concentration).}
Convexity is a formal way to say ``one huge deficit is worse than two medium deficits of the same total size.''
Concretely, if we move a small amount of deficit from a larger coordinate to a smaller one
(so the total deficit stays the same but becomes more evenly spread), a convex symmetric score should decrease.
This matches the Lorenz-tail philosophy: we want to avoid very steep upper tails.

These ideas are connected by a classical theorem from majorization theory:

\begin{remark}[Lorenz dominance $\Rightarrow$ smaller symmetric convex potentials (informal)]
\label{rem:schur-convex-intuition}
If a deficit vector $x$ Lorenz-dominates $y$, then any function
$F(x)=\sum_i g(x_i)$ with $g$ convex and nondecreasing will satisfy $F(x)\le F(y)$.
So minimizing a convex symmetric potential is a principled way to ``scalarize'' the Lorenz-tail preference.
\emph{(This even applies to $F(x)=\max_i x_i$, which is convex and symmetric, but it is nonsmooth.)}
\end{remark}

So far, this only tells us \emph{what kind} of scalar objective we want.
Next we explain \emph{which} convex symmetric objective is best fits the goal of tail control.

\subsection*{Why high powers control the upper tail}

A particularly simple family of symmetric convex objectives is given by \emph{high power sums}.
Ignoring smoothing for a moment, consider
\[
\sum_{q\in {Q}} (z_q^t)^{2p}.
\]
You can view this as a ``tail-sensitive energy'': when $p$ is large, the largest coordinates dominate the sum.

\paragraph{Tail counts from a one-line inequality.}
There is a direct deterministic bound that explains the tail connection.
For any threshold $c>0$,
\[
|\mathcal{D}^{t}(c)|
\ =\ |\{q:\ z_q^t>c\}|
\ \le\
\frac{\sum_{q} (z_q^t)^{2p}}{c^{2p}}.
\]
This is the same algebra that underlies Markov's inequality in probability: if you pick $q$ uniformly at random from ${Q}$, then $\Pr[z_q^t>c]=|\mathcal{D}^{t}(c)|/m$, and Markov's inequality gives the same type of bound. Thus, if the power sum is small, then there cannot be many coordinates above a large threshold $c$. So high powers are a natural ``proxy'' (i.e., a surrogate objective) for keeping the Lorenz upper tail under control. Later, we bound a \emph{smooth regularized} version of this power sum using the local balance conditions, and then plug that bound into inequalities of this type to control $|\mathcal{D}^{t}(c)|$.

\paragraph{A continuous viewpoint (area under the tail-count curve).}
There is also an identity (often called the \emph{layer-cake representation} or \emph{Cavalieri principle}) saying that for
nonnegative numbers $(z_q)$ and any $p\ge 1$,
\[
\sum_q (z_q)^{2p}
\ =\
\int_0^\infty 2p\,c^{2p-1}\cdot |\{q:\ z_q>c\}|\,dc.
\]
So $\sum_q (z_q)^{2p}$ is literally a weighted area under the function
$c\mapsto|\mathcal{D}^{t}(c)|$:
larger $p$ puts more weight on larger thresholds $c$, i.e., it focuses more on the far upper tail.
(For a reference, see standard analysis texts that cover the layer-cake formula, e.g., Lieb--Loss, \emph{Analysis}.)

This is the bridge from ``Lorenz tails'' to ``high powers.''

\begin{figure}[t]
\centering

\pgfmathsetmacro{\kzero}{4}  
\pgfmathsetmacro{\pval}{4}   
\pgfmathsetmacro{\kcut}{\kzero+0.5}

\def\deficitshares{%
  (1,0.2464) (2,0.2029) (3,0.1522) (4,0.1377) (5,0.0870)
  (6,0.0652) (7,0.0507) (8,0.0290) (9,0.0217) (10,0.0072)
}
\def\cumdeficit{%
  (0,0) (1,0.2464) (2,0.4493) (3,0.6014) (4,0.7391) (5,0.8261)
  (6,0.8913) (7,0.9420) (8,0.9710) (9,0.9928) (10,1.0000)
}

\def\potshares{%
  (1,0.4148) (2,0.2585) (3,0.1344) (4,0.1080) (5,0.0409)
  (6,0.0227) (7,0.0136) (8,0.0044) (9,0.0025) (10,0.0003)
}
\def\cumpot{%
  (0,0) (1,0.4148) (2,0.6732) (3,0.8076) (4,0.9156) (5,0.9565)
  (6,0.9792) (7,0.9928) (8,0.9972) (9,0.9997) (10,1.0000)
}

\begin{tikzpicture}
\begin{groupplot}[
  group style={
    group size=2 by 2,
    group name=lp,
    horizontal sep=0.14\linewidth,
    vertical sep=0.18\linewidth,   
  },
  width=0.40\linewidth,            
  height=0.28\linewidth,           
  axis lines=left,
  ticklabel style={font=\scriptsize},
  label style={font=\scriptsize},
  title style={font=\scriptsize, yshift=-1pt},
  xlabel style={font=\scriptsize, yshift=-1pt},
  ylabel style={font=\scriptsize},
  clip=true,
]

\nextgroupplot[
  title={Raw deficit mass},
  xmin=0.5, xmax=10.5,
  ymin=0, ymax=0.46,
  xtick={1,4,7,10},
  ytick={0,0.2,0.4},
  ylabel={share},
  xticklabels=\empty,
]
  \path[fill=black!6] (axis cs:0.5,0) rectangle (axis cs:\kcut,0.46);
  \addplot+[ybar, bar width=4pt, fill=black!20, draw=black] coordinates {\deficitshares};
  \addplot[dotted] coordinates {(\kcut,0) (\kcut,0.46)};
  \node[font=\scriptsize, anchor=north west, fill=white, inner sep=1pt] at (axis description cs:0.02,0.88) {$z_{(i)}/\sum_j z_{(j)}$};

\nextgroupplot[
  title={Upper-Lorenz share},
  xmin=0, xmax=10,
  ymin=0, ymax=1.05,
  xtick={0,2,4,6,8,10},
  ytick={0,0.5,1.0},
  axis y line*=right,
  yticklabel pos=right,
  ylabel={cum. share},
  xticklabels=\empty,
]
  \path[fill=black!4] (axis cs:0,0) rectangle (axis cs:\kzero,1.05);
  \addplot[thick, mark=*, mark size=1.1pt] coordinates {\cumdeficit};
  \addplot[dotted] coordinates {(\kzero,0) (\kzero,1.05)};
  \addplot[only marks, mark=*, mark size=2.0pt] coordinates {(\kzero,0.7391)};
  \node[font=\scriptsize, fill=white, inner sep=1.2pt, rounded corners=1pt, anchor=south east]
    (Ltxt) at (axis description cs:0.94,0.26)
    {$S(k_0)/S(m)\approx 0.74$};
  \draw[->, thin] (Ltxt.north west) -- (axis cs:\kzero,0.7391);

\nextgroupplot[
  title={After convex lens},
  xmin=0.5, xmax=10.5,
  ymin=0, ymax=0.46,
  xtick={1,4,7,10},
  ytick={0,0.2,0.4},
  ylabel={share},
  xlabel={rank $i$ (sorted)},
]
  \path[fill=black!6] (axis cs:0.5,0) rectangle (axis cs:\kcut,0.46);
  \addplot+[ybar, bar width=4pt, fill=black!20, draw=black] coordinates {\potshares};
  \addplot[dotted] coordinates {(\kcut,0) (\kcut,0.46)};
\node[
  font=\scriptsize,
  anchor=north west,
  fill=white,
  inner sep=1pt
] at (rel axis cs:0.10,0.97)
{$\tilde f(z_{(i)})/\sum_j \tilde f(z_{(j)})$};

\nextgroupplot[
  title={Cumulative potential share},
  xmin=0, xmax=10,
  ymin=0, ymax=1.05,
  xtick={0,2,4,6,8,10},
  ytick={0,0.5,1.0},
  axis y line*=right,
  yticklabel pos=right,
  ylabel={cum. share},
  xlabel={$k$ (worst coords)},
]
  \path[fill=black!4] (axis cs:0,0) rectangle (axis cs:\kzero,1.05);
  \addplot[densely dashed] coordinates {\cumdeficit}; 
  \addplot[thick, mark=*, mark size=1.1pt] coordinates {\cumpot}; 
  \addplot[dotted] coordinates {(\kzero,0) (\kzero,1.05)};
  \addplot[only marks, mark=*, mark size=2.0pt] coordinates {(\kzero,0.9156)};

\node[
  font=\scriptsize,
  anchor=south east,
  fill=white,
  inner sep=1pt,
  align=left
] at (rel axis cs:1.005,0.5)
{dashed: $S(k)/S(m)$\\solid: potential};

  \node[font=\scriptsize, fill=white, inner sep=1.2pt, rounded corners=1pt, anchor=south east]
    (Ptxt) at (axis description cs:0.94,0.26)
    {$\approx 0.92$ of potential mass};
  \draw[->, thin] (Ptxt.north west) -- (axis cs:\kzero,0.9156);

\end{groupplot}

\coordinate (topcenter) at ($(lp c1r1.south)!0.5!(lp c2r1.south)$);
\coordinate (botcenter) at ($(lp c1r2.north)!0.5!(lp c2r2.north)$);
\coordinate (midgap) at ($(topcenter)!0.5!(botcenter)$);

\node[
  font=\scriptsize,
  fill=white,
  inner sep=2pt,
  rounded corners=1pt,
  align=center,
  text width=0.72\linewidth 
] (lens) at (midgap)
{convex lens on deficits: $u \mapsto \tilde f(u)$\\[-1pt]
$\tilde f(u)=f(u)-f(0),\qquad f(u)=(u^2+4p^2\sigma^2)^p,\ \ p=\pval$};

\draw[->, thin] (lens.north) -- (topcenter);
\draw[->, thin] (lens.south) -- (botcenter);

\end{tikzpicture}

\caption{Lorenz tails $\leftrightarrow$ the smooth $p$-potential (schematic).
Top row: deficit mass across ranks and the upper-Lorenz share $S(k)/S(m)$.
Bottom row: after applying the convex magnifying lens $\tilde f(u)=f(u)-f(0)$ with $f(u)=(u^2+4p^2\sigma^2)^p$ (baseline removed only to make ``potential mass'' visible),
the same rank-based views show that the top $k_0$ coordinates carry a much larger fraction of the total mass.
This is the visual meaning of ``high-power smooth potentials are Lorenz-tail sensitive.''}
\Description{Four-panel schematic with a centered lens label between the rows. The lens label is placed in the vertical gap and does not overlap the plots.}
\label{fig:lorenz-vs-potential-lens}
\end{figure}

Figure~\ref{fig:lorenz-vs-potential-lens} shows the scalarization pictorially: the $p$-potential is the same Lorenz-style
mass distribution after a convex magnification that concentrates weight on the worst deficits.

\subsection*{Why we smooth: the regularized potential \texorpdfstring{$f(u)=(u^2+4p^2\sigma^2)^p$}{f(u)=(u^2+4p^2\sigma^2)^p}}

If our only goal were to prefer smaller upper tails, we could try to directly minimize $\sum_q (z_q^t)^{2p}$.
The reason we do not take these options literally is that our analysis is driven by \emph{one-step local balance conditions}.

In round $t{+}1$, the local balance conditions give reference actions
\(\hat a_1^{t+1},\ldots,\hat a_n^{t+1}\) and, for each tracked requirement \(q\),
raw shifts \(\Delta_q^{t+1}(1),\ldots,\Delta_q^{t+1}(n)\) satisfying
\[
z_q^{t+1}(\hat a_k^{t+1})
\le
[z_q^t+\Delta_q^{t+1}(k)]_+
\qquad\text{for all }k\in[n].
\]
These shifts have nonpositive mean and bounded squared size. To turn such
statements into a bound on a scalar objective, we compare the resulting
one-step change in \(f\) to $f(z_q^t)$ using a Taylor expansion, so that the remainder term is controlled by $(\Delta_q^{t+1}(a))^2$.
This requires $f$ to be smooth and to have curvature that stays controlled even when $z_q^t$ is close to $0$.

This is why we add the constant $4p^2\sigma^2$ inside the power, and define \eqref{eq:f-def-general}:
\begin{equation*}
f(u)\ :=\ (u^2+4p^2\sigma^2)^p,
\qquad
\Phi^t\ :=\ \sum_{q\in {Q}} f(z_q^t),
\qquad
\Psi^t\ :=\ (\Phi^t)^{1/p}.
\end{equation*}
That constant plays two intuitive roles:

\paragraph{(a) It keeps tail sensitivity where we need it.}
When $u\gg p$, we have $u^2\gg 4p^2\sigma^2$ and thus
$f(u)\approx (u^2)^p=u^{2p}$.
So large deficits are still penalized like a high power, exactly as in the tail-control discussion above.

\paragraph{(b) It makes the potential gentle and stable near $0$.}
When $u\ll p$, the $+4p^2\sigma^2$ term dominates, so tiny deficits (and tiny fluctuations) do not dominate the score.

\paragraph{(c) It controls curvature under bounded one-step changes (analysis-critical).}
The most technical (but practically important) point is that with the $+4p^2\sigma^2$ baseline we can bound the curvature of $f$
uniformly when we perturb $u$ by a bounded amount (e.g., $u\mapsto u+\xi$ with $|\xi|\le \sigma$).
This lets us say: ``the potential after one round is not much larger than what a first-order prediction would suggest,''
which is exactly what we need when we average over the reference actions in the local balance conditions.

Finally, we take $\Psi^t=(\Phi^t)^{1/p}$ simply as a convenient rescaling:
it is monotone in $\Phi^t$, so minimizing $\Psi$ is the same as minimizing $\Phi$,
but the $1/p$ power makes the expressions in the analysis cleaner.

\subsection*{Why the rule minimizes the \emph{next} potential}

The $p$-potential rule is defined by a one-step lookahead:
for each feasible action $a\in A^{t+1}$ we can compute the \emph{hypothetical next} deficits
$z^{t+1}(a)$ and hence $\Psi^{t+1}(a)$, and we choose
\[
a^{t+1}\in\arg\min_{a\in A^{t+1}} \Psi^{t+1}(a).
\]
This is the most direct way to implement the Lorenz-tail intuition online:
\emph{pick the action whose resulting deficit distribution has the flattest possible upper tail.}

There are three concrete reasons we evaluate at time $t+1$ rather than $t$:
\begin{itemize}
  \item Fairness is judged on the \emph{state after the decision} (the vector that actually results from our choice).
  \item In scale-aware settings, even the normalization used in $z^{t+1}(a)$ can depend on $a$,
        so looking only at $\Psi^t$ can miss exactly the effect that differentiates actions.
  \item Our local balance conditions are one-step statements for round $t{+}1$, and minimizing $\Psi^{t+1}(a)$ lets us compare our
        choice to the reference actions via ``minimum $\le$ average.''%
\end{itemize}

\paragraph{Choosing $p$ is choosing how ``tail-focused'' we are.}
The parameter $p$ is a knob:
small $p$ behaves more like a mean/energy control, while larger $p$ increasingly concentrates attention on the worst deficits.
When $p\in \Theta(\log m)$, the aggregation behaves like a soft-max:
it is strong enough to control all $m$ coordinates, but only incurs a logarithmic price in $m$.
This is particularly useful when $m$ is very large (for example, when the set of tracked requirements contains all envy pairs).

\subsection*{Bounding the Gini coefficient}
Another, classical way to compress Lorenz-curve information into one number is the \emph{Gini coefficient},
defined as (twice) the area between the Lorenz curve and the diagonal.
In our setting this should be interpreted on the \emph{deficit} vector $z^t$:
the whole point of working with $z_q^t$ (rather than raw deficits in heterogeneous units) is that each coordinate has
already been scaled by its own natural tolerance/unit, so the coordinates are commensurate and a Lorenz/Gini summary is meaningful.

Let
\[
T^t \ :=\ \sum_{q\in {Q}} z_q^t \ =\ S^t(m),
\qquad\text{and}\qquad
\bar z^t \ :=\ \frac{T^t}{m}.
\]
Then the Gini coefficient of the deficit vector can be expressed directly from the tail sums \eqref{eq:tail-sums} as
\begin{equation}
\label{eq:gini-tail}
\mathrm{Gini}(z^t)
\ =\
\frac{2}{m\,T^t}\sum_{k=1}^{m}\left(S^{t}(k)-k\,\bar z^t\right)
\ =\
\frac{1}{2m\,T^t}\sum_{q,q'\in {Q}} \bigl|z_q^t-z_{q'}^t\bigr|,
\end{equation}
(with the convention $\mathrm{Gini}(0):=0$ when $T^t=0$).
Thus $\mathrm{Gini}(z^t)$ averages (over $k$) the gap between the upper-tail curve $S^t(k)$ and the equal-share baseline
$k\bar z^t$; it measures how concentrated the deficits are across coordinates.

A caveat is that $\mathrm{Gini}$ is \emph{scale-invariant} (even after normalization):
multiplying all deficits by a common constant does not change $\mathrm{Gini}$.
A scale-sensitive companion is the \emph{Gini mean difference}
\begin{equation}
\label{eq:gmd-def}
\mathrm{GMD}(z^t)
\ :=\ 2\bar z^t\,\mathrm{Gini}(z^t)
\ =\ \frac{1}{m^2}\sum_{q,q'\in {Q}}\bigl|z_q^t-z_{q'}^t\bigr|,
\end{equation}
which still measures inequality but now also reflects the \emph{absolute magnitude} of the deficits.

Finally, $\mathrm{GMD}$ is controlled by high-power (tail-sensitive) energies:
for any $p\ge 1$,
\begin{equation}
\label{eq:gmd-lp}
\mathrm{GMD}(z^t)
\ \le\
2\,m^{-1/(2p)}\Bigl(\sum_{q\in {Q}} (z_q^t)^{2p}\Bigr)^{1/(2p)}.
\end{equation}
So objectives that keep the $2p$-power sum small also keep a Gini-style dispersion measure small.
In particular, recalling the regularized high-power potential \eqref{eq:f-def-general},
$\Phi^t=\sum_{q\in {Q}} \bigl((z_q^t)^2+4p^2\sigma^2\bigr)^p$ and $\Psi^t=(\Phi^t)^{1/p}$,
we have $\sum_{q\in {Q}} (z_q^t)^{2p}\le \Phi^t$ and therefore
\[
\mathrm{GMD}(z^t)\ \le\ 2\,m^{-1/(2p)}(\Phi^t)^{1/(2p)}\ =\ 2\,m^{-1/(2p)}\sqrt{\Psi^t}.
\]
Hence, our $p$-potential algorithm also keeps the Gini mean difference bounded.

\section{Exact computation of optimal policies for the unit-scale proportionality game}
\label[appendix]{app:optimal-policy}

This appendix gives a computational supplement for the online fair division model of
\Cref{subsec:ex-propxc}.  It studies a \emph{unit-scale} (globally normalized) surrogate in which
item values satisfy $v_i(g)\in[0,1]$ for every agent $i$ and arriving item $g$.
Fix an integer slack parameter $c\ge 0$.

\paragraph{The unit-scale surrogate.}
Define agent $i$'s proportional share at time $t$ as $\Prop_i^t := \frac1n v_i(G^t)$, and let
$d_i^t := \Prop_i^t - v_i(P_i^t)$ be her proportionality deficit.
The unit-scale proportionality condition with slack $c$ is:
\begin{equation}
\label{eq:unit-scale-prop}
d_i^t \ \le\ c
\qquad\text{for all } i\in[n]\text{ and all times }t.
\end{equation}
This condition is \emph{necessary} for standard $\PROPc$ under the normalization $v_i(g)\le 1$:
indeed, if $\PROPc$ holds then $d_i^t\le \sum_{g\in B_i^t(c)} v_i(g)\le c$.
It is also necessary for our (stronger) scale-based $\PROPax{c}$, since $U_i^t\le 1$ implies
$d_i^t \le cU_i^t \le c$.
Therefore, \emph{violating} \eqref{eq:unit-scale-prop} certifies violation of $\PROPc$ (and of $\PROPax{c}$).

\paragraph{Goal of this appendix.}
We show how, for small $n$ and $c$, one can compute (exactly) the value of the worst-case online game for
\eqref{eq:unit-scale-prop}: the maximum number of rounds an online allocator can guarantee before an adaptive
adversary can force a violation.  The computation is exponential (hence the name $\mathsf{EXP}$) and is intended
as intuition / verification for small parameters rather than a scalable method.
It can also be interpreted as an exact computation for the surplus dynamics underlying the hard instance of
\Cref{sec:lower_bound} after a linear change of variables.

\subsection*{Shifted scaled surplus dynamics}

As in \Cref{sec:detailed_compare}, define agent $i$'s \emph{scaled proportionality surplus} at time $t$ by
\[
\dsprop_i^t \ :=\ n\,v_i(P_i^t)-v_i(G^t).
\]
Then $d_i^t = \Prop_i^t-v_i(P_i^t) = -\dsprop_i^t/n$, so \eqref{eq:unit-scale-prop} is equivalent to
$\dsprop_i^t \ge -nc$ for all $i$.

Define the \emph{shifted surplus} by
\[
\dsprop_i^t(c)\ :=\ \dsprop_i^t + nc,
\qquad
\vec{\dsprop}^{\,t}(c)\ :=\bigl(\dsprop_1^t(c),\ldots,\dsprop_n^t(c)\bigr).
\]
Then \eqref{eq:unit-scale-prop} holds at time $t$ iff $\dsprop_i^t(c)\ge 0$ for all $i$.
Since $c$ is fixed throughout this appendix, we will slightly abuse notation and write $\dsprop_i^t$ for
$\dsprop_i^t(c)$ and $\vec{\dsprop}^{\,t}$ for $\vec{\dsprop}^{\,t}(c)$.

\paragraph{One-step update.}
In round $t+1$, an item $g^{t+1}$ arrives with values $(v_1(g^{t+1}),\ldots,v_n(g^{t+1}))\in[0,1]^n$,
and the algorithm must irrevocably choose a recipient $a^{t+1}\in[n]$.
For the shifted surplus, the update is:
\begin{equation}
\label{eq:shifted-surplus-update}
\dsprop_i^{t+1} \ =\
\begin{cases}
\dsprop_i^t + (n-1)\,v_i(g^{t+1}), & i=a^{t+1},\\
\dsprop_i^t - v_i(g^{t+1}), & i\neq a^{t+1}.
\end{cases}
\end{equation}
The game ends (the adversary ``wins'') once some coordinate becomes negative, i.e.\ once
$\dsprop_i^t<0$ for some $i$, which implies a violation of \eqref{eq:unit-scale-prop}.

\subsection*{Safe points and domination}

We will reason about states in the extended real space.  Let $\overline{\mathbb{R}}$ be the extended real number
system, obtained from $\mathbb{R}$ by adding $\infty$ and $-\infty$ as actual numbers \cite{aliprantis1998principles}.

\begin{definition}[$k$-safe point]
A point $\vec{x}\in \overline{\mathbb{R}}^n$ is called \emph{$k$-safe} if, whenever the current shifted-surplus
vector equals $\vec{\dsprop}^{\,t}=\vec{x}$, there exists an online allocation rule that guarantees that the
adversary does not win within the next $k$ rounds; i.e.\ after $k$ further rounds we still have
$\dsprop_i^{t+k}\ge 0$ for all $i\in[n]$.
\end{definition}

\begin{definition}[Domination]
A point $\vec{x}=(x_1,\ldots,x_n)\in\overline{\mathbb{R}}^n$ \emph{dominates} another point
$\vec{x}'=(x'_1,\ldots,x'_n)\in\overline{\mathbb{R}}^n$ if $x_i>x'_i$ for all $i\in[n]$.
\end{definition}

\subsection*{A one-step LP and the sets $D^k$}

Fix a point $\vec{x}=(x_1,\ldots,x_n)$ and an index $i\in[n]$.
Consider the following linear program with variables $y$ and $z$:
\begin{equation*}
\begin{array}{ll@{}lll}
\text{maximize}  & \displaystyle y &\\
\text{subject to}& \displaystyle y +(n-1)z &\le x_i\\
                 & \displaystyle y -z &\le x_j , \qquad j\neq i \\
                 & 0 \le z \le 1 \,.
\end{array}
\end{equation*}
Let $Y(x_1,\ldots,x_n,i)$ denote the optimal value of $y$, and let $Z(x_1,\ldots,x_n,i)$ denote the corresponding
optimal value of $z$ (fix an arbitrary choice if there are multiple optima).

Define a sequence of point sets $D^k$ recursively:
\begin{align*}
D^0
&:= \set{(\underbrace{\infty,\ldots,\infty}_{i-1},0,\underbrace{\infty,\ldots,\infty}_{n-i}) \given i\in[n]},
\\[2pt]
D^{k+1}
&:= \set{
\bigl(
Y(x^{1}_1,\ldots,x^{n}_1,1),\ldots,Y(x^{1}_n,\ldots,x^{n}_n,n)
\bigr)
\given x^{1},\ldots,x^{n}\in D^k
}.
\end{align*}

Intuitively, $D^k$ encodes the boundary of states from which an optimal adversary can force a violation within $k$
rounds.

\subsection*{Geometric characterization}
The recursive sets \(D^k\) are useful because they exactly describe the boundary between safe and unsafe states.
The next theorem makes this precise.
\begin{theorem}[Geometric characterization of safe states]
\label{thm:Dk-characterization}
For every $k\ge 0$, a point $\vec{x}'\in \overline{\mathbb{R}}^n$ is $k$-safe if and only if it is \emph{not}
dominated by any point in $D^k$.
\end{theorem}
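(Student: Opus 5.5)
We argue by induction on \(k\), proving both directions together: \(\vec x'\) is \(k\)-safe iff no point of \(D^k\) dominates it.

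\textbf{Base case \(k=0\).} A state is \(0\)-safe iff all its coordinates are nonnegative. Now \(\vec x'\) is dominated by the \(i\)-th point of \(D^0\), namely \((\infty,\ldots,0,\ldots,\infty)\), exactly when \(0>x'_i\). Here we use the convention that \(\infty>x'_j\) for every finite \(x'_j\); the coordinates that matter are finite. So \(\vec x'\) is undominated by \(D^0\) iff \(x'_i\ge 0\) for all \(i\).

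\textbf{Inductive step.} We use the game recursion. The state \(\vec x'\) is \((k+1)\)-safe iff it is \(0\)-safe and, for every item vector \(v\in[0,1]^n\), some recipient \(a\) sends \(\vec x'\) (via the update \eqref{eq:shifted-surplus-update}) to a \(k\)-safe state. By the induction hypothesis, the adversary wins from \(\vec x'\) within \(k+1\) rounds iff there is a \(v\) such that, for every \(a\), the updated state \(\vec x'+(n-1)v_a e_a-\sum_{j\ne a}v_j e_j\) is dominated by some \(x^{a}\in D^k\).

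Domination of that state by \(x^a\) means:
\[
x'_a+(n-1)v_a<x^a_a \quad\text{and}\quad x'_j-v_j<x^a_j \ \text{ for } j\neq a.
\]
The key step is to show that, with \(v\) chosen by the adversary, this system of inequalities over all \(a\) holds iff \(\vec x'\) is dominated by the point \(\bigl(Y(x^1_1,\ldots,x^n_1,1),\ldots,Y(x^1_n,\ldots,x^n_n,n)\bigr)\in D^{k+1}\).

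Fix the target points \(x^1,\ldots,x^n\). Coordinate \(i\) of the state is constrained by two kinds of conditions. If \(i\) receives the item, the condition is \(x'_i+(n-1)v_i<x^i_i\). If some \(a\ne i\) receives it, the condition is \(x'_i-v_i<x^a_i\). The value \(v_i\) affects only coordinate \(i\), so the adversary can choose each \(v_i\) separately. Existence of \(v_i\in[0,1]\) satisfying all of these strictly is exactly the condition that \(x'_i<y\) for some feasible \((y,z=v_i)\) in the LP for coordinate \(i\) with data \(x^1_i,\ldots,x^n_i\). Note the LP as written constrains \(y\) by its \(i\)-th argument via the \((n-1)z\) row and by the other arguments via the \(-z\) rows, which matches this. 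Since the LP feasible region is closed, this is equivalent to \(x'_i<Y(\cdot,i)\). Strict versus non-strict inequalities need care here: we take \(z=Z\) and use \(x'_i<Y\) to obtain strict slack in every constraint.

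Quantifying over the choice \(x^1,\ldots,x^n\in D^k\), the adversary wins iff \(\vec x'\) is dominated by some point of \(D^{k+1}\). One point remains: \((k+1)\)-safety also requires \(0\)-safety. We handle this by showing that a non-\(0\)-safe point is dominated by a point of \(D^{k+1}\). Choose \(x^a=(\infty,\ldots,0,\ldots,\infty)\in D^k\) with the \(0\) in a coordinate \(i\) where \(x'_i<0\). Alternatively, we can show \(D^k\)-points are increasing in \(k\) under domination, by induction using monotonicity of \(Y\).

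\textbf{Main obstacle.} We expect the hardest parts to be the bookkeeping of the quantifier swap (\(\exists v\,\forall a\,\exists x^a\) becoming \(\exists (x^a)_a\,\exists v\)) together with the extended-real arithmetic for the \(\infty\) entries. The swap is valid because the witnesses \(x^a\) may depend on \(v\), but only through finitely many choices.
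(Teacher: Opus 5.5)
Your proposal is correct and follows essentially the same route as the paper: induction on $k$, with the inductive step reduced coordinate-by-coordinate to the LP defining $Y$. One direction uses $z=Z$, which is the paper's explicit adversary item; the other uses the strict slack of the pair $(x'_i,v_i)$; you simply package the two directions as a single chain of equivalences. Your explicit handling of an already-infeasible $\vec x'$ is a small piece of extra care that the paper's recursion leaves implicit. It works because the point $(\infty,\ldots,0,\ldots,\infty)$ does lie in every $D^k$: feeding it as all $n$ arguments reproduces it.
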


\begin{proof}
We prove the claim by induction on $k$.

\paragraph{Base case ($k=0$).}
A point $\vec{x}'$ is $0$-safe iff it is already feasible now, i.e.\ $x'_i\ge 0$ for all $i$.
This holds iff $\vec{x}'$ is not dominated by any vector in $D^0$ (since domination by the $i$th vector in $D^0$
means $x'_i<0$).

\paragraph{Inductive step.}
Assume the claim holds for $k$ and prove it for $k+1$.

\par\noindent
\emph{($\Rightarrow$) dominated $\Longrightarrow$ not $(k{+}1)$-safe.}
Assume $\vec{x}'$ is dominated by some $\vec{x}\in D^{k+1}$.
By construction of $D^{k+1}$, there exist points $\vec{x}^{\,1},\ldots,\vec{x}^{\,n}\in D^k$ such that
\[
\vec{x}=
\bigl(
Y(x^{1}_1,\ldots,x^{n}_1,1),\ldots,Y(x^{1}_n,\ldots,x^{n}_n,n)
\bigr).
\]
Consider a time $t$ with $\vec{\dsprop}^{\,t}=\vec{x}'$, so $\dsprop_i^t=x'_i<x_i$ for all $i$.

In round $t+1$, the adversary chooses an item $g^{t+1}$ with values
\[
v_i(g^{t+1}) \ :=\ Z(x^{1}_i,\ldots,x^{n}_i,i)
\qquad\text{for each }i\in[n].
\]
Suppose the algorithm allocates the item to some agent $j$.
Using \eqref{eq:shifted-surplus-update} and writing $z_i:=v_i(g^{t+1})$, we get:
\[
\dsprop_j^{t+1}=x'_j+(n-1)z_j
\quad\text{and}\quad
\dsprop_i^{t+1}=x'_i-z_i\ \ (i\neq j).
\]
By $x'_i<x_i$ and by the LP constraints defining $Y(\cdot,i)$ and $Z(\cdot,i)$, we have
\[
x'_j+(n-1)z_j < x_j+(n-1)z_j = Y(x^{1}_j,\ldots,x^{n}_j,j)+(n-1)Z(x^{1}_j,\ldots,x^{n}_j,j)\le x^{\,j}_j,
\]
and for $i\neq j$,
\[
x'_i-z_i < x_i-z_i = Y(x^{1}_i,\ldots,x^{n}_i,i)-Z(x^{1}_i,\ldots,x^{n}_i,i)\le x^{\,j}_i.
\]
Hence $\vec{\dsprop}^{\,t+1}$ is dominated by $\vec{x}^{\,j}\in D^k$.
By the induction hypothesis, $\vec{\dsprop}^{\,t+1}$ is not $k$-safe, so $\vec{x}'$ is not $(k+1)$-safe.

\par\noindent
\emph{($\Leftarrow$) not $(k{+}1)$-safe $\Longrightarrow$ dominated.}
Suppose $\vec{x}'$ is not $(k+1)$-safe.
Then when $\vec{\dsprop}^{\,t}=\vec{x}'$, there exists an adversarial item $g^{t+1}$ (with some values
$z_i:=v_i(g^{t+1})\in[0,1]$) such that \emph{for every} allocation choice $j\in[n]$, the resulting state
$\vec{\dsprop}^{\,t+1}$ is not $k$-safe.
By the induction hypothesis, for each $j\in[n]$ there exists a point $\vec{x}^{\,j}\in D^k$ that dominates
the state obtained when the item is allocated to $j$.

Fix an index $i\in[n]$ and consider the coordinate-wise inequalities implied by domination.
If the item is allocated to agent $i$, then (by \eqref{eq:shifted-surplus-update})
\[
x'_i+(n-1)z_i=\dsprop_i^{t+1}<x^{\,i}_i.
\]
If the item is allocated to some $j\neq i$, then $\dsprop_i^{t+1}=x'_i-z_i<x^{\,j}_i$.
Therefore the pair $(y,z)=(x'_i,z_i)$ satisfies the LP constraints for index $i$ with RHS
$(x^{1}_i,\ldots,x^{n}_i)$:
\[
y+(n-1)z \le x^{\,i}_i,\qquad y-z \le x^{\,j}_i\ (j\neq i),\qquad 0\le z\le 1.
\]
Since $Y(x^{1}_i,\ldots,x^{n}_i,i)$ maximizes $y$ over these constraints, we get
\[
x'_i=y \ <\ Y(x^{1}_i,\ldots,x^{n}_i,i).
\]
Doing this for all $i\in[n]$ shows that $\vec{x}'$ is dominated by the point
\[
\bigl(
Y(x^{1}_1,\ldots,x^{n}_1,1),\ldots,Y(x^{1}_n,\ldots,x^{n}_n,n)
\bigr)\in D^{k+1},
\]
as required.
\end{proof}

\subsection*{The \textsc{AUX} procedure and the $\mathsf{EXP}$ rule}

\paragraph{The \textsc{AUX} procedure.}
Given a state $\vec{x}$, \textsc{AUX} computes the smallest $k$ such that $\vec{x}$ is not $k$-safe,
by explicitly constructing $D^0,D^1,\ldots$ until some point dominates $\vec{x}$.

\begin{algorithm}[ht]
\caption{\textsc{AUX}$(\vec{x})$: minimum rounds until a forced violation from state $\vec{x}$}
\label{alg:aux-prop}
\KwIn{$\vec{x}=(x_1,\ldots,x_n)\in\overline{\mathbb{R}}^n$}
\KwOut{The minimum $k$ such that $\vec{x}$ is not $k$-safe}
$k\gets 0$\;
$prePoints \gets D^0$\;
\While{$\forall\,\vec{p}\in prePoints:\ \neg(\vec{p}\text{ dominates }\vec{x})$}{
    $nextPoints \gets \emptyset$\;
    \ForEach{$(\vec{p}^{\,1},\ldots,\vec{p}^{\,n})\in (prePoints)^n$}{
        $\vec{q}\gets
        \bigl(
        Y(p^{1}_1,\ldots,p^{n}_1,1),\ldots,Y(p^{1}_n,\ldots,p^{n}_n,n)
        \bigr)$\;
        $nextPoints \gets nextPoints \cup \{\vec{q}\}$\;
    }
    $prePoints \gets nextPoints$\;
    $k\gets k+1$\;
}
\Return $k$\;
\end{algorithm}

\begin{theorem}[Correctness of \textsc{AUX}]
\label{thm:aux-correct}
If the current shifted-surplus vector is $\vec{\dsprop}^{\,t}=\vec{x}$, then \Cref{alg:aux-prop} returns
the minimum number of further rounds $k$ such that $\vec{x}$ is not $k$-safe (equivalently: such that an
optimal adversary can force a violation within $k$ rounds).
\end{theorem}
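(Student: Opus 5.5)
The plan is to read \textsc{AUX} as a direct translation of \Cref{thm:Dk-characterization} and argue correctness in two parts: what the loop computes, and why it stops at the right index.

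\textbf{Step 1: loop invariant.} I will show by induction on the number of completed iterations that, at the start of each test of the while-condition, $prePoints = D^k$ for the current value of $k$. Initially $k=0$ and $prePoints=D^0$. The body enumerates every $n$-tuple $(\vec p^{\,1},\ldots,\vec p^{\,n})\in (D^k)^n$ and inserts the vector of values $Y(p^1_i,\ldots,p^n_i,i)$. This is exactly the recursive definition of $D^{k+1}$. Afterwards $k$ is incremented, so the invariant holds again.

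\textbf{Step 2: meaning of the test.} By the invariant, the while-condition at index $k$ says that no point of $D^k$ dominates $\vec x$. By \Cref{thm:Dk-characterization}, this holds iff $\vec x$ is $k$-safe. So the loop keeps running exactly while $\vec x$ is $k$-safe. If it terminates, it returns the first $k$ with $\vec x$ not $k$-safe.

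\textbf{Step 3: this $k$ is the minimum.} The returned $k$ is the minimum such index because the loop tests $k=0,1,2,\ldots$ in increasing order. I will also note that $k$-safety is monotone: if $\vec x$ is $(k{+}1)$-safe, then it is $k$-safe, since a rule surviving $k{+}1$ rounds also survives $k$. Hence the set of indices where $\vec x$ is not safe is upward closed, and the first failure is the threshold. This gives the equivalent phrasing in the statement: an optimal adversary can force a violation within $k$ rounds.

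\textbf{Main obstacle: termination.} The statement implicitly assumes the loop stops. I will handle this by showing that some finite $k$ makes $\vec x$ not $k$-safe, when $\vec x$ has finite coordinates. The natural tool is the hard-instance adversary of \Cref{thm:lb-framework}. Choose a slack large enough that the current shifted-surplus state is covered by that adversary's analysis; by translating $\vec x$ into slacks $Z_i$, the adversary forces a violation within $O(nc'^2)$ rounds for suitable $c'$. If some coordinates are $+\infty$, the same argument restricted to the finite coordinates works, with zero values for the others. If all coordinates are $+\infty$, termination fails, and I would state the theorem's conclusion conditionally: the returned value is correct whenever the procedure halts. Finiteness of each $D^k$ follows since $|D^{k+1}|\le |D^k|^n$, so each iteration is a finite computation.
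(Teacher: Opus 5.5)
Your Steps 1--3 are the paper's argument. The loop builds $D^0,D^1,\ldots$ in order and halts at the first $k$ for which some point of $D^k$ dominates $\vec x$. By \Cref{thm:Dk-characterization}, that is the first $k$ at which $\vec x$ is not $k$-safe. Your loop invariant and the monotonicity remark only make this explicit.

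The paper does not discuss termination, so your last paragraph is extra, and one claim in it is false. If some coordinate of the input is $+\infty$, no point of $\overline{\mathbb{R}}^n$ can dominate it, because domination needs a strictly larger value in every coordinate. So \textsc{AUX} never halts on such an input. This is a genuine property of the game, not a quirk of the extended reals. For $n=2$ and $\vec x=(x_1,+\infty)$ with $x_1\ge 0$, the allocator survives forever by giving every item to agent~1. Hence ``the same argument restricted to the finite coordinates works'' cannot be right.

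This does not damage your proof. The theorem's $\vec x$ is an actual shifted-surplus vector $\vec{\dsprop}^{\,t}$, which is always finite.

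For finite $\vec x$ your termination idea is correct, and there is a cleaner route than re-running the lower-bound analysis from a nonuniform starting slack vector. Safety is monotone under coordinatewise domination. The updates \eqref{eq:shifted-surplus-update} do not depend on the state, so if $\vec x\le\vec y$ and an adaptive adversary forces a violation from $\vec y$, then playing its strategy on the shadow trajectory from $\vec y$ forces a violation from $\vec x$. Now take $c':=\max\{1,\max_i x_i/n\}$. The point $(nc',\ldots,nc')$ is exactly the initial state of the \Cref{thm:lb-framework} game with parameter $c'$, and it dominates $\vec x$ coordinatewise. Therefore $\vec x$ is not $k$-safe for $k=\lceil 4900\,n c'^2\rceil$, and the loop terminates.
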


\begin{proof}
\Cref{alg:aux-prop} constructs the sets $D^k$ in increasing order of $k$ and stops at the first $k$ for which
some point in $D^k$ dominates $\vec{x}$. By \Cref{thm:Dk-characterization}, this first such $k$ is exactly the
minimum $k$ for which $\vec{x}$ is not $k$-safe.
\end{proof}

\paragraph{The $\mathsf{EXP}$ policy.}
Given an arriving item, $\mathsf{EXP}$ evaluates each possible recipient choice $i\in[n]$ by the
\textsc{AUX}-value of the \emph{post-round} state, and allocates to maximize that value.

\begin{algorithm}[t]
\caption{$\mathsf{EXP}$: optimal policy for maximizing the time to violation (unit-scale surrogate)}
\label{alg:exp-prop}
\KwIn{Current shifted-surplus vector $\vec{\dsprop}^{\,t}$; arriving item values $(v_1(g^{t+1}),\ldots,v_n(g^{t+1}))\in[0,1]^n$}
\KwOut{Recipient $a^{t+1}\in[n]$}
\ForEach{$i\in[n]$}{
    compute the hypothetical next state $\vec{\dsprop}^{\,t+1,i}$ if $g^{t+1}$ is given to $i$ using
    \eqref{eq:shifted-surplus-update}\;
    $\tau_i \gets \textsc{AUX}(\vec{\dsprop}^{\,t+1,i})$\;
}
$a^{t+1}\in \argmax_{i\in[n]} \tau_i$\;
\Return $a^{t+1}$\;
\end{algorithm}

\paragraph{Complexity note.}
The construction of $D^{k+1}$ iterates over all $n$-tuples (with repetition) of points from $D^k$, and for each
such tuple solves $n$ linear programs. Thus the runtime grows extremely quickly with $k$ (and with $n$), but is
practical for small $n$ and moderate $k$.

\subsection*{Two-agent illustration}

For intuition, consider $n=2$.  The figure below illustrates (schematically) how the $(\dsprop_1^t,\dsprop_2^t)$
plane can be partitioned into regions according to the number of rounds needed for an optimal adversary to force a
violation.

\begin{figure}[ht]
    \centering
\scalebox{1.10}{
\begin{tikzpicture}[node/.style={font=\small}]
    \draw[->] (-0.5,0) -- (4.2,0) node[right] {$\dsprop_1^t$};
    \draw[->] (0,-0.5) -- (0,4.2) node[above] {$\dsprop_2^t$};

    \fill[blue!30] (0,0) -- (0,1) -- (1,1) -- (1,0) -- cycle;
    \node at (0.5,0.5) {1};

    \fill[red!30] (1,0) -- (2,0) -- (2,0.5) -- (1,0.5) -- cycle;
    \node at (1.5,0.25) {2};

    \fill[red!30] (0,1) -- (0,2) -- (0.5,2) -- (0.5,1) -- cycle;
    \node[node] at (0.25,1.5) {2};

    \fill[green!30] (2,0) -- (3,0) -- (3,0.25) -- (2,0.25) -- cycle;
    \node[node, font=\tiny] at (2.5,0.125) {3};

    \fill[green!30] (0,2) -- (0,3) -- (0.25,3) -- (0.25,2) -- cycle;
    \node[node, font=\tiny] at (0.125,2.5) {3};

    \fill[green!30] (1,1) -- (1.5,1) -- (1.5,0.5) -- (1,0.5) -- cycle;
    \node at (1.25,0.75) {3};

    \fill[green!30] (0.5,1.5) -- (1,1.5) -- (1,1) -- (0.5,1) -- cycle;
    \node at (0.75,1.25) {3};

    \fill[green!30] (1,1) -- (1,1.25) -- (1.25,1.25) -- (1.25,1) -- cycle;
    \node[node, font=\tiny] at (1.125,1.125) {3};

    \node[below left, font=\tiny] at (0,0) {$(0,0)$};
    \node[left, font=\tiny] at (0,1) {$(0,1)$};
    \node[below, font=\tiny] at (1,0) {$(1,0)$};
    \node[below, font=\tiny] at (2,0) {$(2,0)$};
    \node[left, font=\tiny] at (0,2) {$(0,2)$};
\end{tikzpicture}}
\caption{Schematic partition: each rectangle is labeled by an upper bound on the number of rounds needed for an
optimal adversary to force a violation of \eqref{eq:unit-scale-prop} (for fixed $c$, after shifting).}
\Description{A two-dimensional plot with axes labeled delta one at time t and delta two at time t. Several colored rectangles in the nonnegative quadrant are labeled 1, 2, and 3, indicating the number of rounds until a violation from states in each region.}
\label{fig:aux-schematic-n2}
\end{figure}

\paragraph{A concrete computed bound (example).}
When $n=2$ and $c=1$, the initial shifted-surplus state is
$\vec{\dsprop}^{\,0}=(nc,nc)=(2,2)$.
Our implementation of \Cref{alg:aux-prop} (enumerating $D^k$ for $k=0,1,\ldots$) finds that some point in $D^{10}$
dominates $(2,2)$, while no point in $D^9$ does.  Hence the minimum $k$ such that $(2,2)$ is not $k$-safe is $k=10$:
an optimal adversary can force a violation of \eqref{eq:unit-scale-prop} within $10$ rounds, and this is tight for
this surrogate game.

\begin{figure}[ht]
   \centering
   \includegraphics[width=0.40\textwidth]{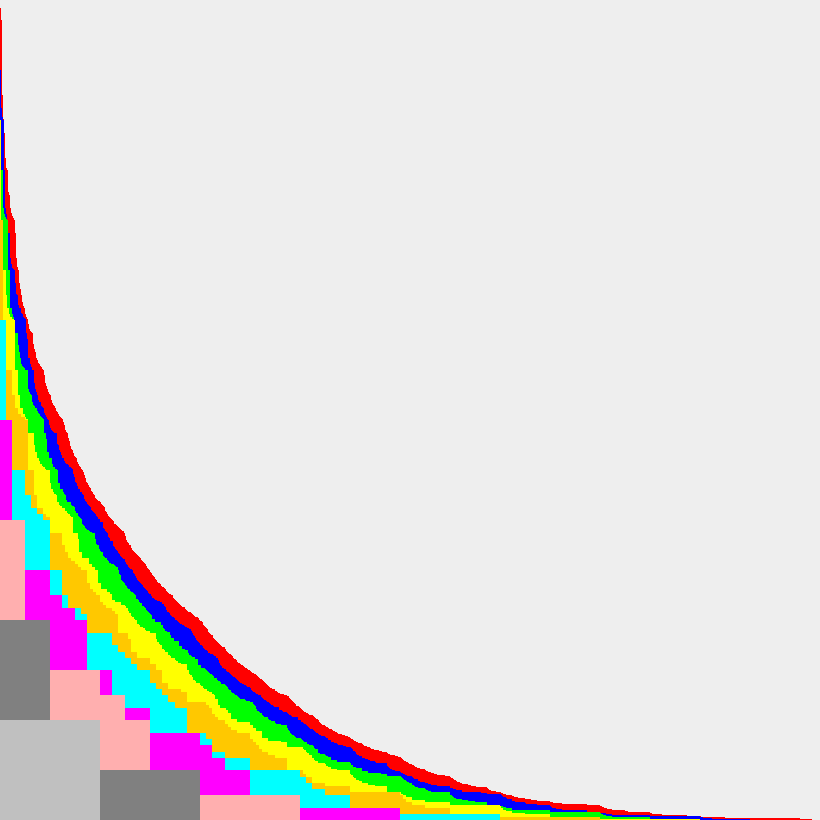}
  \caption{Empirical visualization of the dominated regions induced by $D^k$ for $k=0,\ldots,10$ when $n=2$.}
    \Description{A region plot for the two-agent state space. Each color corresponds to a value of k from 0 to 10,
  indicating states dominated by some point in D to the power k.}
  \label{fig:Dk-n2}
\end{figure}

\clearpage
\section{An unrestricted lower bound for classical
\texorpdfstring{$\EFc$}{EFc}}
\label{app:classical-efc-lower-bound}

A restriction specified in advance on the possible item values is necessary for obtaining a sublinear
classical \(\EFc\) guarantee against an adaptive adversary. Indeed, even when
all values lie in \((0,1]\), the exact unrestricted worst-case guarantee is
linear.

\begin{theorem}[Lower bound for classical
\texorpdfstring{$\EFc$}{EFc}]
\label{thm:classical-efc-linear-lb}
For every \(n\ge2\) and \(T\ge1\), every online allocation algorithm,
deterministic or randomized, can be forced by a adaptive
adversary so that all revealed values satisfy
\[
0<v_i(g^t)\le1
\qquad (i\in[n],\ t\in[T]),
\]
and the allocation at time \(T\) is not \(\EFc\) for any integer
\[
c<\left\lceil\frac{T}{n}\right\rceil.
\]
The lower bound holds even if the horizon \(T\) is known to the algorithm.
Conversely, round-robin guarantees
\(\EFparam{\lceil t/n\rceil}\) after every prefix \(t\). Hence the exact
unrestricted worst-case parameter at time \(T\) is
\(\lceil T/n\rceil\).
\end{theorem}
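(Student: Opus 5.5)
The upper bound is immediate and I would dispose of it first. Under round-robin, after any prefix $t$ every bundle $P_j^t$ contains at most $\lceil t/n\rceil$ goods. Removing all of $j$'s goods leaves nothing to envy, so round-robin is $\EFparam{\lceil t/n\rceil}$ after every prefix. The real content is the lower bound. Since $T$ may be known to the algorithm, the adversary may also use $T$ when fixing values. By pigeonhole, at time $T$ some agent $j$ holds $k\ge\lceil T/n\rceil$ goods. To violate $\EFparam{c}$ for every $c<k$ it suffices to violate $\EFparam{k-1}$ for some pair $(i,j)$. For additive valuations this means that $i$ values the $v_i$-smallest good of $P_j^T$ strictly more than her whole bundle $P_i^T$. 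The plan is therefore to build adaptive, strongly separated valuations under which agent $i$'s own goods are negligible compared with the goods other agents take.

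\emph{Construction (first attempt).} Fix a large base $M\ge 2$ and normalize by $M^{T}$ so all values lie in $(0,1]$. When an item arrives while agent $i$ currently holds $\ell_i$ goods, set $v_i(g)=M^{\ell_i-T}$. Then $i$'s $r$-th own good is worth $M^{r-1-T}$. Every good given to someone else after $i$'s last receipt is worth $M^{\ell_i-T}$, which strictly exceeds $v_i(P_i)\le M^{\ell_i-T}/(M-1)$. I would then choose $i$ to be the agent whose last receipt is earliest (or who never received anything), and $j$ a heaviest receiver afterwards. The aim is to show that all $k$ of $j$'s goods fall in $i$'s final level.

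\emph{Main obstacle.} This pair selection is exactly where the argument can fail. Against round-robin, the window after the earliest last receipt contains only about $n-1$ goods. So the level of $i$ must be tied to the goods the \emph{other} agents collect, not only to $i$'s own count. I would therefore refine the valuation to $v_i(g)=M^{\,h_i-T}$, where $h_i$ is the maximum over $j\neq i$ of the number of goods $j$ has received since $i$'s last receipt (or some similar adaptive counter). The first thing to verify is whether this forces, at time $T$, a pair with $j$ having $\lceil T/n\rceil$ goods all valued by $i$ above $v_i(P_i)$.

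Once the right counter is found, the verification has three steps. First, the geometric separation makes comparisons lexicographic in the levels. Second, a counting or potential argument over the levels yields the required pair $(i,j)$. Third, the argument is pathwise, so it covers randomized algorithms against an adaptive adversary, exactly as in \Cref{thm:lb-framework}.
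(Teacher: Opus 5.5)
Your round-robin upper bound, the pigeonhole step, and the reduction to finding a pair $(i,j)$ with $|P_j^T|=k\ge\lceil T/n\rceil$ and $\min_{g\in P_j^T}v_i(g)>v_i(P_i^T)$ are all correct. So is the remark that a pathwise argument covers randomized algorithms. The gap is the adversary itself: you leave it open (``once the right counter is found''), and the approach you propose cannot be completed.

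Your refined counter already fails against round-robin with $n=2$. Apart from the very first item, each item that agent $i$ receives arrives after the other agent has taken exactly one item since $i$'s last receipt, so it is worth $M^{1-T}$ to $i$. Each item that $i$ misses arrives right after a reset and is worth $M^{-T}$. So round-robin is $\mathrm{EF}1$ at every prefix.

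More fundamentally, no counter can rescue valuations of the form $v_i(g)=M^{h-T}$ with integer $h\in\{0,\dots,T\}$. For any $M$ chosen as a function of $n$ and $T$, such an adversary draws all values from a fixed set of size $L=T+1$. The threshold algorithm of \Cref{cor:efc-classical-threshold}, run with this set, is then $\mathrm{EF}\lceil c_T\rceil$ with $c_T=O(\ln(nT)+\sqrt{T\ln(nT)/n})$. This is far below $\lceil T/n\rceil-1$ once $T/n\gg\ln(nT)$. Taking $M$ larger does not help, because that guarantee depends only on the number of distinct values, not on their magnitudes. The adversary needs exponentially many distinct values per agent, not a large ratio between consecutive levels.

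The missing idea is to keep, for every agent simultaneously and whatever the allocator does, all missed items above all owned items. This forces the incoming value for agent $i$ to lie strictly between the largest value $i$ owns and the smallest value $i$ has missed, which no monotone function of a count achieves.

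The paper does exactly this with nested intervals. It maintains $[L_i^t,U_i^t]$ with owned items at most $L_i^t$ and missed items at least $U_i^t$. It reveals the geometric midpoint $\sqrt{L_i^tU_i^t}$ and moves whichever endpoint the allocation dictates, so the ratio $U_i^t/L_i^t$ is square-rooted every round. Starting from $L_i^0=(T+1)^{-2^T}$ and $U_i^0=1$, the ratio at time $T$ is $T+1$. Hence every item $i$ missed is worth at least $T+1$ times every item she owns.

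No delicate pair selection is then needed. Take $j$ with a largest bundle and any $i\neq j$, and let $H$ be the smallest $v_i$-value in $P_j^T$. Then $v_i(P_i^T)\le TH/(T+1)<H$, so removing at most $|P_j^T|-1$ items from $P_j^T$ still leaves $i$ envious.
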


\begin{proof}
We first prove the lower bound. For each agent \(i\), maintain two numbers
\(L_i^t\le U_i^t\). Initially, let
\[
L_i^0:=(T+1)^{-2^T},
\qquad
U_i^0:=1.
\]
In round \(t+1\), before the recipient is chosen, reveal the item
\(g^{t+1}\) at the geometric midpoint\footnote{Geometric-mean thresholds over
multiplicative ranges are classical in online search; see
\citet{elyaniv1992competitive}. Here, a separate interval is maintained for
each agent and refined according to the allocation decision.}:
\begin{equation}
\label{eq:efc-geometric-midpoint}
v_i(g^{t+1}):=\sqrt{L_i^tU_i^t}
\qquad (i\in[n]).
\end{equation}
After the algorithm chooses \(a^{t+1}\in[n]\), update
\begin{equation}
\label{eq:efc-geometric-update}
(L_i^{t+1},U_i^{t+1})
:=
\begin{cases}
\bigl(v_i(g^{t+1}),U_i^t\bigr), & a^{t+1}=i,\\[1mm]
\bigl(L_i^t,v_i(g^{t+1})\bigr), & a^{t+1}\neq i.
\end{cases}
\end{equation}
The values in \eqref{eq:efc-geometric-midpoint} depend only on the realized
history through time \(t\). Moreover,
\[
0<L_i^t\le v_i(g^{t+1})\le U_i^t\le1,
\]
and hence all item values lie in \((0,1]\).

We claim that, for every \(t\in\{0,\ldots,T\}\) and every agent \(i\),
and for every \(s\in\{1,\ldots,t\}\),
\begin{align}
g^s\in P_i^t
&\quad\Longrightarrow\quad
v_i(g^s)\le L_i^t,
\label{eq:efc-own-items-low}\\
g^s\in G^t\setminus P_i^t
&\quad\Longrightarrow\quad
v_i(g^s)\ge U_i^t,
\label{eq:efc-missed-items-high}\\
\frac{U_i^t}{L_i^t}
&=(T+1)^{2^{T-t}}.
\label{eq:efc-ratio-invariant}
\end{align}
The claim follows by induction on \(t\). It is immediate at \(t=0\). Suppose
it holds at time \(t\), and write
\[
x:=v_i(g^{t+1})=\sqrt{L_i^tU_i^t}.
\]

If \(a^{t+1}=i\), then
\[
L_i^{t+1}=x,
\qquad
U_i^{t+1}=U_i^t.
\]
All earlier received items remain at most \(L_i^t\le x\), the new received
item has value \(x\), and all missed items remain at least \(U_i^t\).

If \(a^{t+1}\neq i\), then
\[
L_i^{t+1}=L_i^t,
\qquad
U_i^{t+1}=x.
\]
All received items remain at most \(L_i^t\), the new missed item has value
\(x\), and all earlier missed items remain at least \(U_i^t\ge x\).

Thus \eqref{eq:efc-own-items-low} and
\eqref{eq:efc-missed-items-high} are preserved. In either case,
\[
\frac{U_i^{t+1}}{L_i^{t+1}}
=
\sqrt{\frac{U_i^t}{L_i^t}},
\]
which proves \eqref{eq:efc-ratio-invariant} and completes the induction.

At time \(T\), \eqref{eq:efc-ratio-invariant} gives
\[
\frac{U_i^T}{L_i^T}=T+1.
\]
Therefore, by
\eqref{eq:efc-own-items-low}--\eqref{eq:efc-missed-items-high}, for every
received item \(g\in P_i^T\) and every missed item
\(h\in G^T\setminus P_i^T\),
\begin{equation}
\label{eq:efc-final-separation}
v_i(h)\ge (T+1)v_i(g).
\end{equation}

Let
\[
j:=\min\operatorname*{arg\,max}_{a\in[n]} |P_a^T|,
\qquad \text{and} \qquad
m:=|P_j^T|.
\]
By the pigeonhole principle,
\[
m\ge\left\lceil\frac{T}{n}\right\rceil.
\]
Choose any \(i\neq j\), and let
\[
H:=\min_{h\in P_j^T}v_i(h).
\]
This minimum is well defined and positive because
\(m\ge \lceil T/n\rceil\ge1\) and all revealed values are positive.
Since every item in \(P_j^T\) is missed by \(i\),
\eqref{eq:efc-final-separation} implies that every \(g\in P_i^T\) satisfies
\[
v_i(g)\le\frac{H}{T+1}.
\]
Consequently,
\[
v_i(P_i^T)
\le
|P_i^T|\frac{H}{T+1}
\le
\frac{TH}{T+1}
<
H.
\]

Now let \(B\subseteq P_j^T\) satisfy \(|B|\le m-1\). At least one item of
\(P_j^T\) remains after removing \(B\), and every item in \(P_j^T\) has
\(v_i\)-value at least \(H\). Therefore,
\[
v_i(P_j^T\setminus B)
\ge H
>
v_i(P_i^T).
\]
Thus agent \(i\) still envies agent \(j\) after removing any \(m-1\) or fewer
items from \(P_j^T\). Hence the allocation is not \(\EFc\) for any
\(c<m\), and therefore not for any
\[
c<\left\lceil\frac{T}{n}\right\rceil.
\]

The construction defines a single adversarial policy, and the
argument applies to every realized sequence of recipients. Hence the lower
bound holds for every realization of a randomized algorithm's internal
randomness.

For the matching upper bound, allocate items in round-robin order. At every
time \(t\), each bundle contains at most
\[
\left\lceil\frac{t}{n}\right\rceil
\]
items. For every ordered pair \(i\neq j\), removing all of \(P_j^t\) removes
at most \(\lceil t/n\rceil\) items and leaves
\[
v_i(P_j^t\setminus P_j^t)
=
v_i(\emptyset)
=
0
\le
v_i(P_i^t).
\]
Thus round robin is \(\EFparam{\lceil t/n\rceil}\) after every prefix,
which proves tightness.
\end{proof}

\begin{remark}[Relation to the known-value guarantee]
\label{rem:classical-efc-value-universe}
The lower bound does not contradict
\Cref{cor:efc-classical-threshold}. The realized sequence contains only
finitely many values, but these values are selected adaptively from the
allocation history. To specify in advance a single value set containing every
value that the adversary may reveal under every possible allocation history,
one must include all values represented by the construction's decision tree.
For any fixed agent, a straightforward induction shows that the set of all
midpoint values that may be revealed during rounds \(1,\ldots,T\), over all
possible histories, is
\[
\left\{(T+1)^{-q}:q=1,\ldots,2^T-1\right\}.
\]
These values are pairwise distinct. Hence any value set specified in advance
that contains every possible revealed value must satisfy
\[
L\ge 2^T-1.
\]
Consequently, \(\log L=\Theta(T)\), and the guarantee in
\Cref{cor:efc-classical-threshold} becomes linear in \(T\), consistently with
\Cref{thm:classical-efc-linear-lb}.
\end{remark}

\ifarxiv
\else
\section{Reproducibility Checklist for JAIR}

Select the answers that apply to your research -- one per item.

\subsection*{All articles:}

\begin{enumerate}
    \item All claims investigated in this work are clearly stated.
    [yes]

    \item Clear explanations are given how the work reported substantiates the claims.
    [yes]

    \item Limitations or technical assumptions are stated clearly and explicitly.
    [yes]

    \item Conceptual outlines and/or pseudo-code descriptions of the AI methods introduced in this work are provided, and important implementation details are discussed.
    [yes]

    \item
    Motivation is provided for all design choices, including algorithms, implementation choices, parameters, data sets and experimental protocols beyond metrics.
    [yes]
\end{enumerate}

\subsection*{Articles containing theoretical contributions:}
Does this paper make theoretical contributions?
[yes]

If yes, please complete the list below.

\begin{enumerate}
    \item All assumptions and restrictions are stated clearly and formally.
    [yes]

    \item All novel claims are stated formally (e.g., in theorem statements).
    [yes]

    \item Proofs of all non-trivial claims are provided in sufficient detail to permit verification by readers with a reasonable degree of expertise (e.g., that expected from a PhD candidate in the same area of AI).
    [yes]

    \item
    Complex formalism, such as definitions or proofs, is motivated and explained clearly.
    [yes]

    \item
    The use of mathematical notation and formalism serves the purpose of enhancing clarity and precision; gratuitous use of mathematical formalism (i.e., use that does not enhance clarity or precision) is avoided.
    [yes]

    \item
    Appropriate citations are given for all non-trivial theoretical tools and techniques.
    [yes]
\end{enumerate}

\subsection*{Articles reporting on computational experiments:}
Does this paper include computational experiments? [no]

If yes, please complete the list below.
\begin{enumerate}
    \item
    All source code required for conducting experiments is included in an online appendix
    or will be made publicly available upon publication of the paper.
    The online appendix follows best practices for source code readability and documentation as well as for long-term accessibility.
    [NA]

    \item The source code comes with a license that
    allows free usage for reproducibility purposes.
    [NA]

    \item The source code comes with a license that
    allows free usage for research purposes in general.
    [NA]

    \item
    Raw, unaggregated data from all experiments is included in an online appendix
    or will be made publicly available upon publication of the paper.
    The online appendix follows best practices for long-term accessibility.
    [NA]

    \item The unaggregated data comes with a license that
    allows free usage for reproducibility purposes.
    [NA]

    \item The unaggregated data comes with a license that
    allows free usage for research purposes in general.
    [NA]

    \item If an algorithm depends on randomness, then the method used for generating random numbers and for setting seeds is described in a way sufficient to allow replication of results.
    [NA]

    \item The execution environment for experiments, the computing infrastructure (hardware and software) used for running them, is described, including GPU/CPU makes and models; amount of memory (cache and RAM); make and version of operating system; names and versions of relevant software libraries and frameworks.
    [NA]

    \item
    The evaluation metrics used in experiments are clearly explained and their choice is explicitly motivated.
    [NA]

    \item
    The number of algorithm runs used to compute each result is reported.
    [NA]

    \item
    Reported results have not been ``cherry-picked'' by silently ignoring unsuccessful or unsatisfactory experiments.
    [NA]

    \item
    Analysis of results goes beyond single-dimensional summaries of performance (e.g., average, median) to include measures of variation, confidence, or other distributional information.
    [NA]

    \item
    All (hyper-) parameter settings for
    the algorithms/methods used in experiments have been reported, along with the rationale or method for determining them.
    [NA]

    \item
    The number and range of (hyper-) parameter settings explored prior to conducting final experiments have been indicated, along with the effort spent on (hyper-) parameter optimisation.
    [NA]

    \item
    Appropriately chosen statistical hypothesis tests are used to establish statistical significance
    in the presence of noise effects.
    [NA]
\end{enumerate}

\subsection*{Articles using data sets:}
Does this work rely on one or more data sets (possibly obtained from a benchmark generator or similar software artifact)?
[no]

If yes, please complete the list below.
\begin{enumerate}
    \item
    All newly introduced data sets
    are included in an online appendix
    or will be made publicly available upon publication of the paper.
    The online appendix follows best practices for long-term accessibility with a license
    that allows free usage for research purposes.
    [NA]

    \item The newly introduced data set comes with a license that
    allows free usage for reproducibility purposes.
    [NA]

    \item The newly introduced data set comes with a license that
    allows free usage for research purposes in general.
    [NA]

    \item All data sets drawn from the literature or other public sources (potentially including authors' own previously published work) are accompanied by appropriate citations.
    [NA]

    \item All data sets drawn from the existing literature (potentially including authors’ own previously published work) are publicly available.
    [NA]

    \item All new data sets and data sets that are not publicly available are described in detail, including relevant statistics, the data collection process and annotation process if relevant.
    [NA]

    \item
    All methods used for preprocessing, augmenting, batching or splitting data sets (e.g., in the context of hold-out or cross-validation)
    are described in detail.
    [NA]
\end{enumerate}

\subsection*{Explanations on any of the answers above (optional):}

This paper is primarily theoretical. It contains formal definitions, algorithms, theorem statements, proofs, and complexity bounds. It does not report empirical experiments, introduce data sets, train a learned model, or use experimental results as evidence for the main claims. The reproducibility object is the mathematical argument in the main text and appendices.
\fi 

\end{document}